\newcommand{\Gtri}{\ensuremath{G_{\Delta}}}
\newcommand{\system}{\ensuremath{\mathcal{P}}}
\newcommand{\Otri}{\ensuremath{\mathcal{O}_{\Delta}}}
\newcommand{\blen}{\ensuremath{B}}
\newcommand{\bigO}[1]{\ensuremath{\mathcal{O}(#1)}}
\newcommand{\mystate}{\ensuremath{\text{state}}}
\newcommand{\parent}{\ensuremath{\text{parent}}}
\newcommand{\first}{\ensuremath{\text{first}}}
\newcommand{\bit}{\ensuremath{\text{bit}}}
\newcommand{\tokens}{\ensuremath{\text{tokens}}}
\newcommand{\plane}{\ensuremath{\text{plane}}}
\algrenewcommand\ALG@beginalgorithmic{\small}
\algrenewcommand\alglinenumber[1]{\scriptsize #1:}
\newif\ifcomment
\newif\ifconf
\newif\ifanon
\newif\iffigabbrv
\newcommand{\figtext}{\iffigabbrv Fig.\else Figure\fi}
\title{Convex Hull Formation for Programmable Matter}
\titlerunning{Convex Hull Formation for Programmable Matter}
\author{Anonymous Authors}{(Suppressed Affiliations)}{}{}{}
\authorrunning{Anonymous Authors}
\author{Joshua J. Daymude}{Computer Science, CIDSE, Arizona State University, Tempe, AZ, USA}{jdaymude@asu.edu}{https://orcid.org/0000-0001-7294-5626}{}
\author{Robert Gmyr}{Department of Computer Science, University of Houston, TX, USA}{rgmyr@uh.edu}{https://orcid.org/0000-0002-2242-6083}{}
\author{Kristian Hinnenthal}{Department of Computer Science, Paderborn University, Germany}{krijan@mail.upb.de}{https://orcid.org/0000-0001-9464-295X}{}
\author{Irina Kostitsyna}{Department of Mathematics and Computer Science, TU Eindhoven, the Netherlands}{i.kostitsyna@tue.nl}{https://orcid.org/0000-0003-0544-2257}{}
\author{Christian Scheideler}{Department of Computer Science, Paderborn University, Germany}{scheidel@mail.upb.de}{}{}
\author{Andr\'ea W. Richa}{Computer Science, CIDSE, Arizona State University, Tempe, AZ, USA}{aricha@asu.edu}{}{}
\authorrunning{J.\ J.\ Daymude, R.\ Gmyr, K.\ Hinnenthal, I.\ Kostitsyna, C.\ Scheideler, and A.\ W.\ Richa}
\keywords{Programmable matter, self-organization, distributed algorithms, computational geometry, convex hull, restricted-orientation geometry}
\begin{document}

\maketitle
\thispagestyle{empty}

\begin{abstract}
We envision \emph{programmable matter} as a system of nano-scale agents (called \emph{particles}) with very limited computational capabilities that move and compute collectively to achieve a desired goal.
We use the \emph{geometric amoebot model} as our computational framework, which assumes particles move on the triangular lattice.
Motivated by the problem of sealing an object using minimal resources, we show how a particle system can self-organize to form an object's convex hull.
We give a distributed, local algorithm for convex hull formation and prove that it runs in $\mathcal{O}(\blen)$ asynchronous rounds, where $\blen$ is the length of the object's boundary.
Within the same asymptotic runtime, this algorithm can be extended to also form the object's (weak) $\mathcal{O}$-hull, which uses the same number of particles but minimizes the area enclosed by the hull.
Our algorithms are the first to compute convex hulls with distributed entities that have \emph{strictly local sensing, constant-size memory, and no shared sense of orientation or coordinates}.
Ours is also the first distributed approach to computing restricted-orientation convex hulls.
This approach involves coordinating particles as distributed memory; thus, as a supporting but independent result, we present and analyze an algorithm for organizing particles with constant-size memory as distributed binary counters that efficiently support increments, decrements, and zero-tests --- even as the particles move.
\end{abstract}

\newpage

\section{Introduction} \label{sec:intro}

The vision for \emph{programmable matter}, originally proposed by Toffoli and Margolus nearly thirty years ago~\cite{Toffoli1991}, is to realize a physical material that can dynamically alter its properties (shape, density, conductivity, etc.) in a programmable fashion, controlled either by user input or its own autonomous sensing of its environment.
Such systems would have broad engineering and societal impact as they could be used to create everything from reusable construction materials to self-repairing spacecraft components to even nanoscale medical devices.
While the form factor of each programmable matter system would vary widely depending on its intended application domain, a budding theoretical investigation has formed over the last decade into the algorithmic underpinnings common among these systems.
In particular, the unifying inquiry is to better understand what \emph{sophisticated, collective behaviors} are achievable by a programmable matter system composed of \emph{simple, limited computational units}.
Towards this goal, many theoretical works, complementary simulations, and even a recent experimental study~\cite{Savoie2018} have been conducted using the \emph{amoebot model}~\cite{Daymude2019} for \emph{self-organizing particle systems}.

In this paper, we give a fully local, distributed algorithm for \emph{convex hull formation} (formally defined within our context in Section~\ref{subsec:results}) under the amoebot model.
Though this well-studied problem is usually considered from the perspectives of computational geometry and combinatorial optimization as an abstraction, we treat it as the task of forming a physical seal around a static object using as few particles as possible.
This is an attractive behavior for programmable matter, as it would enable systems to, for example, isolate and contain oil spills~\cite{Zhang2013}, mimic the collective transport capabilities seen in ant colonies~\cite{McCreery2014,Kube2000}, or even surround and engulf malignant entities in the human body as phagocytes do~\cite{Aderem1999}.
Though our algorithm is certainly not the first distributed approach taken to computing convex hulls, to our knowledge it is the first to do so with distributed computational entities that have \emph{no sense of global orientation nor of their coordinates} and are limited to only \emph{local sensing and constant-size memory}.
Moreover, to our knowledge ours is the first distributed approach to computing restricted-orientation convex hulls, a generalization of usual convex hulls (see definitions in Section~\ref{subsec:amoebot}).
Finally, our algorithm has a gracefully degrading property: when the number of particles is insufficient to form an object's convex hull, a maximal partial convex hull is still formed.

\subsection{The Amoebot Model} \label{subsec:amoebot}

In the \emph{amoebot model}, originally proposed in~\cite{Derakhshandeh2014} and described in full in~\cite{Daymude2019},\footnote{See~\cite{Daymude2019} for a full motivation and description of the model, and for omitted details that were not necessary for convex hull formation.} programmable matter consists of individual, homogeneous computational elements called \emph{particles}.
Any structure that a particle system can form is represented as a subgraph of an infinite, undirected graph $G = (V,E)$ where $V$ represents all positions a particle can occupy and $E$ represents all atomic movements a particle can make.
Each node can be occupied by at most one particle.
The \emph{geometric amoebot model} further assumes $G = \Gtri$, the triangular lattice (\figtext~\ref{fig:modellattice}).

\begin{figure}[t]
\centering
\begin{subfigure}{.33\textwidth}
	\centering
	\includegraphics[scale=0.95]{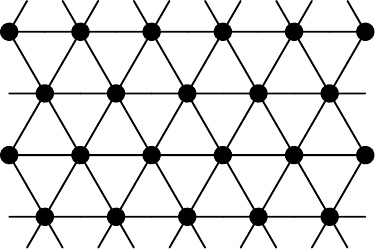}
	\caption{\centering}
	\label{fig:modellattice}
\end{subfigure}%
\begin{subfigure}{.33\textwidth}
	\centering
	\includegraphics[scale=0.95]{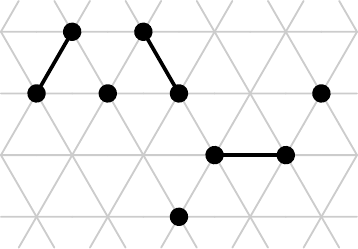}
	\caption{\centering}
	\label{fig:modelparticles}
\end{subfigure}%
\begin{subfigure}{.33\textwidth}
	\centering
	\includegraphics[scale=0.95]{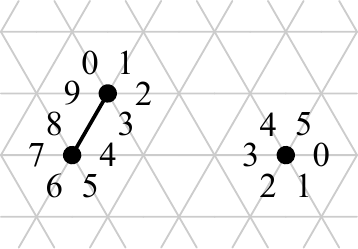}
	\caption{\centering}
	\label{fig:modellabels}
\end{subfigure}
\caption{(a) A section of the triangular lattice $\Gtri$; nodes of $V$ are shown as black circles and edges of $E$ are shown as black lines. (b) Expanded and contracted particles; $\Gtri$ is shown as a gray lattice, and particles are shown as black circles. Particles with a black line between their nodes are expanded. (c) Two particles with different offsets for their port labels.}
\label{fig:model}
\end{figure}

Each particle occupies either a single node in $V$ (i.e., it is \emph{contracted}) or a pair of adjacent nodes in $V$ (i.e., it is \emph{expanded}), as in \figtext~\ref{fig:modelparticles}.
Particles move via a series of \emph{expansions} and \emph{contractions}: a contracted particle can expand into an unoccupied adjacent node to become expanded, and completes its movement by contracting to once again occupy a single node.
An expanded particle's \emph{head} is the node it last expanded into and the other node it occupies is its \emph{tail}; a contracted particle's head and tail are both the single node it occupies.

Two particles occupying adjacent nodes are said to be \emph{neighbors}.
Neighboring particles can coordinate their movements in a \emph{handover}, which can occur in one of two ways.
A contracted particle $P$ can ``push'' an expanded neighbor $Q$ by expanding into a node occupied by $Q$, forcing it to contract.
Alternatively, an expanded particle $Q$ can ``pull'' a contracted neighbor $P$ by contracting, forcing $P$ to expand into the node it is vacating.

Each particle keeps a collection of ports --- one for each edge incident to the node(s) it occupies --- that have unique labels from its own local perspective.
Although each particle is \emph{anonymous}, lacking a unique identifier, a particle can locally identify any given neighbor by its labeled port corresponding to the edge between them.
Particles do not share a coordinate system or global compass and may have different offsets for their port labels, as in \figtext~\ref{fig:modellabels}.

Each particle has a constant-size local memory that it and its neighbors can directly read from and write to for communication.\footnote{Here, we assume the \emph{direct write communication} extension of the amoebot model as it enables a simpler description of our algorithms; see~\cite{Daymude2019} for details.}
However, particles do not have any global information and --- due to the limitation of constant-size memory --- cannot locally store the total number of particles in the system nor any estimate of this value.

The system progresses asynchronously through \emph{atomic actions}.
In the amoebot model, an atomic action corresponds to a single particle's activation, in which it can $(i)$ perform a constant amount of local computation involving information it reads from its local memory and its neighbors' memories, $(ii)$ directly write updates to at most one neighbor's memory, and $(iii)$ perform at most one expansion or contraction.
We assume these actions preserve \emph{atomicity}, \emph{isolation}, \emph{fairness}, and \emph{reliability}.
Atomicity requires that if an action is aborted before its completion (e.g., due to a conflict), any progress made by the particle(s) involved in the action is completely undone.
A set of concurrent actions preserves isolation if they do not interfere with each other; i.e., if their concurrent execution produces the same end result as if they were executed in any sequential order.
Fairness requires that each particle successfully completes an action infinitely often.
Finally, for this work, we assume these actions are reliable, meaning all particles are non-faulty.

While it is straightforward to ensure atomicity and isolation in each particle's immediate neighborhood (using a simple locking mechanism), particle writes and expansions can influence the 2-neighborhood and thus must be handled carefully.\footnote{In a manuscript in preparation, we are detailing the formal mechanisms by which atomicity and isolation can be achieved in the amoebot model~\cite{amoebot-async}.}
Conflicts of movement can occur when multiple particles attempt to expand into the same unoccupied node concurrently.
These conflicts are resolved arbitrarily such that at most one particle expands into a given node at any point in time.

It is well known that if a distributed system's actions are atomic and isolated, any set of such actions can be \emph{serialized}~\cite{Bernstein1987}; i.e., there exists a sequential ordering of the successful (non-aborted) actions that produces the same end result as their concurrent execution.
Thus, while in reality many particles may be active concurrently, it suffices when analyzing amoebot algorithms to consider a sequence of activations where only one particle is active at a time.
By our fairness assumption, if a particle $P$ is inactive at time $t$ in the activation sequence, $P$ will be (successfully) activated again at some time $t' > t$.
An \emph{asynchronous round} is complete once every particle has been activated at least once.

\paragraph*{Additional Terminology for Convex Hulls}

In addition to the formal model, we define some terminology for our application of convex hull formation, all of which are illustrated in \figtext~\ref{fig:hulls}.
An \emph{object} is a finite, static, simply connected set of nodes that does not perform computation.
The \emph{boundary} $B(O)$ of an object $O$ is the set of all nodes in $V \setminus O$ that are adjacent to $O$.
An object $O$ contains a \emph{tunnel of width $1$} if the graph induced by $V \setminus O$ is $1$-connected.
Particles are assumed to be able to differentiate between object nodes and nodes occupied by other particles.

\begin{figure}[t]
\centering
\begin{subfigure}{.48\textwidth}
	\centering
    \includegraphics[scale=.6]{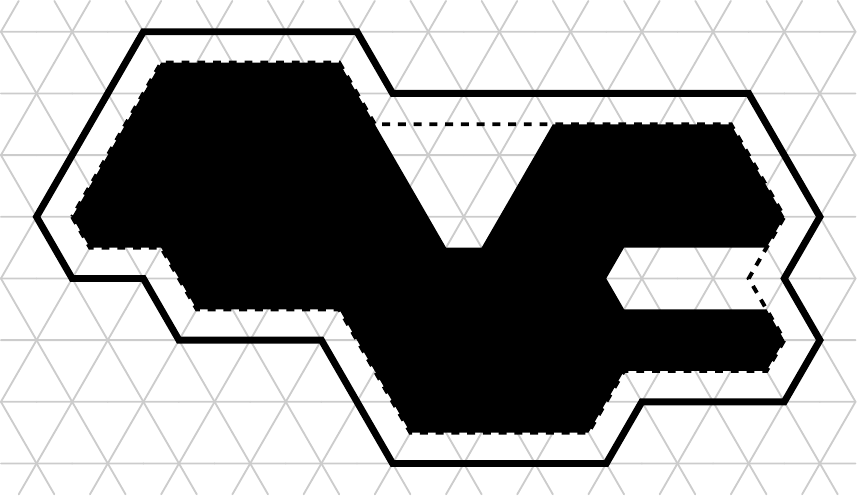}
    \caption{\centering}
    \label{fig:hulls:oconvex}
\end{subfigure}%
\begin{subfigure}{.48\textwidth}
    \centering
    \includegraphics[scale=.6]{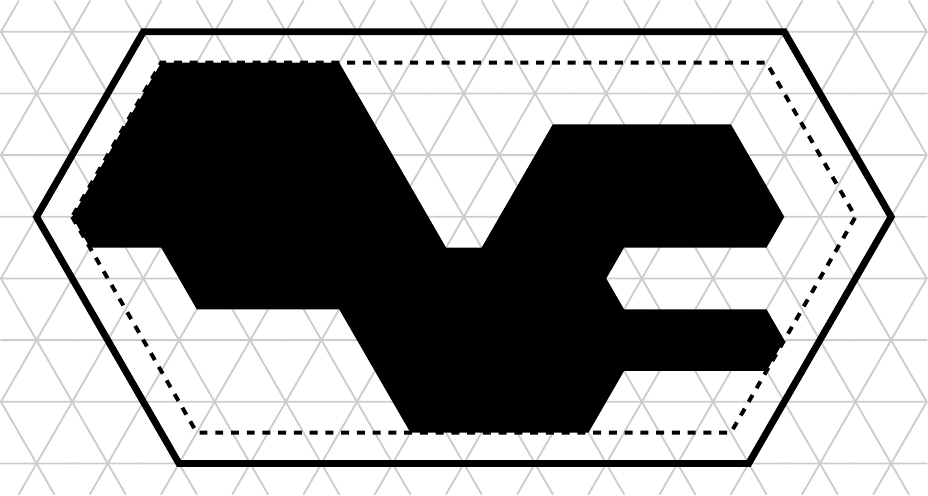}
    \caption{\centering}
    \label{fig:hulls:strongoconvex}
\end{subfigure}
\caption{An object $O$ (black) with a tunnel of width $1$ on its right side and its (a) $\mathcal{O}$-hull (dashed line) and $\Otri$-hull $H'(O)$ (solid black line), and (b) strong $\mathcal{O}$-hull (dashed line) and strong $\Otri$-hull $H(O)$ (solid black line).}
\label{fig:hulls}
\end{figure}

We now formally define the notions of convexity and convex hulls for our model.
We start by introducing the concepts of \emph{restricted-orientation convexity} (also known as \emph{$\mathcal{O}$-convexity}) and \emph{strong restricted-orientation convexity} (or \emph{strong $\mathcal{O}$-convexity}) which are well established in computational geometry~\cite{Rawlins1987,Fink2004}.
We then apply these generalized notions of convexity to our discrete setting on the triangular lattice $\Gtri$.

In the continuous setting, given a set of orientations $\mathcal{O}$ in $\mathbb{R}^2$, a geometric object is said to be \emph{$\mathcal{O}$-convex} if its intersection with every line with an orientation from $\mathcal{O}$ is either empty or connected.
The \emph{$\mathcal{O}$-hull} of a geometric object $A$ is defined as the intersection of all $\mathcal{O}$-convex sets containing $A$, or, equivalently, as the minimal $\mathcal{O}$-convex set containing $A$.
An \emph{$\mathcal{O}$-block} of two points in $\mathbb{R}^2$ is the intersection of all half-planes defined by lines with orientations in $\mathcal{O}$ and containing both points.
The \emph{strong $\mathcal{O}$-hull} of a geometric object $A$ is defined as the minimal $\mathcal{O}$-block containing $A$.

We now apply the definitions of \emph{$\mathcal{O}$-hull} and \emph{strong $\mathcal{O}$-hull} to the discrete setting of a lattice.
Let $\mathcal{O}$ be the \emph{orientation set} of $\Gtri$, i.e., the three orientations of axes of the triangular lattice.
The \emph{(weak) $\Otri$-hull} of object $O$, denoted $H'(O)$, is the set of nodes in $V \setminus O$ adjacent to the $\mathcal{O}$-hull of $O$ in $\mathbb{R}^2$ (see \figtext~\ref{fig:hulls:oconvex}).\footnote{We offset the convex hull from its traditional definition by one layer of nodes since the particles cannot occupy nodes already occupied by $O$.}
Analogously, the \emph{strong $\Otri$-hull} of object $O$, denoted $H(O)$, is the set of nodes in $V \setminus O$ adjacent to the strong $\mathcal{O}$-hull of $O$ in $\mathbb{R}^2$ (see \figtext~\ref{fig:hulls:strongoconvex}).
For simplicity, unless there is a risk of ambiguity, we will use the terms ``strong $\Otri$-hull'' and ``convex hull'' interchangeably throughout this work.

\subsection{Our Results} \label{subsec:results}

With the preceding definitions in place, we now formally define the problem we solve.
An instance of the \emph{strong $\Otri$-hull formation problem} (or the \emph{convex hull formation problem}) has the form $(\system, O)$ where $\system$ is a finite, connected system of initially contracted particles and $O \subset V$ is an object.
We assume that $(i)$ $\system$ contains a unique leader particle $\ell$ initially adjacent to $O$,\footnote{One could use the leader election algorithm for the amoebot model in~\cite{Daymude2017} to obtain such a leader in $\bigO{|\system|}$ asynchronous rounds, with high probability. Removing this assumption would simply change all the deterministic guarantees given in this work to guarantees with high probability.} $(ii)$ there are at least $|\system| > \log_2(|H(O)|)$ particles in the system, and $(iii)$ $O$ does not have any tunnels of width $1$.\footnote{We believe our algorithm could be extended to handle tunnels of width $1$ in object $O$, but this would require technical details beyond the scope of this paper.}
A local, distributed algorithm $\mathcal{A}$ solves an instance $(\system, O)$ of the convex hull formation problem if, when each particle executes $\mathcal{A}$ individually, $\system$ is reconfigured so that every node of $H(O)$ is occupied by a contracted particle.
The \emph{$\Otri$-hull formation problem} can be stated analogously.

Let $B = |B(O)|$ denote the length of the object's boundary and $H = |H(O)|$ denote the length of the object's convex hull.
We present a \textbf{local, distributed algorithm} for the \textbf{strong $\Otri$-hull formation problem} that runs in \textbf{$\mathcal{O}(B)$ rounds} and later show how it can be extended to also solve the \textbf{$\Otri$-hull formation problem} in an additional \textbf{$\mathcal{O}(H)$ rounds}.
Our algorithm has a gracefully degrading property: if there are insufficient particles to completely fill the strong $\Otri$-hull with contracted particles --- i.e., if $|\system| < H$ --- our algorithm will still form a maximal partial strong $\Otri$-hull.
To our knowledge, our algorithm is the first to address distributed convex hull formation using nodes that have no sense of global orientation nor of their coordinates and are limited to only constant-size memory and local communication.
It is also the first distributed algorithm for forming restricted-orientation convex hulls (see Section~\ref{subsec:relwork}).

Our approach critically relies on the particle system maintaining and updating the distances between the leader's current position and each of the half-planes whose intersection composes the object's convex hull.
However, these distances can far exceed the memory capacity of an individual particle; the former can be linear in the perimeter of the object, while the latter is constant.
To address this problem, we give new results on coordinating a particle system as a distributed binary counter that supports increments and decrements by one as well as testing the counter value's equality to zero (i.e., \emph{zero-testing}).
These new results supplant preliminary work on increment-only distributed binary counters under the amoebot model~\cite{Porter2018}, and we stress that this extension is non-trivial.
Moreover, these results are agnostic of convex hull formation and can be used as a modular primitive for future applications.

\subsection{Related Work} \label{subsec:relwork}

The convex hull problem is arguably one of the best-studied problems in computational geometry.
Many parallel algorithms have been proposed to solve it (see, e.g.,~\cite{Akl1993,Fjallstrom1990,Dymond2001}), as have several distributed algorithms (see~\cite{Rajsbaum2011,Miller1988,Diallo1999}).
However, conventional models of parallel and distributed computation assume that the computational and communication capabilities of the individual processors far exceed those of individual particles of programmable matter.
Most commonly, for example, the nodes are assumed to know their global coordinates and to can communicate non-locally.
Particles in the amoebot model have only constant-size memory and can communicate only with their immediate neighbors.
Furthermore, the object's boundary may be much larger than the number of particles, making it impossible for the particle system to store all the geographic locations.
Finally, to our knowledge, there only exist centralized algorithms to compute (strong) restricted-orientation convex hulls (see, e.g.,~\cite{Karlsson1988} and the references therein); ours is the first to do so in a distributed setting.

The amoebot model for self-organizing particle systems can be classified as an \emph{active} system of programmable matter --- in which the computational units have control over their own movements and actions --- as opposed to \emph{passive} systems such as population protocols and models of molecular self-assembly (see, e.g.,~\cite{Angluin2006,Patitz2014}).
Other active systems include modular self-reconfigurable robot systems (see, e.g.,~\cite{Yim2007,Gilpin2010} and the references therein) and the nubot model for molecular computing by Woods et al.~\cite{Woods2013}.
One might also include the mobile robots model in this category (see~\cite{Flocchini2019} and the references therein), in which robots abstracted as points in the real plane or on graphs solve problems such as pattern formation and gathering.
A notable difference between the amoebot model and the mobile robots literature is in their treatment of progress and time: mobile robots progress according to fine-grained ``look-compute-move'' cycles where actions are comprised of exactly one look, move, or compute operation.
In comparison --- at the scale where particles can only perform a constant amount of computation and are restricted to immediate neighborhood sensing --- the amoebot model assumes coarser atomic actions (as described in Section~\ref{subsec:amoebot}).



Lastly, we briefly distinguish convex hull formation from the related problems of shape formation and object coating, both of which have been considered under the amoebot model.
Like shape formation~\cite{Derakhshandeh2016}, convex hull formation is a task of reconfiguring the particle system's shape; however, the desired hull shape is based on the object and thus is not known to the particles ahead of time.
Object coating~\cite{Derakhshandeh2017} also depends on an object, but may not form a convex seal around the object using the minimum number of particles.

\subsection{Organization} \label{subsec:organization}

Our convex hull formation algorithm has two phases: the particle system first explores the object to learn the convex hull's dimensions, and then uses this knowledge to form the convex hull.
In Section~\ref{sec:soloalg}, we introduce the main ideas behind the learning phase as a novel local algorithm run by a single particle with unbounded memory.
We then give new results on organizing a system of particles each with $\bigO{1}$ memory into binary counters in Section~\ref{sec:counter}.
Combining the results of these two sections, we present the full distributed algorithm for learning and forming the strong $\Otri$-hull in Section~\ref{sec:multialg}.
We conclude by presenting an extension of our algorithm to solve the $\Otri$-hull formation problem in Section~\ref{sec:weakhull}.

\section{The Single-Particle Algorithm} \label{sec:soloalg}

We first consider a particle system composed of a single particle $P$ with unbounded memory and present a local algorithm for learning the strong $\Otri$-hull of object $O$.
As will be the case in the distributed algorithm, particle $P$ does not know its global coordinates or orientation.
We assume $P$ is initially on $B(O)$, the boundary of $O$.
The main idea of this algorithm is to let $P$ perform a clockwise traversal of $B(O)$, updating its knowledge of the convex hull as it goes.

In particular, the convex hull can be represented as the intersection of six half-planes $\mathcal{H} = \{N, NE, SE, S, SW, NW\}$, which $P$ can label using its local compass (see \figtext~\ref{fig:halfplanes}).
Particle $P$ estimates the location of these half-planes by maintaining six counters $\{d_h : h \in \mathcal{H}\}$, where each counter $d_h$ represents the $L_1$-distance\footnote{The distance from a node to a half-plane is the number of edges in a shortest path between the node and any node on the line defining the half-plane.} from the position of $P$ to half-plane $h$.
If at least one of these counters is equal to $0$, $P$ is on its current estimate of the convex hull.

\begin{figure}[t]
\centering
\begin{subfigure}{.45\textwidth}
    \centering
    \includegraphics[scale=1.3]{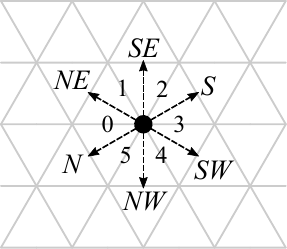}
    \caption{\centering}
    \label{fig:halfplanes:compass}
\end{subfigure}%
\begin{subfigure}{.45\textwidth}
    \centering
    \includegraphics[scale=0.6]{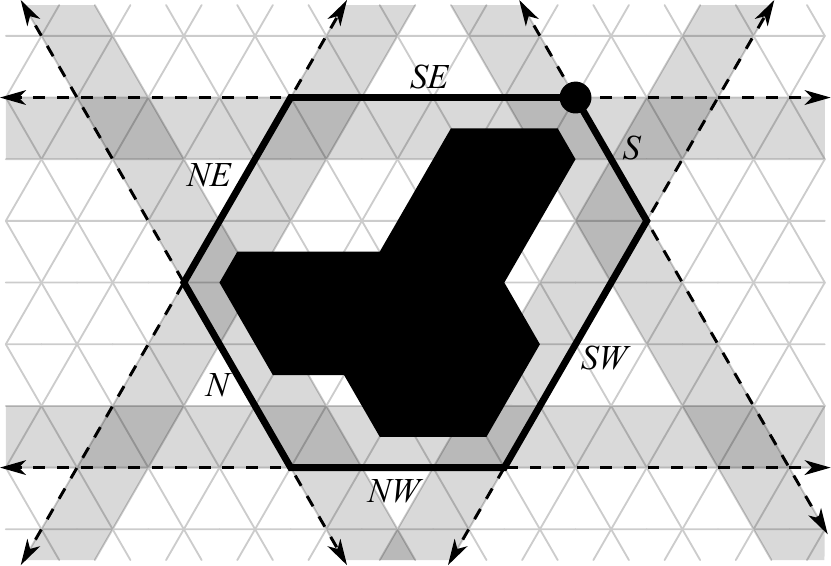}
    \caption{\centering}
    \label{fig:halfplanes:hull}
\end{subfigure}
\caption{(a) A particle's local labeling of the six half-planes composing the convex hull: the half-plane between its local $0$ and $5$-labeled edges is $N$, and the remaining half-planes are labeled accordingly. (b) An object (black) and the six half-planes (dashed lines with shading on included side) whose intersection forms its convex hull (black line). As an example, the node depicted in the upper-right is distance $0$ from the $S$ and $SE$ half-planes and distance $7$ from $N$.}
\label{fig:halfplanes}
\end{figure}

Each counter is initially set to $0$, and $P$ updates them as it moves.
Let $[6] = \{0, \ldots, 5\}$ denote the six directions $P$ can move in, corresponding to its contracted port labels.
In each step, $P$ first computes the direction $i \in [6]$ to move toward using the right-hand rule, yielding a clockwise traversal of $B(O)$.
Since $O$ was assumed to not have tunnels of width $1$, direction $i$ is unique.
Particle $P$ then updates its distance counters by setting $d_h \gets \max\{0, d_h + \delta_{i,h}\}$ for all $h \in \mathcal{H}$, where $\delta_i = (\delta_{i,N}, \delta_{i,NE}, \delta_{i,SE}, \delta_{i,S}, \delta_{i,SW}, \delta_{i,NW})$ is defined as follows:
\[\begin{array}{ccc}
\delta_0 = (1, 1, 0, -1, -1, 0)~~~~~ & \delta_1 = (0, 1, 1, 0, -1, -1)~~~~~ & \delta_2 = (-1, 0, 1, 1, 0, -1) \\
\delta_3 = (-1, -1, 0, 1, 1, 0)~~~~~ & \delta_4 = (0, -1, -1, 0, 1, 1)~~~~~ & \delta_5 = (1, 0, -1, -1, 0, 1)
\end{array}\]

Thus, every movement decrements the distance counters of the two half-planes to which $P$ gets closer, and increments the distance counters of the two half-planes from which $P$ gets farther away.
Whenever $P$ moves toward a half-plane to which its distance is already $0$, the value stays $0$, essentially ``pushing'' the estimation of the half-plane one step further.
An example of such a movement is given in \figtext~\ref{fig:estimate}.

\begin{figure}[t]
\centering
\begin{subfigure}{.33\textwidth}
	\centering
	\includegraphics[scale=0.75]{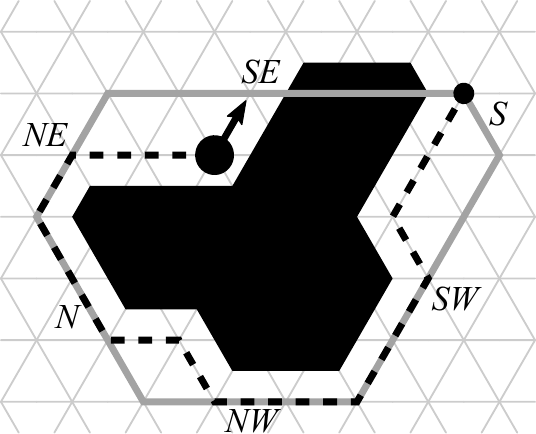}
	\caption{\centering}
	\label{fig:estimate:a}
\end{subfigure}%
\hfill
\begin{subfigure}{.33\textwidth}
	\centering
	\includegraphics[scale=0.75]{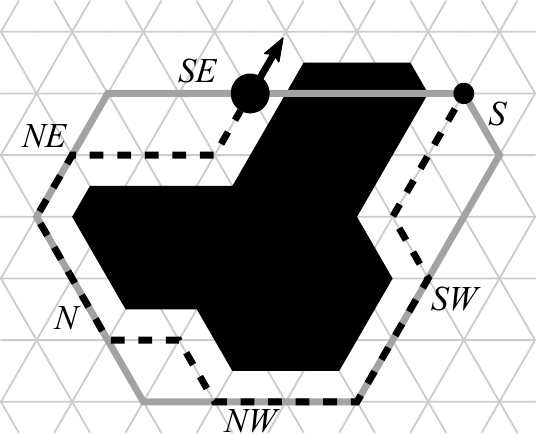}
	\caption{\centering}
	\label{fig:estimate:b}
\end{subfigure}%
\hfill
\begin{subfigure}{.33\textwidth}
	\centering
	\includegraphics[scale=0.75]{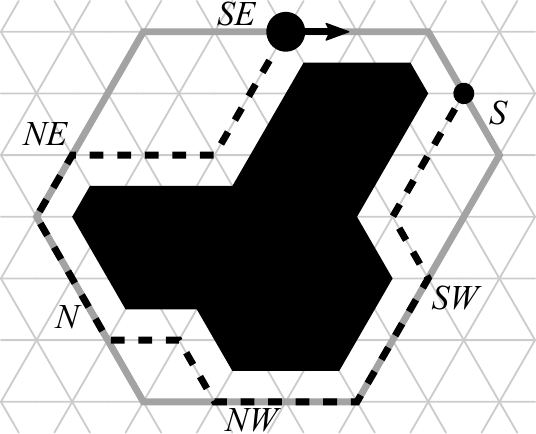}
	\caption{\centering}
	\label{fig:estimate:c}
\end{subfigure}
\caption{The particle $P$ with its convex hull estimate (gray line) after traversing the path (dashed line) from its starting point (small black dot). (a) $d_h \geq 1$ for all $h \in \mathcal{H}$, so its next move does not push a half-plane. (b) Its next move is toward the $SE$ half-plane and $d_{SE} = 0$, so (c) $SE$ is pushed.}
\label{fig:estimate}
\end{figure}

Finally, $P$ needs to detect when it has learned the complete convex hull.
To do so, it stores six terminating bits $\{b_h : h \in \mathcal{H}\}$, where $b_h$ is equal to $1$ if $P$ has visited half-plane $h$ (i.e., if $d_h$ has been $0$) since $P$ last pushed any half-plane, and $0$ otherwise.
Whenever $P$ moves without pushing a half-plane (e.g., \figtext~\ref{fig:estimate:a}--\ref{fig:estimate:b}), it sets $b_h = 1$ for all $h$ such that $d_h = 0$ after the move.
If its move pushed a half-plane (e.g., \figtext~\ref{fig:estimate:b}--\ref{fig:estimate:c}), it resets all its terminating bits to $0$.
Once all six terminating bits are $1$, $P$ contracts and terminates.

\paragraph*{Analysis} \label{subsec:soloanalysis}

We now analyze the correctness and runtime of this single-particle algorithm.
\ifconf\else
Note that, since the particle system contains only one particle $P$, each activation of $P$ is also an asynchronous round.
\fi
For a given round $i$, let $H_i(O) \subset V$ be the set of all nodes enclosed by $P$'s estimate of the convex hull of $O$ after round $i$, i.e., all nodes in the closed intersection of the six half-planes.
We first show that $P$'s estimate of the convex hull represents the correct convex hull $H(O)$ after at most one traversal of the object's boundary, and does not change afterwards.
\ifconf
Full proofs of these lemmas can be found in~\cite{arXiv} but are omitted due to space constraints.
\else\fi

\begin{lemma} \label{lem:estimate}
If particle $P$ completes its traversal of $B(O)$ in round $i^*$, then $H_i(O) = H(O)$ for all $i \geq i^*$.
\end{lemma}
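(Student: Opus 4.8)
The plan is to track how $P$'s estimate $H_i(O)$ evolves as $P$ performs its clockwise traversal of $B(O)$, and to argue two things: (i) at every round $i$, the object $O$ is contained in $H_i(O)$, so $P$'s estimate never becomes too small; and (ii) once $P$ has completed a full traversal, $H_i(O)$ is not merely a superset of $H(O)$ but equals it, and it is stable thereafter. The key invariant I would maintain is that each counter $d_h$ equals the true $L_1$-distance from $P$'s current position to $P$'s \emph{current} estimate of half-plane $h$, and that the estimate of each half-plane $h$ is a supporting line: it lies weakly outside $O$ on the side $h$, i.e., $O \subseteq H_i(O)$ for all $i$. This invariant is preserved by the update rule $d_h \gets \max\{0, d_h + \delta_{i,h}\}$: the $\delta_i$ vectors correctly record, for a unit step in lattice direction $i$, how the signed distance to each of the six half-planes changes; the $\max$ with $0$ implements the ``push'' operation, which only ever moves a half-plane outward, so containment of $O$ is never violated.

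The heart of the argument is step (ii). First I would observe that, because $P$ traverses $B(O)$ and $B(O)$ is the full boundary of a simply connected object with no width-1 tunnels, every node of $B(O)$ is visited exactly once in a traversal. For each half-plane $h$, the true hull line for $h$ passes through (or is adjacent to) at least one extreme node of $O$; the corresponding boundary node is visited at some point during the traversal. I would argue that when $P$ visits a node that is extreme in direction $h$, its estimate of $h$ must already be at least as far out as the true line — by the containment invariant — and moreover cannot be strictly farther out, because if it were, there would have been an earlier traversal step that pushed $h$ past the true line, which is impossible: a push of $h$ only happens when $d_h = 0$, meaning $P$ is currently on its estimate of $h$, and since $P$ stays on $B(O)$, $P$ cannot be strictly outside the true supporting line of $h$ (that region is empty of lattice nodes reachable from $B(O)$ without crossing $O$, given no width-1 tunnels). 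Hence after $P$ has visited all the extreme boundary nodes — which happens within one traversal — every $d_h$ correctly pins down the true line, so $H_i(O) = H(O)$. Finally, once $H_i(O) = H(O)$, no further push is possible (a push would require $O \not\subseteq$ the pushed half-plane, contradicting that $H(O)$ already contains $O$ exactly on its boundary in a way consistent with the supporting-line characterization), so the estimate is frozen: $H_i(O) = H(O)$ for all $i \ge i^*$.

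The step I expect to be the main obstacle is making precise the claim that $P$, while on $B(O)$, can never be strictly outside a true supporting hyperplane of $O$ — equivalently, that pushes never overshoot. This is where the ``no tunnels of width $1$'' hypothesis is essential: it guarantees that $B(O)$ together with the exterior is ``fat enough'' that the clockwise boundary walk stays within the true hull, so that $d_h = 0$ (P on its estimate of $h$) combined with the containment invariant forces $P$'s estimate of $h$ to coincide with, not exceed, the true line at the moment of any push. I would handle this by a careful case analysis on the local geometry of $B(O)$ around a would-be overshooting step, showing that an overshoot would force the existence of a width-1 tunnel or a non-simply-connected complement, contradicting the standing assumptions. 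The remaining bookkeeping — verifying the $\delta_i$ vectors, checking that the terminating-bit mechanism fires exactly when all six lines have been certified (which is what guarantees $i^*$ is well-defined and at most one traversal), and confirming stability — is routine once this geometric lemma is in hand.
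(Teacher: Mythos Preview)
Your central invariant is stated with the containment reversed, and this inversion propagates through the whole argument. You claim that ``$O \subseteq H_i(O)$ for all $i$'' and that each estimated half-plane is always a supporting line of $O$. But at round $0$ every counter $d_h$ is $0$, so all six estimated half-plane boundaries pass through $P$'s starting node and $H_0(O)$ is a single point --- certainly not a superset of $O$. The estimate starts tiny and \emph{grows} monotonically as half-planes are pushed outward; it does not start large and risk overshooting. Consequently your entire ``main obstacle'' --- the worry that a push might overshoot the true supporting line, and the invocation of the no-width-$1$-tunnel hypothesis to rule this out via local case analysis --- is attacking a phantom. Overshooting is impossible for the trivial reason that a push only occurs when $P$ itself sits on the estimated line, and $P$ never leaves $B(O) \subseteq H(O)$; no tunnel hypothesis is needed for this lemma.

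The paper's proof runs the containment the other way and is three lines: (a) $H_i(O) \subseteq H(O)$ for all $i$, since $P$ stays on $B(O)$; (b) $H_i(O) \subseteq H_{i+1}(O)$, since pushes only enlarge the estimate; (c) after one full traversal $P$ has visited a boundary node on each of the six true half-plane lines, forcing each estimated line out to the true one, so $H_{i^*}(O) = H(O)$, and by (b) it stays there. Your step (ii) contains the germ of (c), but framed as ``not merely a superset of $H(O)$'' when the actual task is to show the estimate has grown \emph{up to} $H(O)$. Also note that $i^*$ here is simply the completion of one boundary traversal; the terminating-bit mechanism is irrelevant to this lemma and is handled separately.
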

\ifconf\else
\begin{proof}
Since $P$ exclusively traverses $B(O)$, $H_i(O) \subseteq H(O)$ for all rounds $i$.
Furthermore, $H_i(O) \subseteq H_{i+1}(O)$ for any round $i$.
Once $P$ has traversed the whole boundary, it has visited a node of each half-plane corresponding to $H(O)$, and thus $H_{i^*}(O) = H(O)$.
\end{proof}
\fi

\ifconf\else
We now show particle $P$ terminates if and only if it has learned the complete convex hull.
\fi

\begin{lemma} \label{lem:earlyterm}
If $H_i(O) \subset H(O)$ after some round $i$, then $b_h = 0$ for some half-plane $h \in \mathcal{H}$.
\end{lemma}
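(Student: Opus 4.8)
I would prove the contrapositive: if $b_h = 1$ for every $h \in \mathcal{H}$ after round $i$, then $H_i(O) = H(O)$. Since $H_i(O) \subseteq H(O)$ holds after every round (established in the proof of Lemma~\ref{lem:estimate}, because $P$ only ever occupies nodes of $B(O)$), this is equivalent to the stated lemma.

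\emph{Step 1 (reduce to a single push-free stretch).} Every time a half-plane is pushed, all six terminating bits are reset to $0$. Hence $b_h = 1$ for all $h$ forces the existence of a round $t_0 < i$ after which $P$ performs no push up to round $i$, and such that each bit $b_h$ was set to $1$ during the rounds $(t_0, i]$; in particular, for each $h$ there is a round in $(t_0, i]$ at which $d_h = 0$, i.e.\ at which $P$ lies on the estimate's $h$-line. Because no half-plane moves during $(t_0, i]$, the estimate is constant there: $H_{t_0}(O) = H_{t_0+1}(O) = \dots = H_i(O)$.

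\emph{Step 2 (a push-free visit of all six half-planes is a full boundary traversal).} This is the heart of the argument. Throughout $(t_0, i]$ particle $P$ executes an arc of its clockwise traversal of $B(O)$ while remaining inside the \emph{fixed} convex hexagon $H_i(O)$: leaving it would require some $d_h$ to turn negative, but the update rule $d_h \gets \max\{0, d_h + \delta_{i,h}\}$ keeps $d_h$ at $0$ and instead pushes the line, which does not happen in this stretch. By Step 1 each of the six sides of this hexagon is touched by $P$ during the arc, hence each side is touched by a node of $B(O)$. I would then invoke a winding argument: an arc of a clockwise traversal of the closed curve $B(O)$ that stays inside the hexagon and meets all six of its sides must wind once around the hexagon's interior (and therefore around $O$), so the arc covers the entire boundary cycle; consequently, by round $i$ the particle has visited every node of $B(O)$. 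Ruling out degenerate configurations — where part of $O$, and hence of $B(O)$, protrudes from the (strictly too small) estimate — uses that $B(O)$ is a single closed curve enclosing $O$, which holds because $O$ is simply connected with no width-$1$ tunnel, together with the fact just noted that $B(O)$ reaches all six sides.

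\emph{Step 3 (a completed traversal pins the estimate) and the main obstacle.} Once $P$ has visited every node of $B(O)$, it has in particular visited, for each $h \in \mathcal{H}$, a boundary node lying on the corresponding side of the true hull $H(O)$; by the update rule the estimate's $h$-line is then pushed at least as far out as the true $h$-line, so $H_i(O) \supseteq H(O)$, and with the reverse inclusion we get $H_i(O) = H(O)$, completing the contrapositive. Steps~1 and~3 are routine bookkeeping with the $\delta_i$ vectors and the fact that the six estimate lines only ever move outward; the delicate point is Step~2, namely converting ``$P$ has touched all six estimate half-planes since its last push'' into ``$P$ has gone all the way around $B(O)$.'' Making the winding argument rigorous on the lattice — pinning down exactly what ``touching a side'' means for a discrete traversal and checking that the hexagon truly encircles $O$ whenever the estimate is still a proper subset of $H(O)$ — is where I expect the real work to lie.
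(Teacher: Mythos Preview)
Your skeleton --- prove the contrapositive, isolate a maximal push-free stretch $(t_0,i]$ during which the estimate is frozen, show that $P$ must have completed a full lap of $B(O)$ in that stretch, then invoke Lemma~\ref{lem:estimate} --- is exactly the shape of the paper's argument. The divergence is entirely in how you carry out Step~2.

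The paper does \emph{not} use a winding argument. Instead it looks at the individual rounds $t_j$ at which each surviving bit $b_{h_j}$ was last flipped to $1$ and splits cases: either $P$ was already on the $h_j$-line at $t_j-1$ and merely moved along it (which forces a push at $t_j-1$, hence can happen for at most one of the five half-planes), or $P$ stepped onto the $h_j$-line from distance $1$. The second case, applied to the remaining four half-planes, is then parlayed into the conclusion that $P$ has already gone all the way around $B(O)$, after which Lemma~\ref{lem:estimate} closes the contradiction. The argument is purely combinatorial bookkeeping on the $\delta$-vectors and the bit-setting rule; no topology is invoked.

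Your Step~2, by contrast, has a real gap rather than just missing details. You want: a push-free arc of the clockwise walk on $B(O)$ that touches all six sides of the frozen hexagon must be the full cycle. But under the hypothesis you are trying to refute --- $H_i(O)\subsetneq H(O)$ --- part of $O$, and hence part of $B(O)$, lies \emph{outside} the estimate hexagon. You flag this as a ``degenerate configuration'' to be disposed of using simple connectivity and the absence of width-$1$ tunnels, but it is not degenerate: it is precisely the configuration the lemma is about, and ruling it out \emph{is} the lemma. In that situation $B(O)$ breaks into several arcs relative to the hexagon, and nothing you have said excludes a single inside arc that grazes all six sides without being a complete lap. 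The appeal to winding number does not rescue this: an open arc has no well-defined winding number around a point, and ``the hexagon's interior, and therefore $O$'' presupposes that the hexagon contains $O$, which is exactly what fails when the estimate is too small.

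So Steps~1 and~3 are fine, but Step~2 as written is circular. To close it you would need a genuine discrete argument that touching all six sides in a push-free stretch forces a complete lap; doing that carefully is likely to reproduce something close to the paper's case analysis on the rounds $t_j$.
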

\ifconf\else
\begin{proof}
Suppose to the contrary that after round $i$, $H_i(O) \subset H(O)$ but $b_h = 1$ for all $h \in \mathcal{H}$; let $i$ be the first such round.
Then after round $i-1$, there was exactly one half-plane $h_1 \in \mathcal{H}$ such that $b_{h_1} = 0$; all other half-planes $h \in \mathcal{H} \setminus \{h_1\}$ have $b_h = 1$.
Let $h_2, \ldots, h_6$ be the remaining half-planes in clockwise order, and let round $t_j < i-1$ be the one in which $b_{h_j}$ was most recently flipped from $0$ to $1$, for $2 \leq j \leq 6$.
Particle $P$ could only set $b_{h_j} = 1$ in round $t_j$ if its move in round $t_j$ did not push any half-planes and $d_{h_j} = 0$ after the move.
There are two ways this could have occurred.

First, $P$ may have already had $d_{h_j} = 0$ in round $t_j - 1$ and simply moved along $h_j$ in round $t_j$, leaving $d_{h_j} = 0$.
But for this to hold and for $P$ to have had $b_{h_j} = 0$ after round $t_j - 1$, $P$ must have just pushed $h_j$, resetting all its terminating bits to $0$.
Particle $P$ could not have pushed any half-plane during rounds $t_2$ up to $i-1$, since $b_{h_2} = \cdots = b_{h_6} = 1$, so this case only could have occurred with half-plane $h_2$.

For the remaining half-planes $h_j$, for $3 \leq j \leq 6$, $P$ must have had $d_{h_j} = 1$ in round $t_j - 1$ and moved into $h_j$ in round $t_j$.
But this is only possible if $P$ pushed $h_j$ in some round prior to $t_j - 1$, implying that $P$ has already visited $h_3, \ldots, h_6$.
Therefore, $P$ has completed at least one traversal of $B(O)$ by round $i$, but $H_i(O) \subset H(O)$, contradicting Lemma~\ref{lem:estimate}.
\end{proof}
\fi

\begin{lemma} \label{lem:termination}
Suppose $H_i(O) = H(O)$ for the first time after some round $i$. Then particle $P$ terminates at some node of $H(O)$ after at most one additional traversal of $B(O)$.
\end{lemma}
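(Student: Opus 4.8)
The plan is to show that once $P$'s estimate has stabilized to the true hull in round $i$, $P$ makes a full boundary loop without pushing any half-plane, and during that loop it sets all six terminating bits to $1$, which triggers termination on a hull node. First I would observe that for all rounds $j \geq i$, no move of $P$ can push a half-plane: by Lemma~\ref{lem:estimate} the estimate never changes once it equals $H(O)$, and a push would strictly enlarge $H_j(O)$, contradicting $H_j(O) = H(O) \subseteq H(O)$. (Here I use that pushing half-plane $h$ while $d_h = 0$ moves the supporting line of $h$ outward by one step, which strictly increases the enclosed node set — this needs a one-line justification from the update rule, since the other five counters stay nonnegative.) Consequently, from round $i$ onward the terminating bits are never reset to $0$; they can only be flipped from $0$ to $1$, and once $P$ moves along a node with $d_h = 0$ the bit $b_h$ becomes and stays $1$.

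Next I would argue that during one more clockwise traversal of $B(O)$, $P$ visits a node on the supporting line of every half-plane $h \in \mathcal{H}$, i.e., a node with $d_h = 0$. Since $H_i(O) = H(O)$ is the exact intersection of the six half-planes, each half-plane $h$ that actually participates in the hull boundary contributes at least one node of $H(O)$ with $d_h = 0$, and $P$'s clockwise boundary walk passes through (a node adjacent to) every such hull node; when $P$ is at that boundary node, the corresponding counter is $0$ and the move out of it does not push (by the previous paragraph), so $b_h$ gets set to $1$. For any ``degenerate'' half-plane that coincides with another (e.g., when the hull is a triangle and two of the six half-planes are redundant), $d_h = 0$ holds simultaneously with the active neighbor's counter, so visiting the active half-plane's line also sets the redundant bit. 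Thus after at most one additional traversal of $B(O)$ — bounded in rounds by $O(B)$ via the same boundary-length argument used implicitly for the traversal — all six bits are $1$.

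Finally, the termination condition (all six $b_h = 1$) is met, so $P$ contracts and halts; and it halts at the node it currently occupies, which satisfies $d_h = 0$ for at least one $h$ (the bit it most recently set), hence lies on the estimated hull $= H(O)$. Combining, $P$ terminates on a node of $H(O)$ within one extra boundary traversal of round $i$.

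\textbf{Main obstacle.} The delicate point is handling redundant/degenerate half-planes: I must make sure that when fewer than six half-planes are ``tight'' against $O$, the terminating bits of the redundant half-planes still get set, i.e., that a redundant $h$ always has $d_h = 0$ exactly when its non-redundant neighbor does. This follows from the geometry of the six fixed orientations on $\Gtri$ (a node on the supporting line of one of two coincident half-planes is on the supporting line of the other), but it is the step that requires the most care to state cleanly. The rest — that pushes strictly grow $H_j(O)$, and that a clockwise walk of $B(O)$ touches each tight half-plane's line — is routine once the update rule $\delta_i$ is unpacked.
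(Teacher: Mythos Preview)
Your approach is essentially the same as the paper's: once the estimate is correct, no further pushes occur, so the terminating bits are monotone, and one more lap of $B(O)$ touches every half-plane's line, setting all six bits and terminating on a hull node. The paper's proof is slightly more concrete in that it anchors the argument at the node $u$ where the last push happened in round $i$, but the logical structure is the same.

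Your identified ``main obstacle'' is not actually an obstacle, and the geometric picture behind it is slightly off. The six half-planes have six pairwise distinct orientations (spaced $60^\circ$ apart), so no two of them can \emph{coincide}; what can happen is that an edge of the hexagon $H(O)$ degenerates to a single vertex. But this causes no difficulty: for every half-plane $h$, there is a node of $B(O)$ on the line of $h$, namely the node $P$ occupied immediately after it last pushed $h$ (since the update rule leaves $d_h = 0$ after a push). That node is revisited during the additional traversal, and the move into it is non-pushing, so $b_h$ is set to $1$. No separate treatment of ``redundant'' half-planes is needed. Dropping that digression would make your argument as clean as the paper's.
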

\ifconf\else
\begin{proof}
Since $i$ is the first round in which $H_i(O) = H(O)$, particle $P$ must have just pushed some half-plane $h$ --- resetting all its terminating bits to $0$ --- and now occupies a node $u$ with distance $0$ to $h$.
Due to the geometry of the triangular lattice, the next node in a clockwise traversal of $B(O)$ from $u$ must also have distance $0$ to $h$, so $P$ will set $b_h$ to $1$ after its next move.
As $P$ continues its traversal, it will no longer push any half-planes because its convex hull estimation is complete.
Thus, $P$ will visit every other half-plane $h'$ without pushing it, causing $P$ to set each $b_{h'}$ to $1$ before reaching $u$ again.
Particle $P$ sets its last terminating bit $b_{h^*}$ to $1$ when it next visits a node $v$ with distance $0$ to $h^*$.
Therefore, $P$ terminates at $v \in B(O) \cap H(O)$.
\end{proof}
\fi

The previous lemmas immediately imply the following theorem.
Let $\blen = |B(O)|$.

\begin{theorem} \label{thm:singlecorrect}
The single-particle algorithm terminates after $t^* = \bigO{\blen}$ asynchronous rounds with particle $P$ at a node $u \in B(O) \cap H(O)$ and $H_{t^*}(O) = H(O)$.
\end{theorem}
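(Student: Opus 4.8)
The plan is to assemble Theorem~\ref{thm:singlecorrect} as a direct corollary of the three preceding lemmas, where the only genuinely new content is bounding the number of rounds by $\bigO{\blen}$. First I would argue that the whole execution consists of at most two consecutive clockwise traversals of $B(O)$: by Lemma~\ref{lem:estimate}, the estimate $H_i(O)$ reaches $H(O)$ by the end of the first full traversal (say, in round $i$), and by Lemma~\ref{lem:termination}, $P$ then terminates within at most one further traversal of $B(O)$, landing on a node of $B(O) \cap H(O)$. Combining these gives termination after at most two traversals, together with the guarantees $H_{t^*}(O) = H(O)$ and $P \in B(O) \cap H(O)$ — these two properties being exactly what Lemmas~\ref{lem:estimate} and~\ref{lem:termination} provide, so no extra work is needed there.

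Next I would convert ``two traversals'' into $\bigO{\blen}$ asynchronous rounds. Since the system has a single particle, each activation is its own asynchronous round (as noted in the analysis paragraph), and each activation advances $P$ by exactly one step of the right-hand-rule walk along $B(O)$. A clockwise boundary traversal visits each boundary node a constant number of times (bounded by the node's degree in $G_\Delta$, hence at most $6$), so one traversal takes $\bigO{\blen}$ steps; two traversals therefore take $\bigO{\blen}$ steps, i.e.\ $\bigO{\blen}$ rounds. I should also invoke Lemma~\ref{lem:earlyterm} to confirm that $P$ does not terminate \emph{prematurely}: if $H_i(O) \subsetneq H(O)$ then some $b_h = 0$, so the termination condition (all six bits equal to $1$) cannot trigger before the estimate is correct — this rules out an early stop that would spoil the $H_{t^*}(O) = H(O)$ claim.

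The one subtlety I anticipate is the precise sense in which ``$P$ completes its traversal of $B(O)$'' is well-defined and finite: the right-hand rule is deterministic and, because $O$ has no tunnel of width $1$, the next-move direction is unique at every step, so the walk is a well-defined closed walk around the boundary that returns to its start in $\bigO{\blen}$ steps. I would state this explicitly as the justification that round $i^*$ in Lemma~\ref{lem:estimate} and round $i$ in Lemma~\ref{lem:termination} both exist and occur within $\bigO{\blen}$ rounds. Beyond that bookkeeping, the theorem is just the conjunction of the three lemmas, so the proof is short; the main obstacle is not any deep argument but simply being careful that ``traversal length'' is $\Theta(\blen)$ rather than, say, the number of distinct boundary nodes, and that the two-traversal bound is applied with the correct starting round.
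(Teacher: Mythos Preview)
Your proposal is correct and matches the paper's approach: the paper states the theorem as an immediate consequence of Lemmas~\ref{lem:estimate}, \ref{lem:earlyterm}, and~\ref{lem:termination} with no further argument. Your write-up is in fact more detailed than the paper's, which simply says ``the previous lemmas immediately imply the following theorem''; your explicit bookkeeping that two traversals cost $\bigO{\blen}$ rounds and your invocation of Lemma~\ref{lem:earlyterm} to rule out premature termination are exactly the right ingredients to unpack that one-line claim.
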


\section{A Binary Counter of Particles} \label{sec:counter}

For a system of particles each with constant-size memory to emulate the single-particle algorithm of Section~\ref{sec:soloalg}, the particles need a mechanism to distributively store the distances to each of the strong $\Otri$-hull's six half-planes.
To that end, we now describe how to coordinate a particle system as a binary counter that supports increments and decrements by one as well as zero-testing.
Accompanying pseudocode can be found in Appendix~\ref{subapp:countercode}.
This description subsumes previous work on collaborative computation under the amoebot model that detailed an increment-only binary counter~\cite{Porter2018}.
This algorithm uses \emph{tokens}, or constant-size messages that can be passed between particles~\cite{Daymude2019}.

Suppose that the participating particles are organized as a simple path with the leader particle at its start: $\ell = P_0, P_1, P_2, \ldots, P_k$.
Each particle $P_i$ stores a value $P_i.\bit \in \{\emptyset, 0, 1\}$, where $P_i.\bit = \emptyset$ implies $P_i$ is not part of the counter; i.e., it is beyond the most significant bit.
Each particle $P_i$ also stores tokens in a queue $P_i.\tokens$; the leader $\ell$ can only store one token, while all other particles can store up to two.
These tokens can be increments $c^+$, decrements $c^-$, or the unique \emph{final token} $f$ that represents the end of the counter.
If a particle $P_i$ (for $0 < i \leq k$) holds $f$ --- i.e., $P_i.\tokens$ contains $f$ --- then the counter value is represented by the bits of each particle from the leader $\ell$ (storing the least significant bit) up to and including $P_{i-1}$ (storing the most significant bit).

The leader $\ell$ is responsible for initiating counter operations, while the rest of the particles use only local information and communication to carry these operations out.
To increment the counter, the leader $\ell$ generates an increment token $c^+$ (assuming it was not already holding a token).
Now consider this operation from the perspective of any particle $P_i$ holding a $c^+$ token, where $0 \leq i \leq k$.
If $P_i.\bit = 0$, $P_i$ consumes $c^+$ and sets $P_i.\bit \gets 1$.
Otherwise, if $P_i.\bit = 1$, this increment needs to be carried over to the next most significant bit.
As long as $P_{i+1}.\tokens$ is not full (i.e., $P_{i+1}$ holds at most one token), $P_i$ passes $c^+$ to $P_{i+1}$ and sets $P_i.\bit \gets 0$.
Finally, if $P_i.\bit = \emptyset$, this increment has been carried over past the counter's end, so $P_i$ must also be holding the final token $f$.
In this case, $P_i$ forwards $f$ to $P_{i+1}$, consumes $c^+$, and sets $P_i.\bit \gets 1$.

To decrement the counter, the leader $\ell$ generates a decrement token $c^-$ (if it was not holding a token).
From the perspective of any particle $P_i$ holding a $c^-$ token, where $0 \leq i < k$, the cases for $P_i.\bit \in \{0,1\}$ are nearly anti-symmetric to those for the increment.
If $P_i.\bit = 0$ and $P_{i+1}.\tokens$ is not full, $P_i$ carries this decrement over by passing $c^-$ to $P_{i+1}$ and setting $P_i.\bit \gets 1$.
However, if $P_i.\bit = 1$, we only allow $P_i$ to consume $c^-$ and set $P_i.\bit \gets 0$ if $P_{i+1}.\bit \neq 1$ or $P_{i+1}$ is not only holding a $c^-$.
While not necessary for the correctness of the decrement operation, this will enable conclusive zero-testing.
Additionally, if $P_{i+1}$ is holding $f$, then $P_i$ is the most significant bit.
So this decrement shrinks the counter by one bit; thus, as long as $P_i \neq P_0$, $P_i$ additionally takes $f$ from $P_{i+1}$, consumes $c^-$, and sets $P_i.\bit \gets \emptyset$.

Finally, the zero-test operation: if $P_1.\bit = 1$ and $P_1$ only holds a decrement token $c^-$, $\ell$ cannot perform the zero-test conclusively (i.e., zero-testing is ``unavailable'').
Otherwise, the counter value is $0$ if and only if $P_1$ is only holding the final token $f$ and $(i)$ $\ell.\bit = 0$ and $\ell.\tokens$ is empty or $(ii)$ $\ell.\bit = 1$ and $\ell$ is only holding a decrement token $c^-$.

\subsection{Correctness} \label{subsec:countercorrect}

We now show the \emph{safety} of our increment, decrement, and zero-test operations for the distributed counter.
More formally, we show that given any sequence of these operations, our distributed binary counter will eventually yield the same values as a centralized counter, assuming the counter's value remains nonnegative.

If our distributed counter was fully synchronized, meaning at most one increment or decrement token is in the counter at a time, the distributed counter would exactly mimic a centralized counter but with a linear slowdown in its length.
Our counter instead allows for many increments and decrements to be processed in a pipelined fashion.
Since the $c^+$ and $c^-$ tokens are prohibited from overtaking one another, thereby altering the order the operations were initiated in, it is easy to see that the counter will correctly process as many tokens as there is capacity for.

So it remains to prove the correctness of the zero-test operation.
We will prove this in two parts: first, we show the zero-test operation is always eventually available.
We then show that if the zero-test operation is available, it is always reliable; i.e., it always returns an accurate indication of whether or not the counter's value is $0$.
\ifconf
Proof sketches are provided in lieu of full proofs due to space limitations; for full proofs see~\cite{arXiv}.
\else\fi

\begin{lemma} \label{lem:0testavailable}
If at time $t$ zero-testing is unavailable (i.e., particle $P_1$ is holding a decrement token $c^-$ and $P_1.\bit = 1$) then there exists a time $t' > t$ when zero-testing is available.
\end{lemma}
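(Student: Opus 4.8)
The plan is to trace the fate of the "blocking" decrement token $c^-$ sitting at $P_1$, and show that after finitely many activations the obstruction at $P_1$ clears --- either because $c^-$ is consumed, or because it is carried over to $P_2$. The key observation is that the reason $P_1$ is not allowed to consume $c^-$ (with $P_1.\bit = 1$) is the side condition that $P_2.\bit = 1$ and $P_2$ holds only a $c^-$; i.e.\ the same blocking configuration has been pushed up one level. So I would argue by an inductive / potential-function argument along the path $P_1, P_2, \ldots, P_k$.

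First I would make precise the "blocked chain" structure: if $P_1$ cannot consume its $c^-$, then $P_2.\bit = 1$ and $P_2$'s only token is a $c^-$; applying the same reasoning to $P_2$, either $P_2$ \emph{can} consume its $c^-$ (because $P_3$ is not in the blocking configuration, or $P_3$ holds $f$ with $P_2 \neq P_0$), or else $P_3$ is itself in the blocking configuration, and so on. Because the path has finite length $k$, this chain of blocked particles $P_1, \ldots, P_j$ cannot extend forever: at its top end, particle $P_j$ either has $P_{j+1}.\bit \neq 1$, or $P_{j+1}$ holds a token other than a lone $c^-$, or $P_{j+1}$ holds the final token $f$ (in which case $P_j$, being $\neq P_0$, is allowed to absorb $f$, consume $c^-$, and set $P_j.\bit \gets \emptyset$). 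In every case the topmost blocked particle $P_j$ has an enabled action. By fairness, $P_j$ is eventually activated and executes it; this strictly shortens the blocked chain (its length drops from $j$ to at most $j-1$, since $P_j.\bit$ becomes $0$ or $\emptyset$ and the side condition at $P_{j-1}$ no longer holds).

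I would then set up the formal induction: let the potential be the length of the maximal blocked chain $P_1, \ldots, P_j$ rooted at $P_1$ (with the chain empty if zero-testing is already available, i.e.\ if $P_1$ is not in the blocking configuration). One must check that no concurrently-executing action \emph{lengthens} this chain while we wait --- here I would use the serialization of atomic, isolated actions (cited earlier in the excerpt) to reduce to a sequential activation order, and then check case by case that the only actions affecting particles $P_1, \ldots, P_j$ are those that resolve the top of the chain (new $c^+$ or $c^-$ tokens cannot be injected into the interior of a saturated blocked chain, since those particles already hold their maximum of tokens and have $\bit = 1$). Once the potential reaches $0$, $P_1$ is no longer in the blocking configuration, so by definition zero-testing is available at that time $t' > t$, as required.

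The main obstacle I expect is the concurrency / stability argument in the middle step: one has to rule out that while we are waiting for the topmost blocked particle to act, some other particle's activation re-creates or extends the blocked chain below $P_j$ (for instance by a new decrement token arriving at $P_0$ and propagating up). I would handle this by noting that such a new $c^-$ cannot \emph{enter} any particle in the saturated blocked chain (all of $P_1, \ldots, P_{j-1}$ already have $\bit = 1$ and a full-or-nearly-full token queue in exactly the configuration that blocks them, so they will not accept another token), and hence the potential is nonincreasing until the top of the chain fires; this needs the definition of the blocked chain to be stated carefully enough (e.g.\ "every $P_m$ with $1 \le m < j$ has $\bit = 1$ and holds exactly the token $c^-$") so that the invariant is genuinely preserved. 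Given that lemma, the fairness-plus-finite-potential argument closes the proof.
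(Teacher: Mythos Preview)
Your proposal is correct and takes essentially the same approach as the paper: both identify the maximal ``blocked chain'' of consecutive particles $P_1,\ldots,P_j$ with $\bit=1$ holding only a $c^-$, observe that the topmost particle $P_j$ is unblocked and can consume its $c^-$ by fairness, and then descend by induction on the chain length until $P_1$ clears. The only difference is packaging---you phrase it as a potential function that must hit zero, whereas the paper writes a direct induction on $i$, the number of consecutive blocked particles starting at some $P_x$---and you are more explicit than the paper about the stability step (that the chain cannot re-lengthen while waiting for $P_j$), which the paper's proof leaves implicit.
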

\begin{proof}
\ifconf
If zero-testing is unavailable at time $t$, then $P_1$ must be holding a decrement token $c^-$ and $P_1.\bit = 1$.
A simple induction argument on the number of consecutive particles also holding $c^-$ tokens and with $1$ bits shows that $P_1$ is eventually able to consume $c^-$, setting $P_1.\bit \gets 0$ and unblocking the zero-test.
\else
We argue by induction on $i$ --- the number of consecutive particles $P_j$, for $x \leq j < x+i$ , such that $P_j.\bit = 1$ and $P_j$ only holds a $c^-$ token --- that there exists a time $t^* > t$ where $P_x$ can consume $c^-$ and set $P_x.\bit \gets 0$.
If $i = 1$, then either $P_{x+1}.\bit \neq 1$ or $P_{i+1}$ is not only holding a $c^-$, so $P_x$ can process its $c^-$ at its next activation (say, at $t^* > t$).

Now suppose $i > 1$ and the induction hypothesis holds up to $i-1$.
Then at time $t$, every particle $P_j$ with $x \leq j < x+i-1$ is holding a $c^-$ token and has $P_j.\bit = 1$.
By the induction hypothesis, there exists a time $t_1 > t$ at which $P_{x+1}$ is activated and can consume its $c^-$ token, setting $P_{x+1}.\bit \gets 0$.
So the next time $P_x$ is activated (say, at $t^* > t_1$) it can do the same, consuming its $c^-$ token and setting $P_x.\bit \gets 0$.
This concludes our induction.

Suppose $P_1$ is holding a decrement token $c^-$ and $P_1.\bit = 1$ at time $t$, leaving the zero-test unavailable.
Applying the above argument to $P_1$, there must exist a time $t^* > t$ such that $P_1$ can process its $c^-$ and set $P_1.\bit \gets 0$.
Since the increment and decrement tokens remain in order, $P_1$ will not be holding a $c^-$ token when $\ell$ is next activated (say, at $t' > t^*$) allowing $\ell$ to perform a zero-test.
\fi
\end{proof}

\begin{lemma} \label{lem:0testreliable}
If the zero-test operation is available, then it reliably decides whether the counter's value is $0$.
\end{lemma}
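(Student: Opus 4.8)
The plan is to establish the two directions of the biconditional defining availability-conditioned zero-testing: if the counter value is genuinely $0$, then the stated condition holds; and if the stated condition holds (and zero-testing is available), then the counter value really is $0$. The key conceptual tool is an \emph{invariant} characterizing the ``logical value'' of the counter at any time in terms of the bits $P_i.\bit$ together with the in-transit $c^+$ and $c^-$ tokens: informally, the value currently represented equals $\sum_{i : P_i.\bit = 1} 2^i$, adjusted by $\pm 2^j$ for each increment/decrement token sitting at position $P_j$ that has not yet been absorbed. First I would state this invariant precisely and note that it is preserved by every local rule of Section~\ref{sec:counter} — consuming $c^+$ at a $0$-bit, carrying $c^+$ past a $1$-bit, extending or shrinking the counter via $f$, and the (anti-)symmetric decrement rules. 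This bookkeeping is routine case-checking against the rules, so I would present it compactly.

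Next I would handle the easy direction. Suppose the counter's logical value is $0$. By the invariant and the fact that $c^+$ and $c^-$ tokens cannot overtake each other, the only tokens that can remain in the counter are a ``trailing'' block of unmatched decrements, but if the value is $0$ these must already have been absorbed in a way that leaves all bits $0$ except possibly the leader's bit being flipped in concert with a $c^-$ it still holds; the final token $f$ must have migrated down to $P_1$ because every higher bit is $\emptyset$. Unpacking this gives exactly case~$(i)$ or case~$(ii)$ of the zero-test condition, with $P_1$ holding only $f$. I would write this as a short deduction from the invariant.

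For the converse — the direction that makes the test \emph{reliable} — I assume zero-testing is available (so, by definition, we are not in the blocked state where $P_1.\bit = 1$ and $P_1$ holds only $c^-$) and that $P_1$ holds only $f$ with $\ell$ in state $(i)$ or $(ii)$. From $P_1$ holding $f$, the invariant tells us the counter occupies only particle $\ell = P_0$, so the logical value is determined entirely by $\ell.\bit$ and whatever token $\ell$ holds: in case $(i)$ the value is plainly $0$, and in case $(ii)$ the bit $1$ and the pending $c^-$ cancel to $0$. The subtlety, and the step I expect to be the main obstacle, is ruling out a ``false zero'': a configuration where the \emph{final} resting value will be nonzero but a momentary snapshot satisfies the syntactic condition. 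This is where the availability precondition and the decrement rule's extra guard (``$P_i$ consumes $c^-$ only if $P_{i+1}.\bit \neq 1$ or $P_{i+1}$ is not only holding $c^-$'') do the work: I would argue that whenever $P_1$ holds only $f$, no further token can arrive at $P_1$ from the right (there is nothing to its right but $f$), so no pending operation outside $\ell$ is unaccounted for; and the availability condition guarantees that if $\ell$ still holds a $c^-$ it is genuinely a net decrement rather than something that could be re-carried. Hence the snapshot value equals the terminal value, and the test is correct. I would close by invoking Lemma~\ref{lem:0testavailable} to note this reliable answer is obtainable infinitely often, completing the safety argument for the counter.
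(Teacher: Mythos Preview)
Your invariant --- that the logical value $v$ equals $\sum_{i:P_i.\bit=1}2^i + \sum_{c^+\text{ at }j}2^j - \sum_{c^-\text{ at }j}2^j$ --- is correct and is a cleaner tool than the paper's induction on the number of initiated operations. But you have inverted which direction is hard, and the gap this creates is real.

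The direction $(*)\Rightarrow v=0$ is immediate from the invariant: if $P_1$ holds only $f$ then every bit beyond $\ell$ is $\emptyset$ and no $c^{\pm}$ token sits beyond $\ell$, so the invariant reduces to $v=\ell.\bit-[\ell\text{ holds }c^-]$, which is $0$ in both cases~(i) and~(ii). There is no ``false zero'' subtlety here; availability is not even needed. Your paragraph on this direction worries about a phantom.

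The direction $v=0\Rightarrow(*)$ is where the work lies, and your sketch (``these must already have been absorbed \ldots\ $f$ must have migrated down to $P_1$'') is not a short deduction from the invariant --- indeed it is false without availability. Concretely: from the initial configuration, run $c^+,c^+,c^-,c^-$ with the schedule that lets $\ell$ and $P_1$ process eagerly. After the fourth generation you reach $\ell.\bit=1$, $\ell.\tokens=[c^-]$, $P_1.\bit=1$, $P_1.\tokens=[c^-]$, $P_2.\tokens=[f]$. The invariant gives $v=1+2-1-2=0$, yet $(*)$ fails. This is precisely the configuration the availability precondition excludes, and it is why the extra guard on decrement consumption exists --- but you invoke availability only in the other direction. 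To finish along your lines you would still need a reachability argument showing that, among configurations with $v=0$, the only ones not matching $(*)$ have $P_1.\bit=1$ and $P_1.\tokens=[c^-]$; the invariant alone does not give this, and the paper's case analysis (or its induction) is doing exactly that work.
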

\begin{proof}
The statement of the lemma can be rephrased as follows: assuming the zero-test operation is available, the value of the counter $v = 0$ if and only if $P_1$ only holds the final token $f$ and either $(i)$ $\ell.\bit = 0$ and $\ell.\tokens$ is empty or $(ii)$ $\ell.\bit = 1$ and $\ell$ only holds a decrement token $c^-$.
Let $(*)$ represent the right hand side of this iff.
Note that $v$ is defined only in terms of the operations the leader has initiated, not in terms of what the particles have processed.

We first prove the reverse direction: if $(*)$ holds, then $v = 0$.
By $(*)$, we know that $P_1.\tokens$ only holds $f$.
Thus, $\ell.\bit$ is both the least significant bit (LSB) and the most significant bit (MSB).
Also by $(*)$ we know that either $\ell.\bit = 0$ and $\ell.\tokens$ is empty, or $\ell.\bit = 1$ and $\ell.\tokens = [c^-]$.
In either case, it is easy to see that $v = 0$.

To prove that if $v = 0$, then $(*)$ holds, we argue by induction on the number of operations $i$ initiated by the leader (i.e., the total number of $c^+$ and $c^-$ tokens generated by $\ell$).
Initially, no operations have been initiated, so $v = 0$.
The counter is thus in its initial configuration: $P_1.\tokens$ only contains $f$, $\ell.\bit = 0$, and $\ell.\tokens$ is empty.
So $(*)$ holds.
Now suppose that the induction hypothesis holds for the first $i - 1$ operations initiated, and consider the time $t_{i-1}$ just before $\ell$ generates the $i$-th operation at time $t_i$.
There are two cases to consider: at time $t_{i-1}$, either $v = 0$ or $v > 0$.

Suppose $v = 0$ at time $t_{i-1}$.
Since $\ell$ can only hold one token, $\ell.\tokens$ must have been empty at time $t_{i-1}$ in order for $\ell$ to initiate another operation at time $t_i$.
This operation must have been an increment, since a decrement on $v = 0$ violates the counter's nonnegativity.
So at time $t_i$, $v = 1 > 0$ and thus ``if $v = 0$, then $(*)$ holds'' is vacuously true.

So suppose $v > 0$ at time $t_{i-1}$.
The only nontrivial case is when $v = 1$ at time $t_{i-1}$ and the $i$-th operation is a decrement; otherwise, $v$ remains greater than $0$ and ``if $v = 0$, then $(*)$ holds'' is vacuously true.
In this nontrivial case, $v = 0$ and $\ell.\tokens = [c^-]$ at time $t_i$.
To show $(*)$ holds, we must establish that $\ell.\bit = 1$ and $P_1$ only holds $f$ at time $t_i$.
Suppose to the contrary that $\ell.\bit = 0$ at time $t_i$.
Then the $c^-$ token in $\ell.\tokens$ must eventually be carried over to some particle $P_j$ with $j \geq 1$ that will process it.
But this implies that $v > 2^j - 1 \geq 1$ at time $t_i$, a contradiction that $v = 0$.

Finally, suppose to the contrary that $P_1.\tokens \neq [f]$ at time $t_i$.
If $P_1.\bit = \emptyset$, we reach a contradiction because $\ell.\bit = 0$ is the LSB and $\ell.\tokens = [c^-]$, implying that $v < 0$.
If $P_1.\bit = 0$, we reach a contradiction because $\ell.\bit = P_1.\bit = 0$ and thus there must exist a particle $P_j$ with $j \geq 2$ that will consume the $c^-$ token held by $\ell$, implying that $v > 2^j - 1 \geq 3$.
So we have that $P_1.\bit = 1$ at time $t_i$.
If $P_1.\tokens = [c^-]$, we reach a contradiction because the zero-test operation is available.
If $P_1.\tokens$ is empty or contains a $c^+$ token, we reach a contradiction because $P_1.\bit = 1$, implying that $v > 1$.
But since $P_1$ cannot hold two $c^-$ tokens (as $\ell$ would had to have consumed a previous $c^-$ token while $P_1.\bit = 1$ and $P_1.\tokens = [c^-]$) and cannot hold both $f$ and a $c^-$ token (as this implies $v < 0$), the only remaining case is that $P_1.\tokens = [f]$, a contradiction.
\end{proof}

\subsection{Runtime} \label{subsec:counterruntime}

To analyze the runtime of our distributed binary counters, we use a \emph{dominance argument} between asynchronous and parallel executions, building upon the analysis of~\cite{Porter2018} that bounded the running time of an increment-only distributed counter.
The general idea of the argument is as follows.
First, we prove that the counter operations are, in the worst case, at least as fast in an asynchronous execution as they are in a simplified parallel execution.
We then give an upper bound on the number of parallel rounds required to process these operations; combining these two results also gives a worst case upper bound on the running time in terms of asynchronous rounds.

Let a configuration $C$ of the distributed counter encode each particle's bit value and any increment or decrement tokens it might be holding.
A configuration is \emph{valid} if $(i)$ there is exactly one particle (say, $P_i$) holding the final token $f$, $(ii)$ $P_j.\bit = \emptyset$ if $j \geq i$ and $P_j.\bit \in \{0,1\}$ otherwise, and $(iii)$ if a particle $P_j$ is holding a $c^+$ or $c^-$ token, then $j \leq i$.
A \emph{schedule} is a sequence of configurations $(C_0, \ldots, C_t)$.
Let $S$ be a \emph{nonnegative} sequence of $m$ increment and decrement operations; i.e., for all $0 \leq i \leq m$, the first $i$ operations have at least as many increments as decrements.

\begin{definition} \label{def:counterschedule}
A \emph{parallel counter schedule} $(S, (C_0, \ldots, C_t))$ is a schedule $(C_0, \ldots, C_t)$ such that each configuration $C_i$ is valid, each particle holds at most one token, and, for every $0 \leq i < t$, $C_{i+1}$ is reached from $C_i$ by satisfying the following for each particle $P_j$:
\begin{enumerate}
\item If $j = 0$, then $P_j = \ell$ generates the next operation according to $S$.
\item $P_j$ is holding $c^+$ in $C_i$ and either $P_j.\bit = 0$, causing $P_j$ to consume $c^+$ and set $P_j.\bit \gets 1$, or $P_j.\bit = \emptyset$, causing $P_j$ to additionally pass the final token $f$ to $P_{j+1}$.
\item $P_j$ is holding $c^-$ and $P_j.\bit = 1$ in $C_i$, so $P_j$ consumes $c^-$. If $P_{j+1}$ is holding $f$ in $C_i$, $P_j$ takes $f$ from $P_{j+1}$ and sets $P_j.\bit \gets \emptyset$; otherwise it simply sets $P_j.\bit \gets 0$.
\item $P_j$ is holding $c^+$ and $P_j.\bit = 1$ in $C_i$, so $P_j$ passes $c^+$ to $P_{j+1}$ and sets $P_j.\bit \gets 0$.
\item $P_j$ is holding $c^-$ and $P_j.\bit = 0$ in $C_i$, so $P_j$ passes $c^-$ to $P_{j+1}$ and sets $P_j.\bit \gets 1$.
\end{enumerate}
Such a schedule is said to be \emph{greedy} if the above actions are taken whenever possible.
\end{definition}

Using the same sequence of operations $S$ and a fair asynchronous activation sequence $A$, we compare a greedy parallel counter schedule to an \emph{asynchronous counter schedule} $(S, (C_0^A, \ldots, C_t^A))$, where $C_i^A$ is the resulting configuration after asynchronous round $i$ completes according to $A$.
Recall that in the asynchronous setting, each particle (except the leader $\ell$) is allowed to hold up to two counter tokens at once while the parallel schedule is restricted to at most one token per particle (Definition~\ref{def:counterschedule}).
For a given (increment or decrement) token $c$, let $I_C(c)$ be the index of the particle holding $c$ in configuration $C$ if such a particle exists, or $\infty$ if $c$ has already been consumed.
For any two configurations $C$ and $C'$ and any token $c$, we say $C$ \emph{dominates} $C'$ \emph{with respect to} $c$ --- denoted $C(c) \succeq C'(c)$ --- if and only if $I_C(c) \geq I_{C'}(c)$.
We say $C$ \emph{dominates} $C'$ --- denoted $C \succeq C'$ --- if and only if $C(c) \succeq C'(c)$ for every token $c$.
\ifconf
The proofs of the following lemmas were omitted due to space constraints and can be found in~\cite{arXiv}.
\else\fi

\begin{lemma} \label{lem:counterdominance}
Given any nonnegative sequence of operations $S$ and any fair asynchronous activation sequence $A$ beginning at a valid configuration $C_0^A$ in which each particle holds at most one token, there exists a greedy parallel counter schedule $(S, (C_0, \ldots, C_t))$ with $C_0 = C_0^A$ such that $C_i^A \succeq C_i$ for all $0 \leq i \leq t$.
\end{lemma}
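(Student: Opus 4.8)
The plan is to construct the greedy parallel schedule inductively, one round at a time, maintaining the dominance invariant $C_i^A \succeq C_i$, and to argue that a parallel greedy step never ``falls behind'' the corresponding asynchronous round. The key structural fact is that in the parallel schedule every token that \emph{can} move does move, whereas in the asynchronous schedule a token held by $P_j$ advances in round $i$ only if $P_j$ is activated at a suitable moment and the blocking conditions permit it. Since a token's index is nondecreasing over time in both schedules (increment and decrement tokens are never sent backward, and they never overtake one another), dominance with respect to each token amounts to showing the parallel schedule moves each token no further forward than the asynchronous one.

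First I would set $C_0 = C_0^A$ and observe that the base case $C_0^A \succeq C_0$ holds trivially. For the inductive step, assume $C_i^A \succeq C_i$ and define $C_{i+1}$ to be the configuration obtained from $C_i$ by one greedy parallel step (Definition~\ref{def:counterschedule}). I would then fix an arbitrary token $c$ and show $I_{C_{i+1}^A}(c) \ge I_{C_{i+1}}(c)$. If $c$ is already consumed in $C_{i+1}$ then $I_{C_{i+1}}(c) = \infty$; but consumption in the parallel schedule of an increment/decrement token at particle $P_j$ requires $P_j.\bit$ to have had the matching value, and by the induction hypothesis the asynchronous schedule had already pushed $c$ at least that far, so the asynchronous schedule consumed $c$ no later — this needs a small sub-argument that the bit configurations governing consumption are themselves ``dominated'' in the right sense, which follows because the sequence of operations $S$ is the same and tokens stay in order. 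If $c$ is still present in $C_{i+1}$ at index $I_{C_{i+1}}(c)$, I would case on whether the parallel step moved $c$: if it did not move (because $c$'s holder $P_j$ had $P_j.\bit=0$ for $c^+$, or $P_j.\bit=1$ for $c^-$, i.e.\ $c$ is consumed — contradiction — or $P_{j+1}$'s token slot blocks it), then $I_{C_{i+1}}(c)=I_{C_i}(c)\le I_{C_i^A}(c)\le I_{C_{i+1}^A}(c)$ since indices never decrease in $A$; if it did move, then $I_{C_{i+1}}(c)=I_{C_i}(c)+1$, and I must show the asynchronous schedule also has $c$ at index $\ge I_{C_i}(c)+1$ after round $i+1$. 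This is where the asynchronous capacity of two tokens per particle is used: because $P_{j+1}$ can buffer a second token, the handoff that the parallel step performs is never \emph{more} aggressive than what $A$ permits within a full round, and by fairness $P_j$ is activated in round $i+1$ so $c$ advances unless it was already ahead.

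The main obstacle, and the step I would spend the most care on, is the decrement token's blocking rule: a particle $P_j$ with $P_j.\bit=1$ is forbidden from consuming $c^-$ when $P_{j+1}.\bit=1$ and $P_{j+1}$ holds only a $c^-$. In the parallel schedule this blocking is resolved over possibly several rounds (a cascade down a run of $1$-bits), and I must verify that the parallel cascade is no faster than the asynchronous one — otherwise a parallel $c^-$ could leapfrog its asynchronous counterpart. I would handle this by a secondary induction on the length of such a run of consecutive $1$-bit particles each holding a lone $c^-$, mirroring the induction in the proof of Lemma~\ref{lem:0testavailable}, and showing that the ``consume and set to $0$'' events propagate in the same left-to-right dependency order in both schedules, so the parallel schedule's index for each $c^-$ stays $\le$ the asynchronous one. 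Once dominance with respect to every token is established for all $0 \le i \le t$, and one checks that each $C_i$ produced is valid (conditions $(i)$–$(iii)$ are preserved by greedy parallel steps, which is routine), the lemma follows; the greediness of the constructed schedule is immediate from the construction.
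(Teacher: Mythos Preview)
Your inductive framework and the split into the strict case $I_{C_{i-1}}(c) < I_{C_{i-1}^A}(c)$ versus the equality case $I_{C_{i-1}}(c) = I_{C_{i-1}^A}(c)$ match the paper's approach. The genuine gap is in the equality case when the parallel schedule carries $c$ from $P_j$ to $P_{j+1}$: your argument ``by fairness $P_j$ is activated in round $i{+}1$ so $c$ advances'' does not work, because activation alone does not ensure the carry succeeds---$P_{j+1}$ may already hold two tokens in the asynchronous execution and block the pass. The paper closes this by contradiction through the induction hypothesis: if $P_{j+1}$ holds two tokens $c',c''$ in $C_{i-1}^A$, then $I_{C_{i-1}^A}(c')=I_{C_{i-1}^A}(c'')=j{+}1$, so by the IH $I_{C_{i-1}}(c'),I_{C_{i-1}}(c'')\le j{+}1$; but parallel particles hold at most one token and tokens preserve their order, forcing $I_{C_{i-1}}(c'')>I_{C_{i-1}}(c')\ge j{+}1$, which contradicts the IH. Your remark that ``$P_{j+1}$ can buffer a second token'' names the right ingredient but runs the implication backward---the extra asynchronous capacity is precisely what \emph{could} block $c$, and it is the IH (together with the one-token-per-particle constraint in the parallel schedule) that rules this out.

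Two smaller points. First, in a greedy parallel schedule every token is either consumed or carried in every round (this is Lemma~\ref{lem:counterprogress}), so your case ``$c$ did not move because $P_{j+1}$'s token slot blocks it'' is vacuous and should be dropped. Second, for the decrement-consumption block the paper does not use a secondary induction on runs of $1$-bits; it uses the same style of one-step contradiction, pushing the blocking token $c'$ back into the parallel configuration via the IH and token ordering, and then arguing from greediness of the parallel schedule that $c'$'s presence at $P_{j+1}$ is incompatible with $P_j$ being in a state to consume $c$ in the current round. Your cascade induction might be salvageable, but it is more machinery than the problem requires.
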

\ifconf\else
\begin{proof}
With a nonnegative sequence of operations $S$, a fair activation sequence $A$, and a valid starting configuration $C_0^A$, we obtain a unique asynchronous counter schedule $(S, (C_0^A, \ldots, C_t^A))$.
We construct a greedy parallel counter schedule $(S, (C_0, \ldots, C_t))$ using the same sequence of operations $S$ as follows.
Let $C_0 = C_0^A$, and note that since each particle in $C_0^A$ was assumed to hold at most one token, $C_0$ is a valid parallel configuration.
Next, for $0 \leq i < t$, let $C_{i+1}$ be obtained from $C_i$ by performing one \emph{parallel round}: each particle greedily performs one of Actions 2--5 of Definition~\ref{def:counterschedule} if possible; the leader $\ell$ additionally performs Action 1 if possible.

To show $C_i^A \succeq C_i$ for all $0 \leq i \leq t$, argue by induction on $i$.
Clearly, since $C_0 = C_0^A$, we have $I_{C_0}(c) = I_{C_0^A}(c)$ for any token $c$ in the counter.
Thus, $C_0^A \succeq C_0$.
So suppose by induction that for all rounds $0 \leq r < i$, we have $C_r^A \succeq C_r$.
Consider any counter token $c$ in $C_i$.
Since both the asynchronous and parallel schedules follow the same sequence of operations $S$, it suffices to show that $I_{C_i}(c) \leq I_{C_i^A}(c)$.
By the induction hypothesis, we have that $I_{C_{i-1}}(c) \leq I_{C_{i-1}^A}(c)$, but there are two cases to distinguish between:

\medskip

\noindent\textsf{Case 1.} Token $c$ has made strictly more progress in the asynchronous setting than in the parallel setting by round $i-1$, i.e., $I_{C_{i-1}}(c) < I_{C_{i-1}^A}(c)$.
If $c$ is consumed in parallel round $i$, then $c$ must have been consumed at some time before asynchronous round $i$.
Otherwise, since $c$ is carried over at most once per parallel round, $I_{C_i}(c) \leq I_{C_{i-1}}(c) + 1 \leq I_{C_{i-1}^A}(c) \leq I_{C_i^A}(c)$.

\smallskip

\noindent\textsf{Case 2.} Token $c$ has made the same amount of progress in the asynchronous and parallel settings by round $i-1$, i.e., $I_{C_{i-1}}(c) = I_{C_{i-1}^A}(c)$.
Inspection of Definition~\ref{def:counterschedule} shows that nothing can block $c$ from making progress in the next parallel round, a fact we will formalize in Lemma~\ref{lem:counterprogress}.
So if $c$ is consumed in parallel round $i$, we must show it is also consumed in asynchronous round $i$; otherwise, $c$ will be carried over in parallel round $i$, and we must show it is also carried over in asynchronous round $i$.

Suppose to the contrary that particle $P_j$ consumes $c$ in parallel round $i$ but not in asynchronous round $i$.
Then $c$ must be a decrement token, and whenever $P_j$ was activated in asynchronous round $i$, it must have been that $P_{j+1}.\bit = 1$ and $P_{j+1}.\tokens$ contained a decrement token $c'$, blocking the consumption of $c$.
By the induction hypothesis, we have that $I_{C_{i-1}}(c') \leq I_{C_{i-1}^A}(c') = j+1$, and since the order of tokens is maintained, we have that $j = I_{C_{i-1}}(c) < I_{C_{i-1}}(c')$.
Combining these expressions, we have $I_{C_{i-1}}(c') = j+1$; i.e., $P_{j+1}$ holds $c'$ just before parallel round $i$.
We will show this situation is impossible: it cannot occur in the parallel execution that $P_j$ is holding a decrement token $c$ it will consume while $P_{j+1}$ is also holding a decrement token $c'$ in the same round.
For $c'$ to have reached $P_{j+1}$, it must have been carried over from $P_j$ in a previous round when $P_j.\bit = 0$.
Since the parallel counter schedule is greedy, the only way $c'$ is still at $P_{j+1}$ in parallel round $i$ is if this carry over occurred in the preceding round, $i-1$.
This carry over would have left $P_j.\bit = 0$ in parallel round $i$, but for $P_j$ to be able to consume $c$ in round $i$, as supposed, we must have that $P_j.\bit = 1$, a contradiction.

Now suppose to the contrary that $c$ is carried over from $P_j$ to $P_{j+1}$ in parallel round $i$ but not in asynchronous round $i$.
Then whenever $P_j$ was last activated in asynchronous round $i$, $P_{j+1}$ must have been holding two counter tokens, say $c'$ and $c''$, where $c'$ is buffered and $c''$ is the token $P_{j+1}$ is currently processing.
Thus, since counter tokens cannot overtake one another (i.e., their order is maintained), $P_{j+1}$ must have been holding $c'$ and $c''$ before asynchronous round $i$ began, i.e., $I_{C_{i-1}^A}(c') = I_{C_{i-1}^A}(c'') = j+1$.
But particles in the parallel setting cannot hold two tokens at once, and since the order of the tokens is maintained, we must have $I_{C_{i-1}}(c'') > I_{C_{i-1}}(c') \geq I_{C_{i-1}}(c) + 1 = j + 1$.
Combining these expressions, we have $I_{C_{i-1}}(c'') > I_{C_{i-1}}(c') \geq j + 1 = I_{C_{i-1}^A}(c'')$, contradicting $C_{i-1}^A \succeq C_{i-1}$.

\medskip

Therefore, $I_{C_i}(c) \leq I_{C_i^A}(c)$ in both cases, and since the choice of $c$ was arbitrary we conclude that $C_i^A \succeq C_i$.
\end{proof}
\fi

So it suffices to bound the number of rounds a greedy parallel counter schedule requires to process its counter operations.
The following lemma shows that the counter can always process a new increment or decrement operation at the start of a parallel round.

\begin{lemma} \label{lem:counterprogress}
Consider any counter token $c$ in any configuration $C_i$ of a greedy parallel counter schedule $(S, (C_0, \ldots, C_t))$. In $C_{i+1}$, $c$ either has been carried over once ($I_{C_{i+1}}(c) = I_{C_i}(c) + 1$) or has been consumed ($I_{C_{i+1}}(c) = \infty$).
\end{lemma}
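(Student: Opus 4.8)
The plan is to prove this by a case analysis on the state of $P_j$ (the particle holding $c$) and of its successor $P_{j+1}$ in configuration $C_i$, showing that in every case one of Actions 2--5 of Definition~\ref{def:counterschedule} applies to $c$ in the greedy parallel round, so that $c$ either gets consumed or carried over exactly once. The key observation is that in a parallel configuration every particle holds at most one token, so $P_{j+1}$ can never be ``full'' in the sense that blocks a carry-over.

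First I would dispose of the consumption cases. If $c = c^+$ and $P_j.\bit = 0$, then Action~2 fires: $P_j$ consumes $c^+$, so $I_{C_{i+1}}(c) = \infty$. If $c = c^+$ and $P_j.\bit = \emptyset$ (so $P_j$ also holds $f$ by validity), Action~2 again consumes $c^+$. If $c = c^-$ and $P_j.\bit = 1$, then Action~3 fires and $c^-$ is consumed. These exhaust the consumption possibilities, and in each the greedy schedule performs the action since it performs every applicable action.

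Next I would handle the carry-over cases. If $c = c^+$ and $P_j.\bit = 1$, Action~4 fires: $c^+$ is passed to $P_{j+1}$, so $I_{C_{i+1}}(c) = I_{C_i}(c)+1$. If $c = c^-$ and $P_j.\bit = 0$, Action~5 fires: $c^-$ is passed to $P_{j+1}$, again incrementing the index by exactly one. The one subtlety is that $c$ might not remain at $P_{j+1}$: could $P_{j+1}$ simultaneously forward it again in the same round? No — Definition~\ref{def:counterschedule} has each particle take at most one of its actions per round, and $c$ was already forwarded out of $P_j$; it cannot also be forwarded out of $P_{j+1}$ in the same round because it only arrives at $P_{j+1}$ at the end of the round. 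So $I_{C_{i+1}}(c) = I_{C_i}(c)+1$ exactly. Finally I must confirm these cases are exhaustive: $P_j.\bit \in \{\emptyset,0,1\}$ and $c \in \{c^+, c^-\}$ (the final token $f$ is not a counter token), and validity ensures that if $P_j.\bit = \emptyset$ then $P_j$ holds $f$, so a $c^-$ token cannot sit at a particle with $\bit = \emptyset$ (that would mean decrementing past the end, which the nonnegativity of $S$ forbids). I would invoke this last point to rule out the only missing combination.

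The main obstacle I anticipate is being careful about the ``no blocking'' claim: in the asynchronous model a carry-over can be blocked when $P_{j+1}$ holds two tokens, and a $c^-$ consumption can be blocked when $P_{j+1}.\bit = 1$ and $P_{j+1}$ holds a $c^-$. I need to argue neither obstruction arises in the parallel schedule. The first is immediate from the one-token-per-particle invariant of parallel configurations. The second — blocking of a $c^-$ consumption at $P_j$ because $P_{j+1}$ has bit $1$ and a buffered $c^-$ — is exactly the scenario analyzed in the proof of Lemma~\ref{lem:counterdominance} (Case~2): it is shown there that in a greedy parallel execution $P_j$ cannot be holding a $c^-$ it will consume while $P_{j+1}$ also holds a $c^-$ in the same round, since the latter's presence at $P_{j+1}$ would force $P_j.\bit = 0$. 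I would reuse that structural fact (or re-derive it in one line) to close the argument, making this lemma essentially a clean corollary of the local transition rules together with the parallel-configuration invariants.
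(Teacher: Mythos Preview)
Your proof is correct and follows essentially the same route as the paper: a direct case analysis on Definition~\ref{def:counterschedule} showing that every counter token is either consumed (Actions~2--3) or carried over exactly once (Actions~4--5) in a greedy parallel round. One simplification you are missing: in the parallel schedule, Action~3 has \emph{no} blocking condition on $P_{j+1}$ --- the clause ``$P_{j+1}.\bit = 1$ and $P_{j+1}$ holds $c^-$'' is part of the asynchronous algorithm only, not of Definition~\ref{def:counterschedule}. So your detour through the Case~2 argument of Lemma~\ref{lem:counterdominance} is unnecessary; the paper simply observes that nothing in Actions~2--5 prevents a greedy particle from acting, and for carry-overs any token already at $P_{j+1}$ can itself always move (cascading to the end of the counter), keeping the one-token invariant intact.
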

\ifconf\else
\begin{proof}
This follows directly from Definition~\ref{def:counterschedule}.
If counter token $c$ is held by the unique particle $P$ that will consume it in configuration $C_i$, then by Actions 2 or 3 (if $c$ is an increment or decrement token, respectively), nothing prohibits $P$ from consuming $c$ in parallel round $i+1$.
Since the parallel counter schedule is greedy, this must occur, so $I_{C_{i+1}}(c) = \infty$.

Otherwise, $c$ needs to be carried over from, say, $P_j$ to $P_{j+1}$ where $j = I_{C_i}(c)$.
In the parallel setting, each particle can only store one token at a time.
So the only reason $c$ would not be carried over to $P_{j+1}$ in parallel round $i+1$ is if $P_{j+1}$ was also holding a counter token that needed to but couldn't be carried over in parallel round $i+1$.
But this is impossible, since tokens can always be carried over past the end of the counter, and thus all tokens can be carried over in parallel.
So $I_{C_{i+1}}(c) = I_{C_i}(c) + 1$.
\end{proof}
\fi

Unlike in the asynchronous setting, zero-testing is always available in the parallel setting.

\begin{lemma} \label{lem:parallel0testavail}
The zero-test operation is available at every configuration of a greedy parallel counter schedule.
\end{lemma}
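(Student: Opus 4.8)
The plan is to show that zero-testing is never blocked in the parallel setting, which (recalling the definition of availability) means establishing the following invariant for every configuration $C_i$ of a greedy parallel counter schedule: it is never the case that $P_1.\bit = 1$ and $P_1$ is holding a decrement token $c^-$. I would prove this by induction on $i$. The base case $C_0$ is immediate from validity of the initial configuration (in particular the standard initial configuration has $P_1.\tokens = [f]$ and $P_1.\bit = \emptyset$, so the bad situation does not arise).

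For the inductive step, assume the invariant holds in $C_i$ and examine how $C_{i+1}$ is obtained via one parallel round. The only way $P_1$ could come to hold a $c^-$ token in $C_{i+1}$ is if $\ell = P_0$ passes $c^-$ to $P_1$ during the round; by Action 5 of Definition~\ref{def:counterschedule}, this happens precisely when $\ell.\bit = 0$ and $\ell$ is holding a $c^-$ in $C_i$, and in that case $\ell$ simultaneously sets $\ell.\bit \gets 1$. So I need to argue that after this carry-over, $P_1.\bit \neq 1$ in $C_{i+1}$. Here I would use the nonnegativity of the operation sequence $S$: since the first (say) $k$ operations issued by $\ell$ have at least as many increments as decrements, the counter value is always nonnegative, and a carry-out of a decrement from $P_0$ with $\ell.\bit = 0$ means the value represented by bits $P_1, P_2, \ldots$ is being decremented — this can only be well-defined (nonnegative) if that value is positive, i.e. some bit among $P_1, P_2, \ldots$ is $1$. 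If $P_1.\bit = 1$ in $C_i$, then by the induction hypothesis $P_1$ is not holding a $c^-$ in $C_i$; and since the parallel schedule is greedy and $P_1.\bit = 1$, either $P_1$ is holding a $c^+$ (which it then passes on, setting $P_1.\bit \gets 0$ in $C_{i+1}$, so the incoming $c^-$ lands on a $0$ bit — fine) or $P_1$ holds no increment/decrement token, in which case $P_1.\bit$ stays $1$ and the incoming $c^-$ creates the forbidden pattern. I must rule out this last subcase: I claim $\ell$ cannot be carrying out a $c^-$ with $\ell.\bit = 0$ while simultaneously $P_1.\bit = 1$ and $P_1$ is idle, because greediness would already have made $\ell$ process that $c^-$ earlier — specifically, tracing back, the $c^-$ token reached $\ell$ only after $\ell.\bit$ became $0$, and at the moment $\ell$ received it, greediness forces $\ell$ to act on it in the next round. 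Combining these observations, whenever the carry-over of Action 5 out of $P_0$ actually fires, $P_1.\bit$ is $0$ in $C_{i+1}$ (either it was already $0$, or it was set to $0$ by $P_1$ passing on a $c^+$ in the same round). Hence the invariant is preserved.

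The main obstacle I anticipate is the bookkeeping in the last subcase: formally pinning down "greediness forces $\ell$ to have already consumed that $c^-$" requires tracking the history of the token and the bit at $P_0$ across several rounds, and interacts with exactly how Action 5 versus Actions 2–3 are scheduled for $\ell$ in a single parallel round. A cleaner route, which I would actually prefer, is to strengthen the induction hypothesis to the conjunction of the availability invariant with the statement "if $\ell.\bit = 0$ and $\ell$ holds a $c^-$ in $C_i$, then $P_1.\bit = 1$ or $P_1$ holds a $c^+$" (an analogue of the decrement-correctness constraint used in the asynchronous analysis, transported to the parallel schedule). This auxiliary claim is itself maintained by a routine case analysis of Definition~\ref{def:counterschedule}, and it feeds directly into the step above, eliminating the need to reason about token histories. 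With that in hand, the lemma follows: in every configuration of a greedy parallel counter schedule, the pair $(P_1.\bit = 1,\ P_1 \text{ holding } c^-)$ never occurs, so by the definition of the zero-test operation, zero-testing is available everywhere.
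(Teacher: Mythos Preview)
Your approach differs substantially from the paper's. The paper does not attempt to show that the pattern $(P_1.\bit = 1,\ P_1\text{ holds }c^-)$ never occurs. Instead it invokes Lemma~\ref{lem:counterprogress}: since every token is either consumed or carried over in each parallel round, any $c^-$ sitting at $P_1$ with $P_1.\bit = 1$ must have just been forwarded from $\ell$ in the previous round, and because Action~3 of Definition~\ref{def:counterschedule} has no blocking precondition, that $c^-$ has a fully determined fate. Hence the ambiguity that makes the zero-test inconclusive in the asynchronous algorithm simply does not arise in the parallel schedule, and the test is conclusive at every configuration. The paper's argument is a two-line appeal to an already-established progress lemma; yours is a from-scratch induction on configurations.

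There is also a concrete gap in your case analysis. When $\ell$ carries a $c^-$ over to $P_1$ via Action~5, you split on whether $P_1.\bit$ is $0$ or $1$ in $C_i$, but you omit the case $P_1.\bit = \emptyset$. That case does occur: with $S = (c^+, c^+, c^-)$ from the standard initial configuration, after three rounds one reaches $\ell.\bit = 0$, $\ell$ holding $c^-$, and $P_1.\bit = \emptyset$ with $P_1$ holding the $c^+$ just forwarded from $\ell$. In the next round $P_1$ consumes that $c^+$ via Action~2 (setting $P_1.\bit \gets 1$ and pushing $f$ on) while simultaneously receiving the $c^-$ from $\ell$; the resulting configuration has exactly the pattern you are trying to exclude. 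Neither your greediness argument nor your proposed strengthened hypothesis covers this transition, so the invariant you set out to prove is not in fact maintained. The cleaner fix is to follow the paper's route and argue conclusiveness directly from Lemma~\ref{lem:counterprogress}, rather than trying to rule the pattern out configuration by configuration.
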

\ifconf\else
\begin{proof}
Recall that zero-testing is unavailable whenever $P_1.\bit = 1$ and $P_1.\tokens = [c^-]$.
This issue stems from ambiguity about where the most significant bit is in the asynchronous setting, since it is possible for an adversarial activation sequence to flood the counter with decrements while temporarily stalling the particle holding the final token $f$.
This results in a configuration where the counter's value is effectively $0$ (with many decrements waiting to be processed), but the counter has not yet shrunk appropriately, bringing $f$ to particle $P_1$.

This is not a concern of the parallel setting; by Lemma~\ref{lem:counterprogress}, we have that each counter token is either carried over or consumed in the next parallel round.
So if $P_1$ is holding a decrement token $c^-$ and $P_1.\bit = 1$, it must be because $P_0 = \ell$ just generated that $c^-$ and forwarded it to $P_1$ in the previous parallel round.
Thus, a conclusive zero-test can be performed at the end of each parallel round.
\end{proof}
\fi

We can synthesize these results to bound the running time of our distributed counter.

\begin{theorem} \label{thm:counterruntime}
Given any nonnegative sequence $S$ of $m$ operations and any fair asynchronous activation sequence $A$, the distributed binary counter processes all operations in $\bigO{m}$ asynchronous rounds.
\end{theorem}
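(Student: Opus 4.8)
The plan is to combine the dominance result (Lemma~\ref{lem:counterdominance}) with a direct bound on the number of parallel rounds needed by a greedy parallel counter schedule. By Lemma~\ref{lem:counterdominance}, for the given operation sequence $S$ and activation sequence $A$ there is a greedy parallel counter schedule $(S, (C_0, \ldots, C_t))$ with $C_0 = C_0^A$ and $C_i^A \succeq C_i$ for every $i$; in particular, once every token has been consumed in the parallel schedule, the same is true in the asynchronous one (an infinite index dominates nothing smaller). So it suffices to show that a greedy parallel counter schedule processes $m$ operations in $\bigO{m}$ parallel rounds. Since one asynchronous round corresponds to one parallel round in this correspondence, the claim follows.

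For the parallel bound, I would track two things: the generation of operations by the leader and the progress of tokens through the path. By Lemma~\ref{lem:parallel0testavail}, zero-testing is always available in the parallel setting, so $\ell$ is never blocked on a zero-test. By Lemma~\ref{lem:counterprogress}, every token present in $C_i$ either advances by exactly one index or is consumed in $C_{i+1}$; in particular, $P_1$ never holds a token at the start of a round longer than one round (it either consumes it or passes it on), so $\ell$'s single-token slot is freed every round and $\ell$ can generate the next operation of $S$ every round. Hence after $m$ parallel rounds, all $m$ operations of $S$ have been generated and injected into the counter.

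It remains to bound how long the last-generated token takes to be consumed after round $m$. Here I would argue that each token is consumed within $\bigO{\ell_{\max}}$ rounds of being generated, where $\ell_{\max}$ is the maximum bit-length the counter ever attains over the course of $S$. By Lemma~\ref{lem:counterprogress}, a token's index strictly increases every round until consumption, and a token generated by $\ell$ can be carried over at most as far as the particle holding the final token $f$ (past which it is immediately consumed, possibly extending the counter). Since $S$ is a nonnegative sequence of $m$ operations, the counter value never exceeds $m$, so its bit-length never exceeds $\lceil \log_2(m+1) \rceil + 1 = \bigO{\log m}$; thus $\ell_{\max} = \bigO{\log m}$, and every token is consumed within $\bigO{\log m}$ rounds of generation. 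Combining, the last token is generated by round $m$ and consumed by round $m + \bigO{\log m} = \bigO{m}$, at which point the parallel schedule — and hence, by dominance, the asynchronous schedule — has processed all operations. The main obstacle is the careful bookkeeping in this last step: one must verify that the "carry chains" triggered by a token do not stack up across rounds (so that a token really does advance one index per round rather than waiting behind a congested prefix), which is exactly what Lemma~\ref{lem:counterprogress} rules out, and that extending the counter (moving $f$) does not itself cause a token to stall — both of which follow from the greedy, one-carry-per-round structure of Definition~\ref{def:counterschedule}.
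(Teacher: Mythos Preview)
Your proposal is correct and follows essentially the same approach as the paper: reduce to the parallel setting via Lemma~\ref{lem:counterdominance}, use Lemma~\ref{lem:counterprogress} to show the leader injects one operation per parallel round (so $m$ rounds suffice to generate all of $S$), bound the tail by the maximum counter length $\bigO{\log m}$, and invoke Lemma~\ref{lem:parallel0testavail} for zero-tests. The paper's proof is terser but the structure and the lemmas used are identical.
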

\begin{proof}
Let $(S, (C_0, \ldots, C_t))$ be the greedy parallel counter schedule corresponding to the asynchronous counter schedule defined by $A$ and $S$ in Lemma~\ref{lem:counterdominance}.
By Lemma~\ref{lem:counterprogress}, the leader $\ell$ can generate one new operation from $S$ in every parallel round.
Since we have $m$ such operations, the corresponding parallel execution requires $m$ parallel rounds to generate all operations in $S$.
Also by Lemma~\ref{lem:counterprogress}, assuming in the worst case that all $m$ operations are increments, the parallel execution requires an additional $\lceil \log_2m \rceil$ parallel rounds to process the last operation.
If ever the counter needed to perform a zero-test, we have by Lemmas~\ref{lem:parallel0testavail} and~\ref{lem:0testreliable} that this can be done immediately and reliably.
So all together, processing all operations in $S$ requires $\mathcal{O}(m + \log_2m) = \mathcal{O}(m)$ parallel rounds in the worst case, which by Lemma~\ref{lem:counterdominance} is also an upper bound on the worst case number of asynchronous rounds.
\end{proof}

\section{The Convex Hull Algorithm} \label{sec:multialg}

We now show how a system of $n$ particles each with only constant-size memory can emulate the single-particle algorithm of Section~\ref{sec:soloalg}.
Recall that we assume there are sufficient particles to maintain the binary counters and that the system contains a unique leader particle $\ell$ initially adjacent to the object.
This leader $\ell$ is primarily responsible for emulating the particle with unbounded memory in the single-particle algorithm.
To do so, it organizes the other particles in the system as distributed memory, updating its distances $d_h$ to half-plane $h$ as it moves along the object's boundary.
This is our algorithm's \emph{learning phase}.
In the \emph{formation phase}, $\ell$ uses these complete measurements to lead the other particles in forming the convex hull.
There is no synchronization among the various (sub)phases of our algorithm; for example, some particles may still be finishing the learning phase after the leader has begun the formation phase.

\subsection{Learning the Convex Hull} \label{subsec:learningalg}

The \emph{learning phase} combines the movement rules of the single-particle algorithm (Section~\ref{sec:soloalg}) with the distributed binary counters (Section~\ref{sec:counter}) to enable the leader to measure the convex hull $H(O)$.
Accompanying pseudocode can be found in Appendix~\ref{subapp:learncode}.
We note that there are some nuances in adapting the general-purpose binary counters for use in our convex hull formation algorithm.
For clarity, we will return to these issues in Section~\ref{subsec:counteradapt} after describing this phase.

In the learning phase, each particle $P$ can be in one of three states, denoted $P.\mystate$: \emph{leader}, \emph{follower}, or \emph{idle}.
All non-leader particles are assumed to be initially idle and contracted.
To coordinate the system's movement, the leader $\ell$ orients the particle system as a spanning tree rooted at itself.
This is achieved using the \emph{spanning tree primitive} (see, e.g.,~\cite{Daymude2019}).
If an idle particle $P$ is activated and has a non-idle neighbor, then $P$ becomes a follower and sets $P.\parent$ to this neighbor.
This primitive continues until all idle particles become followers.

Imitating the single-particle algorithm of Section~\ref{sec:soloalg}, $\ell$ performs a clockwise traversal of the boundary of the object $O$ using the right-hand rule, updating its distance counters along the way.
It terminates once it has visited all six half-planes without pushing any of them, which it detects using its terminating bits $b_h$.
In this multi-particle setting, we need to carefully consider both how $\ell$ updates its counters and how it interacts with its followers as it moves.

\paragraph*{Rules for Leader Computation and Movement}

If $\ell$ is expanded and it has a contracted follower child $P$ in the spanning tree that is keeping counter bits, $\ell$ pulls $P$ in a handover.

Otherwise, suppose $\ell$ is contracted.
If all its terminating bits $b_h$ are equal to $1$, then $\ell$ has learned the convex hull, completing this phase.
Otherwise, it must continue its traversal of the object's boundary.
If the zero-test operation is unavailable or if it is holding increment/decrement tokens for any of its $d_h$ counters, it will not be able to move.
Otherwise, let $i \in [6]$ be its next move direction according to the right-hand rule, and let $v$ be the node in direction $i$.
There are two cases: either $v$ is unoccupied, or $\ell$ is blocked by another particle occupying $v$.

In the case $\ell$ is blocked by a contracted particle $P$, $\ell$ can \emph{role-swap} with $P$, exchanging its memory with the memory of $P$.
In particular, $\ell$ gives $P$ its counter bits, its counter tokens, and its terminating bits; promotes $P$ to become the new leader by setting $P.\mystate \gets$ \emph{leader} and clearing $P.\parent$; and demotes itself by setting $\ell.\mystate \gets$ \emph{follower} and $\ell.\parent \gets P$.
This effectively advances the leader's position one node further along the object's boundary.

If either $v$ is unoccupied or $\ell$ can perform a role-swap with the particle blocking it, $\ell$ first calculates whether the resulting move would push one or more half-planes using update vector $\delta_i$.
Let $\mathcal{H}' = \{h \in \mathcal{H} : \delta_{i,h} = -1 \text{ and } d_h = 0\}$ be the set of half-planes being pushed, and recall that since zero-testing is currently available, $\ell$ can locally check if $d_h = 0$.
It then generates the appropriate increment and decrement tokens according to $\delta_i$.
Next, it updates its terminating bits: if it is about to push a half-plane (i.e., $\mathcal{H}' \neq \emptyset$), then it sets $b_h \gets 0$ for all $h \in \mathcal{H}$; otherwise, it can again use zero-testing to set $b_h \gets 1$ for all $h \in \mathcal{H}$ such that $d_h + \delta_{i,h} = 0$.
Finally, $\ell$ performs its move: if $v$ is unoccupied, $\ell$ expands into $v$; otherwise, $\ell$ performs a role-swap with the contracted particle blocking it.

\paragraph*{Rules for Follower Movement}

Consider any follower $P$.
If $P$ is expanded and has no children in the spanning tree nor any idle neighbor, it simply contracts.
If $P$ is contracted and is following the tail of its expanded parent $Q = P.\parent$, it is possible for $P$ to push $Q$ in a handover.
Similarly, if $Q$ is expanded and has a contracted child $P$, it is possible for $Q$ to pull $P$ in a handover.
However, if $P$ is not emulating counter bits but $Q$ is, then it is possible that a handover between $P$ and $Q$ could disconnect the counters (see \figtext~\ref{fig:counterconnect}).
So we only allow these handovers if either $(i)$ both keep counter bits, like $P_3$ and $P_4$ in \figtext~\ref{fig:counterconnect}; $(ii)$ neither keep counter bits, like $Q_2$ and $Q_3$ in \figtext~\ref{fig:counterconnect}; or $(iii)$ one does not keep counter bits while the other holds the final token, like $P_6$ and $R_1$ in \figtext~\ref{fig:counterconnect}.

\begin{figure}[t]
\centering
\includegraphics[scale=0.8]{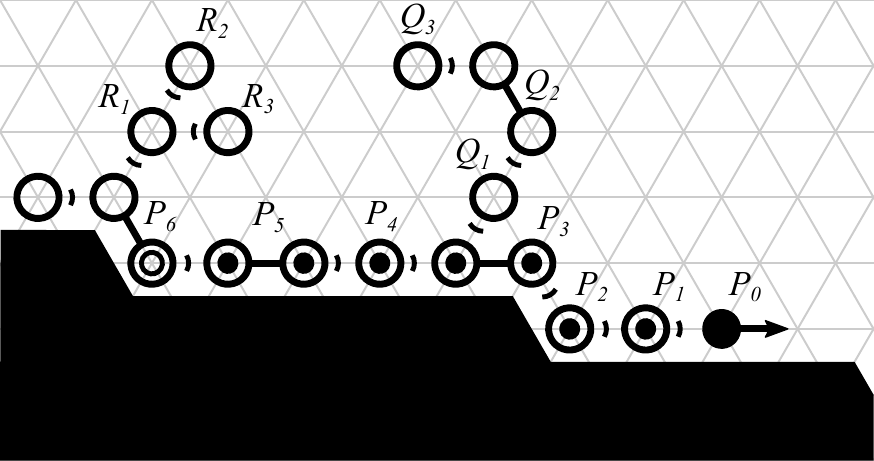}
\caption{The leader $P_0$ (black dot) and its followers (black circles). Followers with dots keep counter bits, and $P_6$ holds the final token. Allowing $Q_1$ to handover with $P_3$ would disconnect the counter, while all other potential handovers are safe.}
\label{fig:counterconnect}
\end{figure}

\subsection{Adapting the Binary Counters for Convex Hull Formation} \label{subsec:counteradapt}

Both the learning phase (Section~\ref{subsec:learningalg}) and the formation phase (Section~\ref{subsec:formationalg}) use the six distance counters $d_h$, for $h \in \mathcal{H}$.
As alluded to in the previous section, we now describe how to adapt the general-purpose binary counters described in Section~\ref{sec:counter} for convex hull formation.
Accompanying pseudocode can be found in Appendix~\ref{subapp:hullcountercode}.

First, since the particle system is organized as a spanning tree instead of a simple path, a particle $P$ must unambiguously decide which neighboring particle keeps the next most significant bit.
Particle $P$ first prefers a child in the spanning tree already holding bits of a counter.
If none exist, a child ``hull'', ``marker'', or ``pre-marker'' particle (see Section~\ref{subsec:formationalg}) is used.
Finally, if none exist, a child on the object's boundary is chosen.
(We prove that at least one of these cases is satisfied in Lemma~\ref{lem:counterextend}).

Second, each particle may participate in up to six $d_h$ counters instead of just one.
Since the different counters never interact with one another, this modification is easily handled by indexing the counter variables by the counter they belong to.
For each half-plane $h \in \mathcal{H}$, the final token $f_h$ denotes the end of the counter $d_h$, increment and decrement tokens are tagged $c_h^+$ and $c_h^-$, respectively, and a particle $P$ keeps bits $P.\bit_h$ and holds tokens $P.\tokens_h$.

Third, the particle system is moving instead of remaining static, which affects the binary counters in two ways.
As described in Section~\ref{subsec:learningalg}, certain handovers must be prohibited to protect the connectivity of the counters.
Role-swaps would also disconnect the counters, since the leader transfers its counter information (bits, tokens, etc.) into the memory of the particle blocking it.
To circumvent this issue, we allow each particle to keep up to two bits of each counter instead of one.
Then, during a role-swap, the leader only transfers its less significant bits/tokens for each counter $d_h$, retaining the information related to the more significant bits and thus keeping the counters connected.

The fourth and final modification to the binary counters is called \emph{bit forwarding}.
As described above, both particles involved in the role-swap are left keeping only one bit instead of two.
Thus, if ever a particle $P$ only has one bit of a counter $d_h$ while the particle $Q$ keeping the next most significant bit(s) has two, $P$ can take the less significant bit and tokens from $Q$.
This ensures that all particles eventually hold two bits again.

Other than these four adaptations, the mechanics of the counter operations remain exactly as in Section~\ref{sec:counter}.
These adaptations increase the memory load per particle by only a constant factor (i.e., by one additional bit per half-plane), so the constant-size memory constraint remains satisfied.
Details of how these adaptations are implemented can be found in Appendix~\ref{subapp:hullcountercode}.

\subsection{Forming the Convex Hull} \label{subsec:formationalg}

The \emph{formation phase} brings as many particles as possible into the nodes of the convex hull $H(O)$.
It is divided into two subphases.
In the \emph{hull closing} subphase, the leader particle $\ell$ uses its binary counters to lead the rest of the particle system along a clockwise traversal of $H(O)$.
If $\ell$ completes its traversal, leaving every node of the convex hull occupied by (possibly expanded) particles, the \emph{hull filling} subphase fills the convex hull with as many contracted particles as possible.

\subsubsection{The Hull Closing Subphase} \label{subsubsec:hullclosing}

When the learning phase ends, the leader particle $\ell$ occupies a position $s \in H(O)$ (by Lemma~\ref{lem:termination}) and its distributed binary counters contain accurate distances to each of the six half-planes $h \in \mathcal{H}$.
The leader's main role during the hull closing subphase is to perform a clockwise traversal of $H(O)$, leading the rest of the particle system into the convex hull.
In particular, $\ell$ uses its binary counters to detect when it reaches one of the six vertices of $H(O)$, at which point it turns $60^{\circ}$ clockwise to follow the next half-plane, and so on.

The particle system tracks the position $s$ that $\ell$ started its traversal from by ensuring a unique \emph{marker} particle occupies it.
The marker particle is prohibited from contracting out of $s$ except as part of a handover, at which point the marker role is transferred so that the marker particle always occupies $s$.
Thus, when $\ell$ encounters the marker particle occupying the next node of the convex hull, it can locally determine that it has completed its traversal and this subphase.

However, there may not be enough particles to close the hull.
Let $n$ be the number of particles in the system and $H = |H(O)|$ be the number of nodes in the convex hull.
If $n < \lceil H / 2\rceil$, eventually all particles enter the convex hull and follow the leader as far as possible without disconnecting from the marker particle, which is prohibited from moving from position $s$.
With every hull particle expanded and unable to move any farther, a token passing scheme is used to inform the leader that there are insufficient particles for closing the hull and advancing to the next subphase.
Upon receiving this message, the leader terminates, with the rest of the particles following suit.

In the following, we give a detailed implementation of this subphase from the perspective of an individual particle $P$.
Accompanying pseudocode can be found in Appendix~\ref{subapp:hullclosecode}.

\paragraph*{Rules for Leader Computation and Movement}

If the leader $\ell$ is holding the ``all expanded'' token and does not have the marker particle in its neighborhood --- indicating that there are insufficient particles to complete this subphase --- it generates a ``termination'' token and passes it to its child in the spanning tree.
It then terminates by setting $\ell.\mystate \gets$ \emph{finished}.

Otherwise, if $\ell$ is expanded, there are two cases.
If $\ell$ has a contracted hull child $Q$ (i.e., a child $Q$ with $Q.\mystate =$ \emph{hull}), $\ell$ performs a pull handover with $Q$.
If $\ell$ does not have any hull children but does have a contracted follower child $Q$ keeping counter bits, then this is its first expansion of the hull closing subphase and the marker should occupy its current tail position.
So $\ell$ sets $Q.\mystate \gets$ \emph{pre-marker} and performs a pull handover with $Q$ (see \figtext~\ref{fig:marker:a}--\ref{fig:marker:b}).

\begin{figure}[t]
\centering
\begin{subfigure}{.24\textwidth}
	\centering
	\includegraphics[scale=.75]{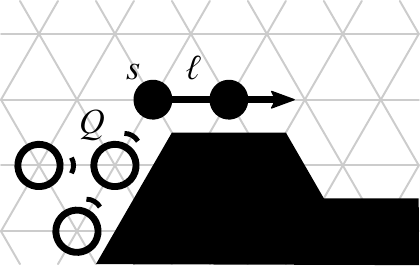}
	\caption{\centering}
	\label{fig:marker:a}
\end{subfigure}
\hfill
\begin{subfigure}{.24\textwidth}
	\centering
	\includegraphics[scale=.75]{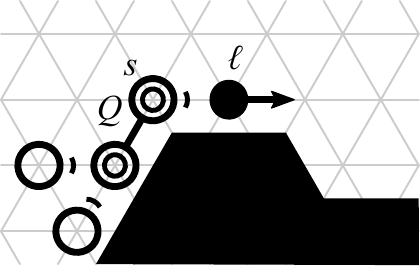}
	\caption{\centering}
	\label{fig:marker:b}
\end{subfigure}
\hfill
\begin{subfigure}{.24\textwidth}
	\centering
	\includegraphics[scale=.75]{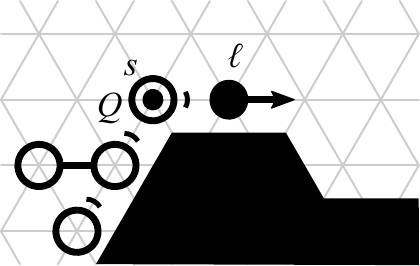}
	\caption{\centering}
	\label{fig:marker:c}
\end{subfigure}
\hfill
\begin{subfigure}{.24\textwidth}
	\centering
	\includegraphics[scale=.75]{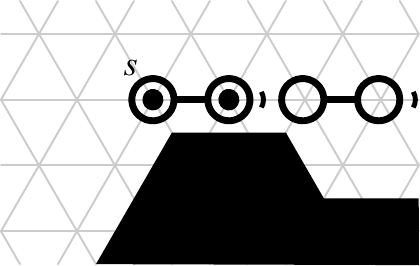}
	\caption{\centering}
	\label{fig:marker:d}
\end{subfigure}
\caption{(a) After expanding for the first time, the leader $\ell$ occupies the starting position $s$ with its tail.
(b) After performing a handover with $\ell$, follower child $Q$ becomes the pre-marker (inner circles).
(c) When $Q$ contracts, it becomes the marker (inner dot).
(d) If there are insufficient particles to close the hull, the marker particle will eventually become expanded and unable to contract without vacating position $s$.}
\label{fig:marker}
\end{figure}

During its hull traversal, $\ell$ keeps a variable $\ell.\plane \in \mathcal{H}$ indicating which half-plane boundary it is currently following.
It checks if it has reached the next half-plane by zero-testing: if the distance to the next half-plane is $0$, $\ell$ updates $\ell.\plane$ accordingly.
It then inspects the next node of its traversal along $\ell.\plane$, say $v$.
If $v$ is occupied by the marker particle $Q$, then $\ell$ has completed the hull closing subphase; it updates $Q.\mystate \gets$ \emph{finished} and then advances to the hull filling subphase (Section~\ref{subsubsec:hullfilling}).
Otherwise, if $\ell$ is contracted, it continues  its traversal of the convex hull by either expanding into node $v$ if $v$ is unoccupied or by role-swapping with the particle blocking it, just as it did in the learning phase.

\paragraph*{Rules for the Marker Particle Logic}

The marker role must be passed between particles so that the marker particle always occupies the position at which the leader started its hull traversal.
Whenever a contracted marker particle $P$ expands in a handover with its parent, it remains a marker particle.
When $P$ subsequently contracts as a part of a handover with a contracted child $Q$, $P$ becomes a hull particle and $Q$ becomes a \emph{pre-marker}.
Finally, when the pre-marker $Q$ contracts --- either on its own or as part of a handover with a contracted child --- $Q$ becomes the marker particle (see \figtext~\ref{fig:marker:c}).

Importantly, the marker particle $P$ never contracts outside of a handover, as this would vacate the leader's starting position (see \figtext~\ref{fig:marker:d}).
If $P$ is ever expanded but has no children or idle neighbors, it generates the ``all expanded'' token and passes it forward along expanded particles only.
If this ultimately causes the leader to learn there are insufficient particles to close the hull (as described above) and the ``termination'' token is passed all the way back to $P$, $P$ terminates by consuming the termination token and becoming finished.

\paragraph*{Rules for Follower and Hull Particle Behavior}

Follower particles move just as they did in the learning phase, with two additional conditions.
First, if ever a follower is involved in a handover with the pre-marker or marker particle, their states are updated as described above.
Second, follower particles never perform handovers with hull particles.

Hull particles are simply follower particles that have joined the convex hull.
They only perform handovers with the leader and other hull particles.
Additionally, they're responsible for passing the ``all expanded'' and ``termination'' tokens: if an expanded hull particle $P$ holds the ``all expanded'' token and its parent is also expanded, $P$ passes this token to its parent.
If a hull particle $P$ is holding the ``termination'' token, it terminates by passing this token to its hull or marker child and becoming finished.

\subsubsection{The Hull Filling Subphase} \label{subsubsec:hullfilling}

The hull filling subphase is the final phase of the algorithm.
It begins when the leader $\ell$ encounters the marker particle in the hull closing subphase, completing its traversal of the hull.
At this point, the hull is entirely filled with particles, though some may be expanded.
The remaining followers are either outside the hull or are trapped between the hull and the object.
The goal of this subphase is to $(i)$ allow trapped particles to escape outside the hull, and $(ii)$ use the followers outside the hull to ``fill in'' behind any expanded hull particles, filling the hull with as many contracted particles as possible.

At a high level, this subphase works as follows.
The leader $\ell$ first becomes finished.
Each hull particle then also becomes finished when its parent is finished.
A finished particle $P$ labels a neighboring follower $Q$ as either \emph{trapped} or \emph{filler} depending on whether $Q$ is inside or outside the hull, respectively.
This can be determined locally using the relative position of $Q$ to the parent of $P$, which is the next particle on the hull in a clockwise direction.
A trapped particle performs a coordinated series of movements with a neighboring finished particle to effectively take its place, ``pushing'' the finished particle outside the hull as a filler particle.
Filler particles perform a clockwise traversal of the surface of the hull (i.e., the finished particles) searching for an expanded finished particle to handover with.
Doing so effectively replaces a single expanded finished particle on the hull with two contracted ones.

There are two ways the hull filling subphase can terminate.
Recall that $n$ is the number of particles in the system and $H = |H(O)|$ is the number of nodes in the convex hull.
If $n \geq H$, the entire hull can be filled with contracted particles.
To detect this event, a token is used that is only passed along contracted finished particles.
If it is passed around the entire hull, termination is broadcast so that all particles (including the extra ones outside the hull) become finished.
However, it may be that $\lceil H/2 \rceil \leq n < H$; that is, there are enough particles to close the hull but not enough to fill it with all contracted particles.
In this case, all particles will still eventually join the hull and become finished.

Detailed pseudocode for this subphase can be found in Appendix~\ref{subapp:hullfillcode}.
In the following, we describe the local rules underlying the three important primitives for this subphase.

\paragraph*{Freeing Trapped Particles}

\begin{figure}[t]
\centering
\begin{subfigure}{.19\textwidth}
	\centering
	\includegraphics[scale=.78]{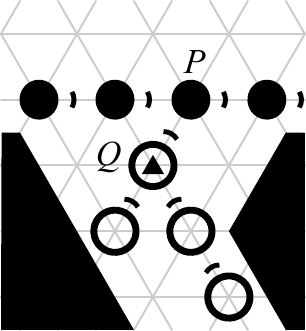}
	\caption{\centering}
	\label{fig:trapped:a}
\end{subfigure}
\hfill
\begin{subfigure}{.19\textwidth}
	\centering
	\includegraphics[scale=.78]{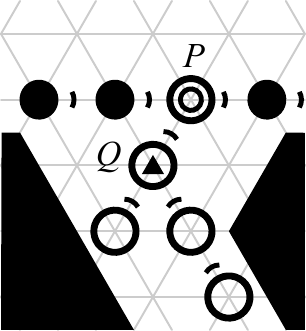}
	\caption{\centering}
	\label{fig:trapped:b}
\end{subfigure}
\hfill
\begin{subfigure}{.19\textwidth}
	\centering
	\includegraphics[scale=.78]{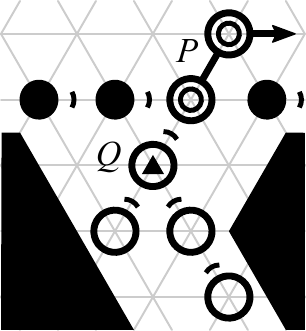}
	\caption{\centering}
	\label{fig:trapped:c}
\end{subfigure}
\hfill
\begin{subfigure}{.19\textwidth}
	\centering
	\includegraphics[scale=.78]{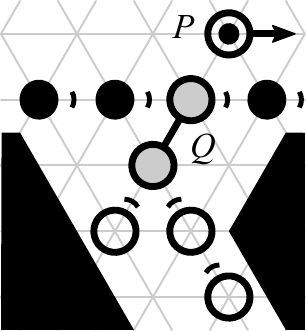}
	\caption{\centering}
	\label{fig:trapped:d}
\end{subfigure}
\hfill
\begin{subfigure}{.19\textwidth}
	\centering
	\includegraphics[scale=.78]{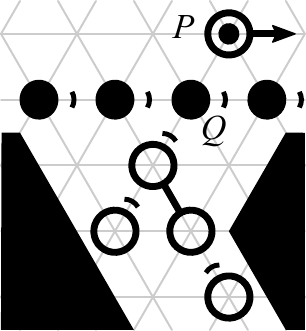}
	\caption{\centering}
	\label{fig:trapped:e}
\end{subfigure}
\caption{Freeing a trapped particle.
(a) A finished particle $P$ marks a neighboring follower $Q$ on the interior of the hull as trapped (inner triangle).
(b) $Q$ marks its parent $P$ as a pre-filler (inner circle).
(c) $P$ expands outside the hull.
(d) In a handover between $P$ and $Q$, $P$ becomes a filler (inner dot) and $Q$ becomes pre-finished (gray).
(e) $Q$ contracts and becomes finished.}
\label{fig:trapped}
\end{figure}

Suppose a finished particle $P$ has labeled a neighboring contracted particle $Q$ as trapped (see \figtext~\ref{fig:trapped:a}).
In doing so, $P$ sets itself as the parent of $Q$.
When $Q$ is next activated, it sets $P.\mystate \gets$ \emph{pre-filler} (see \figtext~\ref{fig:trapped:b}).
This indicates to $P$ that it should expand towards the outside of the hull as soon as possible (\figtext~\ref{fig:trapped:c}).
Once $P$ has expanded, $P$ and $Q$ perform a handover (\figtext~\ref{fig:trapped:d}).
This effectively pushes $P$ out of the hull, where it becomes a filler particle, and expands $Q$ into the hull, where it becomes pre-finished.
Finally, whenever $Q$ contracts --- either on its own or during a handover --- it becomes finished, taking the original position and role of $P$ (\figtext~\ref{fig:trapped:e}).

\paragraph*{Filling the Hull}

\begin{figure}[t]
\centering
\begin{subfigure}{.32\textwidth}
	\centering
	\includegraphics[scale=.85]{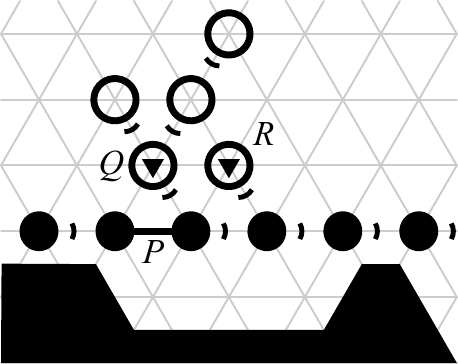}
	\caption{\centering}
	\label{fig:filler:a}
\end{subfigure}
\hfill
\begin{subfigure}{.32\textwidth}
	\centering
	\includegraphics[scale=.85]{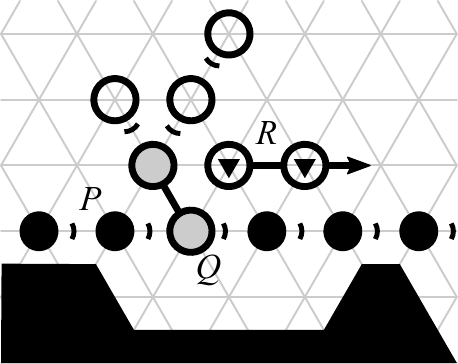}
	\caption{\centering}
	\label{fig:filler:b}
\end{subfigure}
\hfill
\begin{subfigure}{.32\textwidth}
	\centering
	\includegraphics[scale=.85]{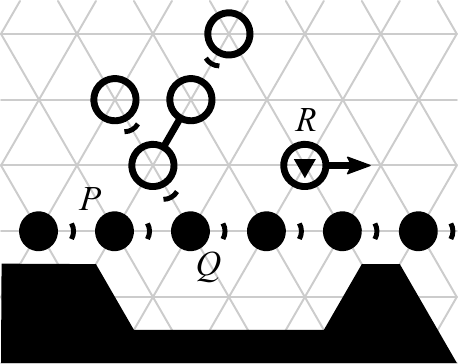}
	\caption{\centering}
	\label{fig:filler:c}
\end{subfigure}
\caption{Some movements of filler particles.
(a) A finished particle $P$ marks neighboring followers $Q$ and $R$ on the exterior of the hull as fillers (inner triangle).
(b) $Q$ performs a handover with $P$ to fill the hull, becoming pre-finished (gray), while $R$ expands along a clockwise traversal of the hull.
(c) $Q$ contracts and becomes finished.}
\label{fig:filler}
\end{figure}

A particle $P$ becomes a filler either by being labeled so by a neighboring finished particle or by being ejected from the hull while freeing a trapped particle, as described above.
If $P$ is expanded, it simply contracts if it has no children or idle neighbors, or performs a pull handover with a contracted follower child if it has one.
If $P$ is contracted, it finds the next node $v$ on its clockwise traversal of the hull.
$P$ simply expands into $v$ unless the first occupied node clockwise from $v$ is occupied by the tail of an expanded finished particle $Q$.
In this case, $P$ performs a push handover with $Q$, sets $Q$ to be its parent, and becomes pre-finished.
Whenever $P$ next contracts --- either on its own or during a handover --- it becomes finished.
An example of a some movements of filler particles can be found in \figtext~\ref{fig:filler}.

\paragraph*{Detecting Termination}

Before $\ell$ finishes at the start of this subphase, it generates an ``all contracted'' token containing a counter $t$ initially set to $0$.
This token is passed backwards along the hull to contracted finished particles only.
Whenever the token is passed through a vertex of the convex hull, the counter $t$ is incremented.
Thus, if a contracted finished particle is ever holding the ``all contracted'' token and its counter $t$ is equal to $7$, it terminates by consuming the ``all contracted'' token and broadcasting ``termination'' tokens to all its neighbors.
Whenever a particle receives a termination token, it also terminates by becoming finished.

\subsection{Correctness} \label{subsec:multicorrect}

\paragraph*{Correctness of the Counters}

We first build on the correctness proofs of Section~\ref{sec:counter} to show that the adapted distributed binary counters described in Section~\ref{subsec:counteradapt} remain correct.
Recall that there are six $d_h$ counters maintained by a spanning tree of follower particles rooted at the leader $\ell$.
Because the $d_h$ counters never interact with one another, we can analyze the correctness of each counter independently.
Also recall that we allow each particle to keep up to two bits of each counter instead of one.
Since the order of the bits is maintained, this does not affect correctness.
We begin by proving several general results.
Throughout this section, recall that $B = |B(O)|$ denotes the length of the object's boundary, and $H = |H(O)|$ denotes the length of the object's convex hull.

\begin{lemma} \label{lem:counterconnect}
The distributed binary counters never disconnect.
\end{lemma}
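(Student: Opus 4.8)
The plan is to argue that the counter, viewed as a subpath of the spanning tree from the leader $\ell$ up through the particle holding the final token $f_h$, stays connected across every movement the algorithm permits. The counter can only be endangered by three kinds of events: handovers between a follower and its parent, role-swaps by the leader, and bit forwarding. (Expansions of the leader into an empty node and ordinary follower contractions do not change which particle holds which bits, so they are irrelevant.) I would therefore organize the proof as a case analysis over these three event types, maintaining the invariant that for each half-plane $h$, the particles holding bits of $d_h$ (together with the particle holding $f_h$) form a connected subtree of the spanning tree, with the leader at the least-significant end.

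First, handovers. By the rules in Section~\ref{subsec:learningalg}, a handover between a follower $P$ and its parent $Q$ is allowed only if $(i)$ both keep counter bits, $(ii)$ neither keeps counter bits, or $(iii)$ one keeps no counter bits while the other holds the final token. In a handover the two particles merely swap which of the two adjacent nodes each occupies; the parent-child edge between them persists, and no particle outside $\{P,Q\}$ changes its bit-holding status. In case $(i)$ both endpoints remain in the counter, so the counter subtree is unchanged as a set of particles and remains connected. In case $(ii)$ neither is in the counter, so again the bit-holding set is untouched. In case $(iii)$ the final token may move between $P$ and $Q$, but since they are adjacent and one of them is the current end of the counter, the end of the counter moves by at most one hop along an existing edge, preserving connectivity. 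The content of Figure~\ref{fig:counterconnect} is exactly that the one forbidden configuration — a bit-holding particle handing over with a non-bit-holding particle that is not the $f$-holder — would split the counter, which is why the rules exclude it.

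Second, role-swaps. When $\ell$ is blocked by a contracted particle $P$ in its move direction, $\ell$ transfers only its \emph{less significant} bit and associated tokens of each $d_h$ to $P$, keeping the more significant bit; $P$ becomes the new leader and $\ell$ becomes $P$'s follower child. Because we allow each particle to keep up to two bits (adaptation three of Section~\ref{subsec:counteradapt}), $\ell$ can only role-swap when it currently holds two bits of each counter, so after the swap both $\ell$ and $P$ hold exactly one bit each, and the least-significant bit sits with the new leader $P$ while the next bit sits with its child $\ell$ — adjacent in the tree. Every more significant particle is untouched, so the counter subtree is the old one with one extra particle ($P$) grafted on at the least-significant end along the new parent edge; it stays connected. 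Here the key point to verify is the precondition ``$\ell$ holds two bits of each $d_h$ whenever it role-swaps,'' which I would pull out as a small invariant maintained by bit forwarding and by the fact that $\ell$ refuses to move while holding tokens it has not yet dispatched.

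Third, bit forwarding: if a particle $P$ holds only one bit of $d_h$ while the particle $Q$ holding the next most significant bits holds two, $P$ takes the less significant of $Q$'s bits. Since $P$ and $Q$ are consecutive in the counter subtree and hence adjacent in the tree, this is an internal rearrangement of bits along an existing edge; it changes neither the set of bit-holding particles (both $P$ and $Q$ still hold at least one bit afterwards) nor any tree edge, so connectivity is trivially preserved. Combining the three cases, the invariant is maintained by every permitted action, which is the claim. I expect the only genuine obstacle to be bookkeeping rather than conceptual: one must check that the ``next most significant bit'' pointer is always well defined — i.e., that a counting particle that is not the $f_h$-holder always has a child that holds bits of $d_h$, or a hull/marker/pre-marker child, or a child on the object boundary — which is precisely Lemma~\ref{lem:counterextend} and which I would invoke here rather than re-prove.
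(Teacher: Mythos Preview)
Your approach matches the paper's: a case analysis over movement types, observing that the algorithm explicitly forbids exactly those handovers and role-swaps that would wedge a non-bit-holding particle between two bit-holders. The paper's proof is simply a terser version of your first two cases (it does not bother with bit forwarding, which, as you note, is trivially safe).

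There is one genuine issue to fix. You close by invoking Lemma~\ref{lem:counterextend} to guarantee the ``next most significant bit'' pointer is well defined, but in the paper Lemma~\ref{lem:counterextend} itself appeals to Lemma~\ref{lem:counterconnect} in its proof, so this would be circular. Fortunately the invocation is unnecessary: connectivity is about the existing bit-holders forming a path, and counter extension (the only place the pointer matters) always attaches the new end as a child of the old end along a tree edge, so it cannot disconnect anything. Drop that appeal and the argument stands. A smaller inaccuracy: in your case~(iii) for handovers you say ``the final token may move between $P$ and $Q$,'' but handovers move particles, not their bits or tokens --- whichever particle held $f_h$ still holds it afterward. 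The real reason case~(iii) is safe is structural: the $f_h$-holder sits just past the last bit-holder, so a non-bit-holding particle performing a handover with it is not being inserted between two bit-holders.
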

\begin{proof}
\ifconf
We show that the counters can never be physically broken by a disconnection in the particle system or logically broken by inserting a particle without counter bits between two particles that have counter bits.
Full details are available in~\cite{arXiv}.
\else
By the spanning forest primitive~\cite{Derakhshandeh2017}, the particle system cannot physically become disconnected.
So the only way to disconnect a counter $d_h$ is to insert a follower that is not keeping bits of $d_h$ between two particles that are.
There are two ways this could occur.
A contracted follower not keeping bits of $d_h$ could perform a handover with an expanded follower that is (as in \figtext~\ref{fig:counterconnect}), separating the counter from its more significant bits.
Alternatively, the leader $\ell$ could role-swap without leaving behind a bit to keep $d_h$ connected.
Both of these movements were explicitly forbidden in Section~\ref{subsec:learningalg}, so the counters remain connected.
\fi
\end{proof}

Next, we prove two useful results regarding the lengths of the distributed binary counters.

\begin{lemma} \label{lem:counterlen}
Let $L$ be the path of nodes traversed by leader $\ell$ from the start of the algorithm to its current position.
Then there are at most $\lfloor \log_2\min\{|L|, H\}\rfloor + 1$ particles holding bits of a distributed binary counter $d_h$.
\end{lemma}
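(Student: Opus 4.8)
The plan is to bound the number of particles holding counter bits by bounding the \emph{value} stored in each counter $d_h$, and then invoking the elementary fact that a binary counter holding a value $v$ needs at most $\lfloor \log_2 v \rfloor + 1$ bits (with the convention that $v=0$ needs a single bit). The key observation is that $d_h$ is an $L_1$-distance from the leader's current position to half-plane $h$, so I need two separate upper bounds on this distance: one in terms of $|L|$ (the length of the path the leader has walked) and one in terms of $H = |H(O)|$.

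First I would establish the bound $d_h \le |L|$. Recall from Section~\ref{sec:soloalg} that each step of the leader changes $d_h$ by $\delta_{i,h} \in \{-1,0,+1\}$, clamped at $0$ from below. Since $d_h$ starts at $0$ and each of the $|L|$ steps increases it by at most $1$, we have $d_h \le |L|$ at the current position. (Here I should be slightly careful and note that the same reasoning applies in the distributed setting, since Section~\ref{subsec:learningalg} has the leader apply exactly the same update vectors $\delta_i$ as it moves; role-swaps advance the leader's conceptual position by one node along the boundary and are accounted for as one step, so the step-count bound is unaffected.)

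Next I would establish the bound $d_h \le H$. The idea is that once the leader's estimate of half-plane $h$ is ``pinned'' — i.e., once the leader has visited $h$ and thereby set $d_h = 0$ — the estimate of $h$ coincides with the true convex-hull half-plane of $O$ whenever the leader lies on $H(O)$ and more generally the estimate never moves outside $H(O)$ (this is exactly the content of Lemma~\ref{lem:estimate}: $H_i(O) \subseteq H(O)$ always). Since both the leader's current position and the line defining half-plane $h$'s current estimate lie within (or on the boundary of) the region $H(O)$, and $H(O)$ has diameter $O(H)$, the $L_1$-distance $d_h$ is $O(H)$. To get the clean bound $d_h \le H$ rather than merely $O(H)$, I would argue more directly: before $h$ has ever been visited, $d_h$ counts steps toward $h$ that were never clamped, and the leader traverses $B(O)$ which is contained in $H(O)$; the width of $H(O)$ in the direction perpendicular to $h$ is at most the number of boundary nodes on the far side, which is at most $H$. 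After $h$ is first visited, $d_h$ is the distance from the leader (on $B(O) \subseteq \overline{H(O)}$) to the current estimate line of $h$ (on the boundary of $\overline{H(O)}$), which is again at most the width of $H(O)$, hence at most $H$. So $d_h \le \min\{|L|, H\}$.

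Finally, I would connect the counter value to the number of bit-holding particles. By the counter's construction (Section~\ref{sec:counter}) together with the correctness established in Section~\ref{subsec:countercorrect} and the adaptations in Section~\ref{subsec:counteradapt} — in particular, decrements shrink the counter by one bit and the ``bit forwarding'' mechanism keeps bits packed two-per-particle except transiently around a role-swap — the set of particles holding bits of $d_h$ consists of the leader up through the particle just before the one holding the final token $f_h$, and the number of such bits is exactly the number of bits in the binary representation of the current value of $d_h$, which is $\lfloor \log_2 d_h \rfloor + 1$ when $d_h \ge 1$ and $1$ when $d_h = 0$. Combining with $d_h \le \min\{|L|, H\}$ gives at most $\lfloor \log_2 \min\{|L|, H\} \rfloor + 1$ bit-holding particles, as claimed.

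\textbf{Main obstacle.} The delicate point is the second bound, $d_h \le H$: the leader's estimate line for $h$ can be ``pushed'' outward over time, so I need to argue carefully that at every moment the stored value $d_h$ equals the $L_1$-distance from the leader's \emph{current} position to the \emph{current} estimate line, and that both of these stay confined to a region of width $\le H$ in the relevant direction. The cleanest route is probably to maintain the invariant (essentially already implicit in Section~\ref{sec:soloalg}) that $d_h$ is always exactly that distance, combined with Lemma~\ref{lem:estimate}'s guarantee $H_i(O) \subseteq H(O)$, and then a purely geometric observation that the leader — being on $B(O)$, which lies strictly inside the closed region bounded by the estimate half-planes — is within the object's convex-hull width of each estimate line. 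I also need to double-check the boundary cases ($d_h = 0$, and the transient state right after a role-swap where bits are temporarily split) so that the ``$+1$'' in the statement is exactly right rather than off by one.
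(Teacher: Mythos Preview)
Your proposal is correct and follows essentially the same approach as the paper: bound $d_h \le \min\{|L|,H\}$ (at most one increment per step gives $d_h \le |L|$; the leader and the estimate line both lie within the true hull region, giving $d_h \le H$), then use that a value $v$ needs $\lfloor\log_2 v\rfloor+1$ bits and each participating particle holds at least one bit. The paper's own proof is a terse three-sentence version of exactly this; your extra care around the $d_h \le H$ bound and the transient counter states is sound but not strictly needed for the level of rigor the paper adopts.
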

\begin{proof}
It is easy to see that the value of $d_h$ is at most $\min\{|L|, H\}$: $\ell$ cannot be further from its current estimation of half-plane $h$ than the number of moves it has made, and its distance from the true half-plane $h$ is trivially upper bounded by the length of the convex hull.
Since exactly $\lfloor \log_2b\rfloor + 1$ bits are needed to store a binary value $b$, we have that $\lfloor \log_2\min\{|L|, H\}\rfloor + 1$ bits suffice to store $d_h$.
Each particle maintaining $d_h$ holds at least one bit, so there are at most $\lfloor \log_2\min\{|L|, H\}\rfloor + 1$ such particles.
\end{proof}

\begin{lemma} \label{lem:followerlen}
Let $L$ be the path of nodes traversed by leader $\ell$ from the start of the algorithm to its current position.
Then there are at least $\min\{|\mathcal{P}|, \lceil |L| / 2\rceil\}$ particles including $\ell$ along $L$.
\end{lemma}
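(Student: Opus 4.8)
The plan is to argue about the "footprint" of the particle system along the leader's traversal path $L$, using the invariant that, ignoring particles not yet reached by the tree, the occupied nodes always form a connected region that contains a suffix of $L$, and that every particle occupies at most two nodes (one if contracted). First I would set up the relevant invariant precisely: at all times the leader $\ell$ occupies the head of $L$ (its current position), and the set of nodes of $L$ occupied by particles is a contiguous suffix $L' = (u_1, \ldots, u_j)$ of $L$ ending at $\ell$'s position. This holds because $\ell$ only advances along $L$ by expanding into the next node or by role-swapping, and whenever $\ell$ vacates a node it does so via a handover or role-swap that immediately places a follower onto that node; followers trailing $\ell$ likewise move only by handovers, which never create a gap. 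Hence the suffix of $L$ covered by particles only grows as $\ell$ advances, and it never develops a hole.

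The key counting step is then: the number of particles with at least one node on $L$ is at least $\lceil |L'| / 2 \rceil$, since each such particle covers at most two nodes of $L$ (and expanded particles that straddle $L$ and a node off $L$ cover only one node of $L$, which only helps). So it suffices to show $|L'| = |L|$ unless the system has run out of particles, i.e. that $\ell$ cannot advance to a new node of $L$ until every particle has been pulled into the covered suffix. This is exactly the structural constraint enforced by the movement rules: $\ell$ moves forward only in a handover/role-swap with a trailing follower, and a follower can only move forward when it in turn has a child to hand over to (or is at the tree's frontier). Propagating this down the spanning tree, the leader can make a new step along $L$ only when the entire tree is "taut" behind it — so either all $|\mathcal{P}|$ particles lie along the covered portion of the structure (giving $\geq \lceil |L|/2 \rceil$ on $L$ once enough steps have occurred, but also trivially $\geq \min\{|\mathcal{P}|, \lceil |L|/2\rceil\}$ since $|\mathcal{P}|$ particles total are present), or $\ell$ has taken fewer than $|\mathcal{P}|$ forward steps and every one of those steps left a distinct particle on a distinct node of $L$, giving at least $\min\{|\mathcal{P}|, |L|\} \geq \min\{|\mathcal{P}|, \lceil |L|/2\rceil\}$ particles on $L$.

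I would then combine these two cases: if $|\mathcal{P}| \leq \lceil |L|/2 \rceil$, the bound $\min\{|\mathcal{P}|, \lceil|L|/2\rceil\} = |\mathcal{P}|$ is immediate because the covered suffix $L'$ — having length at least the number of forward steps taken, and the tree being taut forces that number to be $\geq$ the point at which all particles are engaged — contains representatives of all particles; if $|\mathcal{P}| > \lceil |L|/2\rceil$, then $L' = L$ (the system has not exhausted its particles, so tautness forces every vacated node to be refilled all the way up to $\ell$), and the two-nodes-per-particle bound gives $\geq \lceil |L|/2 \rceil$ particles on $L$. The main obstacle I anticipate is making the "tautness propagates down the tree" argument fully rigorous: one must show by induction on tree depth that whenever $\ell$ has advanced $k$ times, either $k \geq$ (something forcing all particles onto $L'$) or the covered suffix already has length $k$, and this requires carefully tracking that handovers never skip a node and that the prohibited handovers from Section~\ref{subsec:learningalg} (the ones protecting counter connectivity) do not stall this process indefinitely — fairness plus the fact that a fully taut subtree always permits the next handover upward handles the latter.
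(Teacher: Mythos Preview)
Your high-level picture (a contiguous occupied suffix of $L$, each particle covering at most two nodes) is sound, but the mechanism you invoke to control the length of that suffix is incorrect, and this is where the argument breaks.

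First, the claim that ``$\ell$ moves forward only in a handover/role-swap with a trailing follower'' is false. A contracted $\ell$ expands into an unoccupied node with no follower interaction at all, and a role-swap is with a particle \emph{ahead} of $\ell$, not behind. So forward progress of $\ell$ does not by itself recruit anyone onto $L$. Second, the assertion that $|L'| = |L|$ whenever $|\mathcal{P}| > \lceil |L|/2\rceil$ is not true: a follower at the tail end of the chain with no children simply contracts (Algorithm~\ref{alg:learn:follow}), vacating a node of $L$, so the occupied suffix can and does shrink from the back. Third, ``tautness propagating down the tree'' is too strong: for $\ell$ to contract it needs only a single contracted child, not a fully taut tree, so nothing in the movement rules forces every particle onto $L$ before $\ell$ can take its next step.

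What actually pins down the count is a parity/packing argument the paper makes explicit via induction on $|L|$. The crucial case is when the $k$ particles currently on $L'$ are \emph{all expanded}, jointly covering $2k$ nodes; then $\ell$ is expanded too and cannot contract until some contracted follower becomes its child. Since every particle already on $L'$ is expanded, that contracted child must be a \emph{new} particle entering $L'$ from a side branch of the spanning tree. This is the step that increments the on-path count by one when $\ell$ advances past a ``tight'' configuration, and it is exactly what your tautness heuristic is reaching for but does not establish. The paper's induction on $|L|$ splits on whether $|L|-1$ is odd (so at least one on-path particle is contracted and no new particle is needed) or even with exactly $(|L|-1)/2$ particles present (all expanded, forcing a newcomer); replacing your suffix-length argument with this case analysis would close the gap.
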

\begin{proof}
Argue by induction on $|L|$.
If $|L| = 1$, then $\min\{|\mathcal{P}|, \lceil |L| / 2\rceil\} = 1$ and $\ell$ is the only particle on its traversal path.
So consider any $|L| > 1$, and suppose that the lemma holds for all $|L'| < |L|$.
In particular, consider the subpath $L' \subseteq L$ containing all nodes of $L$ except the one $\ell$ most recently moved into; thus, $|L'| = |L| - 1$.
By the induction hypothesis, there were at least $\min\{|\mathcal{P}|, \lceil (|L| - 1) / 2\rceil\}$ particles including $\ell$ on $L'$.
We show that after $\ell$ moves into the $|L|$-th node of its traversal, there are at least $\min\{|\mathcal{P}|, \lceil |L| / 2\rceil\}$ particles along $L$.

If $|\mathcal{P}| \leq \lceil (|L| - 1) / 2\rceil$, then all particles (including $\ell$) were on $L'$.
Regardless of how $\ell$ moves into the $|L|$-th node of its traversal --- i.e., either by an expansion or a role-swap --- it cannot remove a particle as its follower.
So there remain $|\mathcal{P}| \geq \min\{|\mathcal{P}|, \lceil |L| / 2\rceil\}$ particles along $L$.

Otherwise, if $|\mathcal{P}| > \lceil (|L| - 1) / 2\rceil$, there are two cases to consider.
If $|L| - 1$ is odd, then there were at least $|L|/2$ particles on $L'$, a path of $|L| - 1$ nodes.
Thus, at least one particle on $L'$ was contracted.
Via successive handovers, $\ell$ could eventually become contracted and perform its expansion or role-swap into the $|L|$-th node of its traversal, which again could not remove any of its followers.
So there are at least $|L|/2 \geq \min\{|\mathcal{P}|, \lceil |L| / 2\rceil\}$ particles along $L$.

The second case is if $|L| - 1$ is even, implying that there were at least $(|L| - 1)/2$ particles on $L'$, a path of $|L| - 1$ nodes.
If there were strictly more than $(|L| - 1)/2$ particles on $L'$, at least one of them must have been contracted, and an argument similar to the odd case applies here as well.
However, if there were exactly $(|L| - 1)/2$ particles on $L'$, then every particle along $L'$ was expanded, including $\ell$.
Thus, some new follower must have joined $L'$ in order to enable successive handovers that allowed $\ell$ to contract and then move into the $|L|$-th node of its traversal.
So there are $(|L| - 1)/2 + 1 = (|L| + 1)/2 \geq \min\{|\mathcal{P}|, \lceil |L| / 2\rceil\}$ particles along $L$.
\end{proof}

These two lemmas are the key to proving the safety of our algorithm's use of the distributed binary counters.
In particular, we now show that the counters never intersect themselves --- corrupting the order of the bits --- and that there are always enough particles to maintain the counters.

\begin{corollary} \label{cor:counterintersect}
The distributed binary counters never intersect.
\end{corollary}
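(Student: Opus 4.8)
The plan is to derive Corollary~\ref{cor:counterintersect} directly from Lemmas~\ref{lem:counterlen} and~\ref{lem:followerlen}, using the fact that a counter $d_h$ ``intersects itself'' only if the particles holding its bits are not supported by enough distinct nodes along the leader's traversal path to be laid out in a simple (non-self-crossing) subpath. First I would set up notation: let $L$ be the path of nodes traversed by $\ell$ so far, and observe that every particle holding a bit of $d_h$ lies on $L$ (this is where Lemma~\ref{lem:counterconnect} is invoked — the counter is connected and its particles are exactly the chain from $\ell$ outward along the tree, which coincides with a prefix of $L$). The key quantitative comparison is then: the number of particles holding bits of $d_h$ is at most $\lfloor \log_2 \min\{|L|, H\}\rfloor + 1$ (Lemma~\ref{lem:counterlen}), while the number of particles occupying $L$ (including $\ell$) is at least $\min\{|\mathcal{P}|, \lceil |L|/2 \rceil\}$ (Lemma~\ref{lem:followerlen}).

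The core step is to show $\lfloor \log_2 \min\{|L|, H\}\rfloor + 1 \le \min\{|\mathcal{P}|, \lceil |L|/2 \rceil\}$, which gives that the counter particles form a proper, strictly shorter sub-chain of the particles along $L$, hence occupy a simple subpath and cannot intersect. For the $|\mathcal{P}|$ term I would use the standing assumption from Section~\ref{subsec:results} that $|\mathcal{P}| > \log_2(|H(O)|) = \log_2 H \ge \log_2\min\{|L|,H\}$, so $|\mathcal{P}| \ge \lfloor \log_2\min\{|L|,H\}\rfloor + 1$. For the $\lceil |L|/2\rceil$ term I would argue the elementary inequality $\lfloor \log_2 |L| \rfloor + 1 \le \lceil |L|/2 \rceil$ for all integers $|L| \ge 1$ (it holds with equality only at the smallest values and the right side grows linearly while the left grows logarithmically); combined with $\min\{|L|,H\} \le |L|$ this handles the second term. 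Taking the minimum of the two bounds yields the desired inequality.

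Finally I would translate the counting inequality back into the geometric statement: since the chain of particles maintaining $d_h$ is a connected prefix of the particle chain along the simple path $L$, and that prefix has strictly fewer nodes than $L$ itself supports, the bits of $d_h$ sit on distinct nodes in their correct order along $L$ — in particular no node carries two bits of $d_h$ out of order and the counter does not fold back on itself. Applying this to each of the six half-planes $h \in \mathcal{H}$ independently (the counters do not interact, as already noted) completes the proof.

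The main obstacle I anticipate is being precise about what ``intersect'' formally means in this discrete moving setting — i.e., making sure the informal ``the counter folds back on itself'' is pinned down as a statement about the embedding of the counter chain into $L$, and verifying that Lemma~\ref{lem:counterconnect} genuinely guarantees the counter particles occupy a contiguous prefix of $L$ rather than some scattered subset. Once that structural fact is in hand, the rest is the routine arithmetic comparison of $\log$ versus linear growth together with the system-size hypothesis $|\mathcal{P}| > \log_2 H$.
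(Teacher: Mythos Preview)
Your argument conflates two distinct claims. The inequality $\lfloor \log_2 \min\{|L|, H\}\rfloor + 1 \le \min\{|\mathcal{P}|, \lceil |L|/2 \rceil\}$ that you derive from Lemmas~\ref{lem:counterlen} and~\ref{lem:followerlen} is precisely the content of Corollary~\ref{cor:countersufficient} (there are always enough particles to hold the counter), not of the present corollary. Showing that the counter chain is a proper prefix of the follower chain does not rule out self-intersection: the danger is that the leader's next traversal step lands on a node already occupied by a counter particle $P_j$, after which a role-swap would overwrite $P_j$'s bits and leave the new leader with two counter-bearing children. Your phrase ``the simple path $L$'' is where the reasoning breaks --- $L$ is \emph{not} simple, since $\ell$ may traverse $B(O)$ more than once during the learning phase (Lemmas~\ref{lem:estimate} and~\ref{lem:termination}). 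You correctly flag this as the main obstacle at the end, but the body of the argument never addresses it.

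The paper's proof compares the counter length not to the follower-chain length but to the \emph{minimum cycle length} of the traversal. Because $\ell$ walks only along $B(O)$ and then $H(O)$, any cycle among the counter particles corresponds to a full loop of the traversal and hence has length at least $H$. Lemma~\ref{lem:counterlen} caps the number of counter particles at $\lfloor \log_2 H\rfloor + 1$, and since $H \ge 6$ on the triangular lattice this is strictly smaller than $H$, giving the contradiction. Lemma~\ref{lem:followerlen} is not invoked at all. To repair your approach you would have to replace the follower-chain bound with this cycle-length bound, at which point the argument collapses to the paper's. (As a minor aside, your elementary inequality $\lfloor \log_2 |L|\rfloor + 1 \le \lceil |L|/2\rceil$ also fails for $|L|\in\{2,4\}$; the paper's Corollary~\ref{cor:countersufficient} acknowledges this with ``careful case analysis.'')
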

\begin{proof}
Suppose to the contrary that $\ell$ forms a cycle $\ell = P_1, \ldots, P_k, P_{k+1} = P_1$ in the spanning tree such that every particle $P_i$ on the cycle is keeping bits of a counter $d_h$.
Recall that $\ell$ first traverses $B(O)$ in the learning phase until it accurately measures the convex hull, at which point it traverses $H(O)$ in the hull closing subphase.
The particles maintaining counters only exist on this traversal path.
Thus, any cycle $\ell$ could create has length $k \geq H$.
But by Lemma~\ref{lem:counterlen}, there are at most $\lfloor \log_2\min\{|L|, H\}\rfloor + 1$ particles holding bits of a given counter, and this value is maximized when $|L| \geq H$.
So the cycle must have length at least $H$ but at most $\lfloor \log_2H\rfloor + 1$, which is impossible because $H \geq 6$ due to the geometry of the triangular lattice, a contradiction.
\end{proof}

\begin{corollary} \label{cor:countersufficient}
There are always enough particles to maintain the distributed binary counters.
\end{corollary}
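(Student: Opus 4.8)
The plan is to show that at every moment the six distance counters occupy a prefix of the leader's traversal path $L$ short enough to be accommodated by the particles available along $L$, so that whenever a carry must propagate past a counter's current most significant bit there is a particle to extend onto. The argument rests on three facts already in hand. First, by Lemma~\ref{lem:counterlen} each $d_h$ is stored by at most $\lfloor \log_2 \min\{|L|, H\}\rfloor + 1$ particles (and, since a particle may hold two bits of a counter, often fewer); adding the particle that holds the final token $f_h$, and using Corollary~\ref{cor:counterintersect} together with the fact that all six counters are rooted at $\ell$ to see that the union of their particles really is a single prefix of $L$, the counters collectively occupy at most $\lfloor \log_2 \min\{|L|, H\}\rfloor + 2$ particles of $L$. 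Second, by Lemma~\ref{lem:followerlen}, the number of particles lying on $L$ (including $\ell$) is at least $\min\{|\mathcal{P}|,\lceil |L|/2\rceil\}$. Third, by the standing assumption $|\mathcal{P}| > \log_2 |H(O)| = \log_2 H$, so $|\mathcal{P}| \ge \lfloor \log_2 H\rfloor + 1$. The corollary then follows from the inequality
\[
\min\{|\mathcal{P}|,\, \lceil |L|/2 \rceil\} \;\ge\; \big\lfloor \log_2 \min\{|L|, H\}\big\rfloor + 2 ,
\]
supplemented by the observation that the $f_h$-holding particle and any of its descendants need not themselves lie on $L$, only be children in the spanning tree --- whose existence is exactly what Lemma~\ref{lem:counterextend} supplies.

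To prove the displayed inequality I would split on which term of the left-hand side is smaller. When $|\mathcal{P}| \le \lceil |L|/2\rceil$, I would combine $|\mathcal{P}| \ge \lfloor\log_2 H\rfloor + 1$ with a geometric bound on $d_h$ sharper than the crude $d_h \le H$ used in Lemma~\ref{lem:counterlen} --- any node of $H(O)$ lies at $L_1$-distance at most roughly $H/2$ from each of the six supporting half-planes, since $H(O)$ is the boundary ring of an intersection of three slabs --- so $d_h$ needs strictly fewer than $\lfloor\log_2 H\rfloor$ bits and, with careful constant-chasing, the gap closes. When $\lceil |L|/2\rceil < |\mathcal{P}|$, I would use the tighter fact that after $|L|-1$ moves of $\ell$ no counter can exceed value $|L|-1$ (it starts at $0$ and changes by at most one per move), so the counters occupy at most $\lfloor\log_2(|L|-1)\rfloor + 2$ particles; for all but a small constant number of values of $|L|$ this is at most $\lceil |L|/2\rceil$, and for the remaining small values I would argue directly, using connectivity of $\mathcal{P}$ (so that the $|\mathcal{P}| \ge 3$ particles are clustered near $\ell$), the two-bits-per-particle allowance, and Lemma~\ref{lem:counterextend} to exhibit the child the counter extends onto.

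The main obstacle is exactly this moderate-to-small $|L|$ regime: Lemma~\ref{lem:followerlen} is weakest there --- it promises only $\lceil |L|/2\rceil$ particles on $L$ --- while $\lfloor \log_2 |L|\rfloor$ is not yet negligible, so the clean inequality above fails for several small values of $|L|$ (e.g.\ $|L| \in \{4,5\}$) unless one simultaneously exploits (i) the sharper $d_h \le |L|-1$ bound, (ii) the fact that a particle holds two counter bits, and (iii) the fact that the final-token particle and the counter's growth target are children in the spanning tree rather than necessarily nodes of $L$. Once these refinements dispatch the finitely many small cases, the remaining bookkeeping --- the constant additive overhead for $f_h$ and the monotonicity of the counter length in $|L|$ --- is routine.
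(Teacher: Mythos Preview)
Your approach is essentially the paper's: reduce the claim to the inequality
\[
\lfloor \log_2 \min\{m,H\}\rfloor + 1 \;\le\; \min\{|\mathcal{P}|,\, \lceil m/2\rceil\}
\]
by pairing Lemma~\ref{lem:counterlen} (upper bound on bit-holding particles) with Lemma~\ref{lem:followerlen} (lower bound on particles along $L$), then do a case split on which argument of the right-hand $\min$ is active.

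Two points where you diverge from the paper and make extra work for yourself. First, the paper counts only \emph{bit-holding} particles, giving $+1$ on the right; you add another $+1$ for the $f_h$-holder to get $+2$. With $+1$, the case $|\mathcal{P}| \le \lceil m/2\rceil$ is immediate from the standing assumption $|\mathcal{P}| > \log_2 H \ge \log_2\min\{m,H\}$, so $|\mathcal{P}| \ge \lfloor\log_2\min\{m,H\}\rfloor + 1$; your sharper geometric bound $d_h \lesssim H/2$ is unneeded. Second, you appeal to Lemma~\ref{lem:counterextend}, which in the paper comes \emph{after} this corollary and itself relies on Lemma~\ref{lem:followerlen}; the paper's proof of the corollary does not use it, and invoking it here risks muddying the logical order even if it is not strictly circular. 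The small-$m$ regime you flag is real and is exactly what the paper hides behind ``careful case analysis''; your proposed fixes (the tighter $d_h \le |L|-1$ and two-bits-per-particle) are the right tools there.
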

\begin{proof}
We prove that the number of particles holding bits of a given counter never exceeds the number of particles following leader $\ell$ along its traversal path.
By Lemmas~\ref{lem:counterlen} and~\ref{lem:followerlen}, it suffices to show $\lfloor \log_2\min\{m, H\}\rfloor + 1 \leq \min\{|\mathcal{P}|, \lceil m / 2\rceil\}$ for any number of nodes $m \geq 1$ traversed by $\ell$.
Using the assumption that $|\mathcal{P}| > \log_2H$, careful case analysis shows that this inequality holds.
\end{proof}

The following lemma shows that each particle can unambiguously decide which particle holds the next most significant bit of a counter when the particle system is structured as a spanning tree instead of a simple path.

\begin{lemma} \label{lem:counterextend}
Suppose a distributed binary counter $d_h$ is maintained by particles $\ell = P_1, \ldots, P_k$, where $k \leq \lfloor \log_2H\rfloor$.
Then for every $i \in \{1, \ldots, k\}$, $P_i$ can identify the particle responsible for the next most significant bit of $d_h$ unambiguously.
\end{lemma}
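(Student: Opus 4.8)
The plan is to fix a counter $d_h$ maintained by $\ell = P_1, \dots, P_k$ and a particle $P_i$ on that chain, and to show that exactly one of the three tie-breaking rules from Section~\ref{subsec:counteradapt} applies and selects a unique child. The argument is by cases on $i$. If $i < k$, then $P_{i+1}$ exists and is a child of $P_i$ in the spanning tree that keeps bits of $d_h$ (by Lemma~\ref{lem:counterconnect}, the counter is connected, so the next bit is held by a tree-neighbor, and since the counter chain starts at the root $\ell$ and runs outward, this neighbor is a child). So the first rule --- ``prefer a child already holding bits of a counter'' --- fires; I must then argue that $P_{i+1}$ is the \emph{only} such child. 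This follows from Corollary~\ref{cor:counterintersect}: if $P_i$ had two distinct children each holding bits of $d_h$, the counter would branch, and following both branches back toward the root would force a cycle in the spanning tree through $\ell$, contradicting that the counters never intersect. So uniqueness of the choice is exactly the content of Corollary~\ref{cor:counterintersect}.

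The remaining case is $i = k$, i.e.\ $P_k$ is the most significant particle currently holding a bit of $d_h$ and needs to identify where the next (not-yet-created) most significant bit should go --- this is the particle that will receive the final token $f_h$ when the counter grows, or equivalently the particle currently holding $f_h$. Here the first tie-break rule cannot apply (no child keeps counter bits, else $k$ would not be maximal, again by Corollary~\ref{cor:counterintersect}), so $P_k$ falls through to the second rule: a child that is a hull, marker, or pre-marker particle. If that also fails, it uses the third rule: a child on the object's boundary. I need to show that at least one of these fallback rules always produces a (unique) candidate. The key observation is that the particle system is organized as a spanning tree along the leader's traversal path, and $P_k$ lies on that path; the next particle in the chain toward $\ell$'s frontier is either still a plain follower/hull particle on the boundary of $O$ (learning phase), or a hull/marker/pre-marker particle (formation phase), and in both cases this is a child of $P_k$ of the required type. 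Uniqueness again comes from the tree structure: the spanning tree primitive gives each particle a single parent, so along the active traversal path there is a unique child continuing toward the frontier.

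The main obstacle I expect is pinning down, precisely, that the ``next particle along the traversal path'' is always a \emph{child} of $P_k$ in the spanning tree and always falls into one of the three enumerated categories at every point in the execution --- this requires carefully invoking how the spanning tree is built (idle particles adopt a non-idle neighbor as parent), how role-swaps relabel states while preserving the parent relation, and the fact (from Lemma~\ref{lem:counterlen} and Corollary~\ref{cor:countersufficient}) that $k \le \lfloor \log_2 H \rfloor$ so the counter chain is a strict prefix of the traversal path and hence genuinely has a ``next'' particle beyond $P_k$. Once that structural invariant is in hand, the case analysis above closes the lemma: for $i < k$ rule~(1) applies with uniqueness from Corollary~\ref{cor:counterintersect}; for $i = k$ exactly one of rules~(2),(3) applies with uniqueness from the tree structure, and the hypothesis $k \le \lfloor \log_2 H\rfloor$ together with $|\mathcal{P}| > \log_2 H$ guarantees a candidate particle actually exists.
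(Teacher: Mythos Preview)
Your overall decomposition matches the paper's: handle $i<k$ by pointing to the unique child already holding bits, and handle $i=k$ by showing one of the fallback rules supplies a child. Two places need tightening, though.

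First, your appeal to Corollary~\ref{cor:counterintersect} for uniqueness when $i<k$ is the wrong lemma. That corollary rules out the counter forming a \emph{cycle} (its proof is a length argument comparing $H$ to $\lfloor\log_2 H\rfloor+1$); it says nothing about branching. Non-branching is instead an invariant of how the counter is extended: \textsc{NextCounterParticle} always prefers a child already holding bits of $d_h$, so once $P_{i+1}$ acquires a bit it is permanently the chosen successor of $P_i$. That is the argument the paper implicitly uses when it simply asserts ``$P_{i+1}$ is the unambiguous child of $P_i$ already holding bits of $d_h$.''

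Second, for $i=k$ the paper's proof has two ingredients you elide. It first disposes of the case where $P_k$ holds only one bit of $d_h$: then $P_k$ itself can store the next bit, and no child is needed at all. Only when $P_k$ holds two bits does one need a child, and here the existence argument requires Lemma~\ref{lem:followerlen}, not merely Corollary~\ref{cor:countersufficient}. Knowing $|\mathcal{P}|>\log_2 H\ge k$ tells you the \emph{system} has more than $k$ particles, but not that more than $k$ of them lie on the leader's traversal path behind $\ell$. The paper bridges this by observing that if $P_k$ holds two bits then $d_h$ is large enough that $\ell$ has already made many moves, and then Lemma~\ref{lem:followerlen} converts ``$\ell$ has traversed $m$ nodes'' into ``at least $\min\{|\mathcal{P}|,\lceil m/2\rceil\}$ particles sit on that path,'' which is what forces a genuine successor beyond $P_k$. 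Your sentence ``the counter chain is a strict prefix of the traversal path'' is exactly the claim that needs this lemma to be justified.
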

\begin{proof}
Recall from Section~\ref{subsec:counteradapt} that $P_i$ identifies the particle responsible for the next most significant bit of $d_h$ by preferring, in this order, a child already holding counter bits, a child hull or (pre-)marker particle, or a child on $B(O)$.
We show such a particle exists and is unambiguous by induction on~$k$.

If $k = 1$, then $\ell = P_1$ is the only particle keeping bits of $d_h$ and thus has no children keeping counter bits.
If $\ell$ is only holding one bit of $d_h$, then $\ell$ itself could hold the next most significant bit.
So suppose $\ell$ is holding two bits of $d_h$, implying that $\ell$ has expanded or role-swapped at least twice.
In the learning phase, no hull or (pre-)marker particles exist.
Since $\ell$ only traverses $B(O)$ in this phase, it always has a follower child on $B(O)$.
In the hull closing subphase, $\ell$ only traverses $H(O)$, and all particles on $H(O)$ are either hull particles or the (pre-)marker particle.
The hull filling subphase does not use counters.
Thus, in all phases, $\ell$ can unambiguously identify the particle responsible for the next most significant bit.

Now consider any $1 < k \leq \lfloor \log_2H\rfloor$, and suppose the lemma holds for all $k' < k$.
For all $1 \leq i < k$, $P_{i+1}$ is the unambiguous child of $P_i$ already holding bits of $d_h$.
So consider $P_k$.
If $P_k$ is only holding one bit of $d_h$, then $P_k$ itself could hold the next most significant bit.
So suppose $P_k$ is holding two bits of $d_h$.
If $P_k$ is a hull particle, it has exactly one child also on the convex hull, and this child must be a hull particle or the (pre-)marker particle.
Otherwise (i.e., if $P_k$ is not a hull particle), we know by the induction hypothesis that $P_k$ is either the (pre-)marker particle or a follower on $B(O)$.
In order for $P_k$ to be holding two bits of $d_h$, the value of $d_h$ must be at least $2^k$ since $d_h$ is connected by Lemma~\ref{lem:counterconnect}.
This implies $\ell$ has expanded or role-swapped at least $2^k$ times, so by Lemma~\ref{lem:followerlen} there are at least $\min\{|\mathcal{P}|, \lceil (2^k + 1) / 2\rceil\}$ particles following $\ell$ along its traversal path.
To identify a unique child follower of $P_k$ on $B(O)$, it suffices to show $\min\{|\mathcal{P}|, \lceil (2^k + 1) / 2\rceil\} \geq k$, i.e., that there are more followers extending along $B(O)$ than are currently holding bits of $d_h$.
By our assumption that $|\mathcal{P}| > \log_2H$ and our supposition that $k \leq \lfloor \log_2H\rfloor$, we have:
\[|\mathcal{P}| > \log_2H \geq \lfloor \log_2H\rfloor + 1 > k.\]
Since $2^k + 1$ is always odd whenever $k > 1$, we have $\lceil (2^k + 1) / 2\rceil = 2^{k-1} + 1$, which is strictly greater than $k$ for all $k > 1$.
\end{proof}

Thus, the counters are all extended along the same, unambiguous path of particles.
To conclude our results on the distributed binary counters, we show that bit forwarding moves the bits of all six counters towards the leader as far as possible.

\begin{lemma} \label{lem:counterbitforward}
If $\ell$ only has one bit of a distributed binary counter $d_h$ and is not holding the final token $f_h$ at time $t$, then there exists a time $t' > t$ when $\ell$ either has two bits of $d_h$ or is holding $f_h$.
\end{lemma}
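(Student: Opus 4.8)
The plan is to argue that bit forwarding is never permanently blocked: if $\ell$ holds only one bit of $d_h$ and does not hold $f_h$, then the particle $Q$ holding the next most significant bit(s) of $d_h$ is well-defined (by Lemma~\ref{lem:counterextend}), and eventually $Q$ will hold two bits, at which point $\ell$ can take the less significant bit and the corresponding buffered tokens from $Q$. First I would observe that, since $\ell$ does not hold $f_h$, the counter $d_h$ extends past $\ell$, so by Lemma~\ref{lem:counterextend} there is a unique particle $Q = P_2$ responsible for the next most significant bit of $d_h$, and by Corollary~\ref{cor:countersufficient} the chain of particles maintaining $d_h$ is always populated with enough particles. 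The key structural fact I would then establish is that the ``bit deficit'' at the front of the counter can only be created by a role-swap involving $\ell$ (which leaves both $\ell$ and the particle it swapped with holding one bit each), and that every such deficit propagates forward monotonically: whenever some particle $P_j$ in the counter chain has two bits while $P_{j-1}$ has only one, the forwarding rule lets $P_{j-1}$ pull a bit from $P_j$ at its next activation.

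Next I would set up an induction along the counter chain, analogous to the induction in the proof of Lemma~\ref{lem:0testavailable}. Let $P_1 = \ell, P_2, \ldots, P_r$ be the particles currently maintaining $d_h$, with $P_r$ holding $f_h$ (if $f_h$ is held by a particle just past the chain, treat that as the base case, since bits can always be ``created'' past the end of the counter exactly as increments are carried past the end in Section~\ref{sec:counter}). The claim to prove by downward induction on $j$ is: there exists a time after which $P_j$ holds two bits of $d_h$, provided $P_j$ is not the final-token holder. For the base case, the particle adjacent to $f_h$'s holder can always acquire a second bit because the most-significant end of the counter behaves like the open end — there is no obstruction past it. For the inductive step, once $P_{j+1}$ is guaranteed to eventually hold two bits, the forwarding rule (Section~\ref{subsec:counteradapt}) guarantees that at $P_j$'s next activation after that point, $P_j$ pulls the less significant bit and tokens from $P_{j+1}$, so $P_j$ holds two bits thereafter. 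Applying this to $j = 1$ gives a time $t' > t$ at which $\ell$ holds two bits of $d_h$ — unless in the meantime $\ell$ has acquired $f_h$ (e.g., the counter shrank to length one via decrements), which is the other disjunct in the statement.

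The main obstacle I anticipate is the interaction between bit forwarding and the particle system's movement: between activations, $\ell$ may perform further role-swaps or handovers, and the particle currently playing the role of ``$P_{j+1}$'' for counter $d_h$ may change (a role-swap hands the more significant bits to a new particle). I would handle this by phrasing the induction in terms of the logical positions in the counter chain rather than specific particles — Lemma~\ref{lem:counterconnect} guarantees the chain stays connected and Lemma~\ref{lem:counterextend} guarantees each logical position has a unique occupant at all times, so the argument is invariant under which physical particle occupies each slot. I would also need to note that a role-swap by $\ell$ only ever \emph{adds} a one-bit deficit at the very front (it never removes bits from deeper in the chain and never makes the chain longer than $\lfloor \log_2 H\rfloor + 1$ by Lemma~\ref{lem:counterlen}), so the downward induction is not undermined by new deficits appearing behind $P_j$; any fresh deficit created at $P_1$ while the argument runs is itself covered by re-applying the same statement. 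Fairness then ensures every relevant activation eventually occurs, completing the argument.
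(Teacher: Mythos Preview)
Your proposal is correct and follows essentially the same approach as the paper: an induction along the counter chain (the paper phrases it as induction on the number $i$ of consecutive one-bit particles starting at $\ell$, you phrase it as downward induction on position $j$, but the mechanism is identical) showing that each particle can eventually pull a second bit or the final token $f_h$ from its successor via the forwarding rule and fairness. The paper's proof is considerably terser---it enumerates the three possible states of the first particle past the one-bit run and omits the movement/role-swap discussion you include---but the core argument is the same.
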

\begin{proof}
\ifconf
A simple induction argument on the number of consecutive particles only emulating one bit of $d_h$ and not holding $f_h$ shows that $\ell$ eventually receives another bit to emulate or $f_h$.
Full details are available in~\cite{arXiv}.
\else
Suppose $\ell$ is only emulating one bit of a counter $d_h$ and is not holding $f_h$ at time $t$.
Argue by induction on $i$, the number of consecutive particles starting at $\ell = P_1$ that are only emulating one bit of $d_h$ and are not holding $f_h$.
If $i = 1$, then $P_2$ must either be $(i)$ emulating two bits of $d_h$, $(ii)$ emulating the most significant bit (MSB) of $d_h$ and holding $f_h$, or $(iii)$ only holding $f_h$.
In cases $(i)$ and $(ii)$, $\ell$ can take the less significant bit from $P_2$ during its next activation (say, at time $t' > t$) while in case $(iii)$ $\ell$ can take $f_h$ instead.

Now suppose $i > 1$ and the induction hypothesis holds up to $i - 1$.
Then $P_{i-1}$ is only emulating one bit of $d_h$ and is not holding $f_h$ while $P_i$ satisfies one of the three cases above.
As in the base case, after the next activation of $P_{i-1}$ (say, at $t_1 > t$), $P_{i-1}$ is either emulating two bits of $d_h$ or is holding $f_h$.
Therefore, by the induction hypothesis, there exists a time $t' > t_1$ when $\ell$ is also either emulating two bits of $d_h$ or holding $f_h$.
\fi
\end{proof}

\paragraph*{Correctness of the Learning Phase}

To prove the learning phase is correct, we must show that the leader $\ell$ obtains an accurate measurement of the convex hull by moving and performing zero-tests, emulating the single particle algorithm of Section~\ref{sec:soloalg}.
We already proved in Lemmas~\ref{lem:0testavailable} and~\ref{lem:0testreliable} that $\ell$ will always eventually be able to perform a reliable zero-test.
So we now prove the correctness of the particle system's movements.
This relies in part on previous work on the spanning forest primitive~\cite{Derakhshandeh2017}, where movement for a spanning tree following a leader particle was shown to be correct.
In fact, the correctness of our algorithm's follower movements follows directly from this previous analysis, so it remains to show the leader's movements are correct.

\begin{lemma} \label{lem:ellmove}
If $\ell$ is contracted, it can always eventually expand or role-swap along its clockwise traversal of $B(O)$.
If $\ell$ is expanded, it can always eventually perform a handover with a follower.
\end{lemma}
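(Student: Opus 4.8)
The plan is to handle the two cases (contracted leader, expanded leader) separately, and in each case to reduce ``always eventually'' to the absence of a permanent obstruction. First consider the expanded leader. An expanded $\ell$ wants to contract via a handover; it can do so with a contracted follower child keeping counter bits (pulling it). I would argue that such a child always exists or else $\ell$ is already contracted: since $\ell$ is expanded, it must have moved, so the counters are nonempty and extend from $\ell$ along its traversal path; by Lemma~\ref{lem:counterextend} the next particle on that path — call it $P$ — is unambiguously identified and keeps counter bits. If $P$ is contracted, the handover happens at $\ell$'s next activation. If $P$ is expanded, then by the correctness of follower movement (the spanning forest primitive of~\cite{Derakhshandeh2017}, combined with the restricted-handover rules of Section~\ref{subsec:learningalg}), successive handovers propagate down the tree so that some contracted particle is eventually pulled up to become $\ell$'s child; the restricted handovers of case $(i)$ (both keep counter bits) never block this along the counter path, and case $(iii)$ handles the boundary with the final-token holder. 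Fairness then guarantees $\ell$ is activated after such a contracted child appears, so $\ell$ eventually contracts.

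Now consider the contracted leader. Here $\ell$ wants to expand into the node $v$ in its right-hand-rule direction $i$, or role-swap with a contracted particle occupying $v$. There are three potential obstructions, and I would rule out each being permanent. (1) The zero-test may be unavailable, or $\ell$ may be holding increment/decrement tokens for some $d_h$: but by Lemma~\ref{lem:0testavailable} the zero-test becomes available again, and by Theorem~\ref{thm:counterruntime} (or directly, since tokens are carried over or consumed) $\ell$ eventually clears its held tokens, so eventually neither condition blocks it. (2) Bit forwarding may leave $\ell$ temporarily without the information it needs, but Lemma~\ref{lem:counterbitforward} shows $\ell$ eventually regains two bits (or the final token) of each counter, so this is also transient. (3) The node $v$ may be occupied. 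Since $O$ has no tunnel of width $1$, $v\notin O$, so $v$ is either empty or occupied by a particle. If it is occupied by a contracted particle, $\ell$ role-swaps immediately. If it is occupied by an expanded particle $Q$, then $Q$ is a follower (the only other particle types on $B(O)$ during the learning phase are followers), and by the follower movement rules $Q$ eventually contracts — either on its own (no children, no idle neighbors) or by a handover pulling a child up — after which $v$ holds a contracted particle and $\ell$ can role-swap. In all cases the obstruction is temporary, so by fairness $\ell$ eventually expands or role-swaps.

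The main obstacle I expect is (3), specifically arguing rigorously that an expanded particle occupying $v$ must eventually contract without creating a circular wait: the particle $Q$ at $v$ might be waiting on its own children, which in turn could be entangled with the counter-connectivity restrictions. The key is that the restricted handovers only ever \emph{delay} a contraction, never permanently forbid one — cases $(ii)$ and $(iii)$ always permit the handover past the end of a counter, and case $(i)$ permits it within a counter — so there is no cycle of particles each permanently blocked on the next. I would make this precise by an induction down the spanning tree along the relevant chain of parents, exactly as in the analogous argument for the counter operations (e.g.\ Lemma~\ref{lem:0testavailable}), using the fact that the tree is finite and acyclic (Corollary~\ref{cor:counterintersect}) so the chain terminates at a leaf that can contract freely. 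Fairness then lifts ``can eventually'' at each level up to $\ell$.
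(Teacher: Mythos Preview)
Your proposal is correct and follows essentially the same approach as the paper: both invoke Lemma~\ref{lem:0testavailable} for zero-test availability, Lemma~\ref{lem:counterbitforward} for the role-swap precondition, and defer the follower-progress argument to the spanning forest primitive of~\cite{Derakhshandeh2017}. You are in fact more careful than the paper on one point---the case where $v$ is occupied by an \emph{expanded} particle, which the paper's proof silently elides---and your proposed inductive argument down the spanning tree is the right way to close that gap; the extra lemmas you cite (Theorem~\ref{thm:counterruntime}, Lemma~\ref{lem:counterextend}, Corollary~\ref{cor:counterintersect}) are not strictly necessary but do no harm.
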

\begin{proof}
First suppose $\ell$ is contracted.
Leader $\ell$ can only move if its zero-test operation is available for all of its $d_h$ counters, which must eventually be the case by Lemma~\ref{lem:0testavailable}.
Let $v$ be the next clockwise node on $B(O)$.
If $v$ is unoccupied, $\ell$ can simply expand into node $v$.
Otherwise, $\ell$ needs to perform a role-swap with the particle occupying $v$.
This is only allowed when, for each counter $d_h$, $\ell$ holds two bits or the final token $f_h$.
Lemma~\ref{lem:counterbitforward} shows this is always eventually true, implying $\ell$ can perform the role-swap.
In either case, $\ell$ moves into $v$.

Now suppose $\ell$ is expanded.
By previous work on the spanning forest primitive~\cite{Derakhshandeh2017}, some follower child $P$ of $\ell$ will eventually contract.
Thus, $\ell$ can perform a pull handover with $P$ in its next activation to become contracted.
\end{proof}

By Lemma~\ref{lem:ellmove}, we have that the leader $\ell$ can exactly emulate the movements of the single particle in Section~\ref{sec:soloalg}.
Thus, as a direct result of Theorem~\ref{thm:singlecorrect}, $\ell$ completes the learning phase with an accurate measurement of the convex hull of $O$.

\paragraph*{Correctness of the Hull Formation Phase}

The hull formation phase begins with the leader $\ell$ occupying its ``starting position'' $s \in H(O) \cap B(O)$.
Recall that in the hull closing subphase, $\ell$ uses its binary counters to perform a clockwise traversal of the convex hull $H(O)$, leading the rest of the particle system into the convex hull.
The particle system tracks the starting position $s$ by ensuring a marker particle always occupies it, as we now prove.

\begin{lemma} \label{lem:marker}
The starting position $s$ is always occupied by the leader or (pre-)marker particle.
\end{lemma}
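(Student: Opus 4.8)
The plan is to argue by induction on the sequence of configurations (equivalently, on the serialized activation sequence) that the invariant ``position $s$ is occupied by either $\ell$ or a (pre-)marker particle'' holds at all times. The base case is the moment the learning phase ends: by Lemma~\ref{lem:termination}, $\ell$ terminates the learning phase exactly at the node $s \in H(O) \cap B(O)$, so $\ell$ itself occupies $s$ and the invariant holds trivially. For the inductive step, I would assume the invariant holds before an arbitrary atomic action and show it is preserved by that action, proceeding by a case analysis on which particle at (or adjacent to) $s$ acts and how.

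The case analysis tracks the life cycle of the role token described in Section~\ref{subsubsec:hullclosing}. First, while $\ell$ itself occupies $s$: $\ell$'s only relevant move is its very first expansion of the hull closing subphase, which by the rules expands \emph{out of} $s$ along $\ell.\plane$ while leaving its tail on $s$ --- so $s$ is still occupied by $\ell$. The next action that moves $s$'s occupant is the handover in which $\ell$ pulls its contracted counter-keeping follower child $Q$; by the rules this sets $Q.\mystate \gets$ \emph{pre-marker} and puts $Q$ (expanded) onto $s$ as $\ell$ contracts forward (\figtext~\ref{fig:marker:a}--\ref{fig:marker:b}), so $s$ is occupied by the pre-marker. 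Second, once a pre-marker occupies $s$: the only way it leaves $s$ is by contracting --- either on its own or via a handover with a contracted child --- at which point the rules say it becomes the marker and, crucially, it contracts \emph{onto} $s$ (its head was the non-$s$ node), so $s$ is now occupied by the marker (\figtext~\ref{fig:marker:c}). Third, once a marker occupies $s$: the rule explicitly forbids the marker from contracting outside of a handover, so the only way it can move is a handover with its parent in which it expands; such an expansion keeps its tail on $s$, so $s$ is still occupied by the (expanded) marker (\figtext~\ref{fig:marker:d}). If that expanded marker later contracts, it can only do so in a handover with a contracted child $Q$, at which point $Q$ becomes the new pre-marker and contracts onto... here one must check the geometry: the marker's head is the non-$s$ node, the child $Q$ sits beyond the head, and the handover pulls $Q$ into the head node while the marker contracts back onto $s$, so again $s$ stays occupied by the marker, and the pre-marker/marker role simply propagates forward along the hull while the invariant is maintained. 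Finally, the only way the marker particle's role terminates is when $\ell$ encounters it occupying the next node of the traversal and sets $Q.\mystate \gets$ \emph{finished} --- but at that instant the hull closing subphase ends and the lemma's claim is no longer asserted for later times, so there is nothing more to check.

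I would also note the two edge cases: (a) the ``insufficient particles'' branch, where the marker becomes expanded and stuck at $s$ (\figtext~\ref{fig:marker:d}) and eventually consumes the termination token to become finished --- but consuming the token and becoming finished does not involve a movement, so the marker stays put on $s$ until the very action that ends the subphase; and (b) ensuring that no \emph{other} particle can expand into $s$ while it is already occupied, which is immediate since every node holds at most one particle and $s$ is occupied throughout by the invariant. The main obstacle is purely the geometric bookkeeping: verifying in each handover that the contracting particle lands back on $s$ rather than vacating it, i.e., that the head/tail orientation of the (pre-)marker is always such that $s$ is the tail. This follows from the fact that the leader enters $s$ with its tail on its first expansion and every subsequent handover along the chain preserves that ``$s$ = tail'' orientation, but it is the one place where care is needed rather than a one-line argument.
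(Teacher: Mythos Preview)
Your inductive approach and case decomposition match the paper's proof almost exactly. However, the geometric bookkeeping --- precisely the part you flagged as needing care --- is backwards in two places.

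First, in the pre-marker case: when $\ell$ pulls a follower $Q$ in its first handover, $\ell$ vacates its tail node $s$ and $Q$ expands into $s$, so $Q$'s \emph{head} is at $s$, not the non-$s$ node. Your conclusion that the pre-marker ``contracts onto $s$'' is still correct, but only because its head is $s$ and contracting means vacating the tail; your parenthetical justification contradicts the conclusion it is meant to support.

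Second, and more seriously, in the expanded-marker case: the marker's head is the forward (non-$s$) node and its tail is $s$, so its children sit behind the \emph{tail}, not ``beyond the head.'' When it pulls a child $Q$, the marker contracts \emph{forward to its head}, vacating $s$; it does not ``contract back onto $s$'' as you claim. The invariant survives only because $Q$ expands into $s$ (with its head now at $s$) and is simultaneously set to pre-marker, while the old marker becomes a hull particle and moves on. This is the actual mechanism by which, in your words, ``the pre-marker/marker role simply propagates forward along the hull'' --- you had the right intuition but the wrong mechanics. Fix these two orientation errors and the proof is complete and essentially identical to the paper's.
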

\begin{proof}
Initially, the leader $\ell$ occupies $s$.
When it expands into the first node of $H(O)$, its tail still occupies $s$.
When it contracts out of $s$ as part of a handover with a contracted follower child $P$, it sets $P$ as the pre-marker particle; at this point, the head of $P$ occupies $s$.
Whenever a pre-marker particle contracts to occupy $s$ only --- either on its own or as part of a handover with a contracted child --- it becomes the marker particle.
A marker particle $P$ may expand so that its tail still occupies $s$, but can only contract out of $s$ as part of a handover with a contracted child, which $P$ then sets as the pre-marker particle.
Thus, in all cases, $s$ is either occupied by the leader, the pre-marker, or the marker particle.
\end{proof}

If there are insufficient particles to close the hull, we must show that the particle system fills as much of the hull as possible and then terminates.

\begin{lemma} \label{lem:insufficienthullclose}
If there are fewer than $\lceil H/2 \rceil$ particles in the system, each particle will eventually terminate, expanded over two nodes of $H(O)$.
\end{lemma}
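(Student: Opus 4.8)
The plan is to exploit the shortage of particles: I will show the leader can never close the hull, so that the particle system eventually ``jams'' into a fully stretched chain along $H(O)$, at which point the ``all expanded''/``termination'' token protocol of Section~\ref{subsubsec:hullclosing} makes every particle \emph{finished}. First I would extract the arithmetic: if $n < \lceil H/2\rceil$ then $2n \le H-1$ whether $H$ is even or odd, so at most $H-1$ of the $H$ nodes of $H(O)$ can ever be occupied; in particular some node of $H(O)$ is always free. Since the marker never leaves the starting position $s$ (Lemma~\ref{lem:marker}) and the particles running from the marker to the leader form a simple connected path in the spanning tree, the leader can never sit so that the next node of its clockwise traversal is $s$ --- that would force the entire cycle $H(O)$ to be occupied. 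Hence the hull closing subphase never terminates by the leader meeting the marker.

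Next I would argue that the leader's head advances monotonically clockwise and is bounded, hence eventually comes to rest. The occupied arc runs from $s$ to the leader and has length at most $2n < H$, so the leader can never wrap around. While the leader is contracted, the node directly ahead is always free (by the previous paragraph it is never the marker, and it cannot be any other particle since the leader heads the chain), so by an argument analogous to Lemma~\ref{lem:ellmove}, together with the counter guarantees of Section~\ref{sec:counter} --- eventual availability and reliability of zero-tests (Lemmas~\ref{lem:0testavailable} and~\ref{lem:0testreliable}) and bit forwarding supplying the bits or final token needed to role-swap (Lemma~\ref{lem:counterbitforward}) --- a contracted leader can always eventually advance. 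Therefore the leader can come to rest only while \emph{expanded} with no contracted hull child to pull.

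Third, I would show that in this terminal configuration every particle is expanded and lies on $H(O)$. Using the spanning-forest movement analysis of~\cite{Derakhshandeh2017} together with the marker and handover rules of Section~\ref{subsubsec:hullclosing}, every idle particle becomes a follower and every follower is eventually pulled forward onto $H(O)$, so the chain is ultimately an arc of $H(O)$ from the marker at $s$ to the stationary leader. Since the marker cannot contract off $s$, it can only rest expanded on the two nodes $\{s, s+1\}$; a handover cascade then propagates ``expandedness'' backwards from the leader and stalls precisely because it cannot pull the already-expanded marker, so every hull particle ends up expanded as well. Thus in the terminal configuration the leader is expanded with a free node ahead, every hull particle is expanded, and the marker is expanded with its tail at $s$, with no children and no idle neighbours. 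I expect this convergence step to be the main obstacle: it requires ruling out every way a particle could get permanently stuck in an intermediate state --- a follower that never reaches $H(O)$, a bit-forwarding deadlock near the pinned marker, or an unfinished handover --- by carefully combining fairness, the spanning-forest guarantees, and the pinning of the marker.

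Finally, from that configuration I would invoke the token rules of Section~\ref{subsubsec:hullclosing}: the marker, being expanded with no children or idle neighbours, generates the ``all expanded'' token, which is forwarded along the chain of expanded hull particles to the leader; since $H(O)$ has a free node the leader does not have the marker in its neighbourhood, so it generates a ``termination'' token and becomes \emph{finished}; the token then travels back along the hull, making each hull particle \emph{finished} in turn, and is finally consumed by the marker, which also becomes \emph{finished}. Hence every particle terminates while expanded over two nodes of $H(O)$, as claimed.
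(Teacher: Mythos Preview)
Your argument is correct and mirrors the paper's own proof: the same arithmetic $2n\le H-1$, the marker pinned at $s$ by Lemma~\ref{lem:marker}, leader progress via the analogue of Lemma~\ref{lem:ellmove}, convergence to an all-expanded chain on $H(O)$, and the ``all expanded''/termination token protocol, assembled in the same order. (One minor slip: the node ahead of a contracted leader need not literally be free during the transient --- a follower still on $B(O)\cap H(O)$ may sit there --- but your appeal to the full Lemma~\ref{lem:ellmove} argument, which covers role-swaps, already handles this.)
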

\begin{proof}
By nearly the same argument as for Lemma~\ref{lem:ellmove}, $\ell$ will always eventually move along its traversal of $H(O)$, guided by its counters that continuously update the distances to each half-plane.
However, by Lemma~\ref{lem:marker}, the starting position $s$ cannot be vacated by the marker particle unless another particle replaces it in a handover.
Thus, $\ell$ will be able to traverse at most $2 \cdot |\mathcal{P}|$ nodes of $H(O)$ before all particles in the system are expanded, unable to move any further.
By supposition, $|\mathcal{P}| < \lceil H/2\rceil$: if $H$ is even, then $\lceil H/2\rceil = H/2$ and thus $2\cdot |\mathcal{P}| \leq H - 1$; if $H$ is odd, then $\lceil H/2\rceil = (H+1)/2$ and thus $2\cdot |\mathcal{P}| \leq 2((H+1)/2 - 1) = H - 1$.
Thus, there are insufficient particles to close the hull, even if all particles expand.

When the marker particle is expanded and has no children, which must occur by the above argument, it generates the ``all expanded'' token $all_{exp}$.
Because the $all_{exp}$ token is only passed towards the leader by expanded particles, we are guaranteed that every particle from the marker up to the particle currently holding $all_{exp}$ is expanded.
Thus, if $\ell$ ever receives the $all_{exp}$ token but does not have the marker particle in its neighborhood, $\ell$ can locally decide that there are insufficient particles to close the hull.
Termination is then broadcast from $\ell$.
\end{proof}

Assuming there are sufficient particles to close the hull, we must show that the leader successfully completes its traversal of $H(O)$ and advances to the hull filling subphase.

\begin{lemma} \label{lem:hullclosing}
If there are at least $\lceil H/2 \rceil$ particles in the system, then the leader $\ell$ will complete its traversal of $H(O)$, closing the hull.
\end{lemma}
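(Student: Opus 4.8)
The plan is to mirror the proof of Lemma~\ref{lem:insufficienthullclose}, but to exploit the reversed inequality $2|\mathcal{P}| \ge 2\lceil H/2\rceil \ge H$: with at least $\lceil H/2\rceil$ particles, the chain trailing the leader has enough length to wrap all the way around $H(O)$ and reach the marker particle pinned at the starting position $s$.

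First I would establish a progress guarantee for the leader in the hull closing subphase, analogous to Lemma~\ref{lem:ellmove}. Whenever $\ell$ has not yet completed the subphase: if $\ell$ is expanded, the spanning forest primitive~\cite{Derakhshandeh2017} guarantees that some hull (or follower) child eventually contracts, so $\ell$ can perform a pull handover and become contracted; if $\ell$ is contracted, its zero-tests on the six $d_h$ counters are eventually available and reliable (Lemmas~\ref{lem:0testavailable} and~\ref{lem:0testreliable}), so it correctly detects whether it has reached a vertex of $H(O)$ and updates $\ell.\plane$, then inspects the next traversal node $v$ and either detects the marker there or moves into $v$ --- by expansion if $v$ is empty, or by a role-swap with a contracted occupant of $v$ once bit forwarding (Lemma~\ref{lem:counterbitforward}) has restored two bits of every counter. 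Since the learning phase leaves $\ell$ with an accurate measurement of $H(O)$ (Theorem~\ref{thm:singlecorrect}, realized by the distributed counters, which stay connected, non-self-intersecting, and well-defined by Lemmas~\ref{lem:counterconnect} and~\ref{lem:counterextend} and Corollaries~\ref{cor:counterintersect} and~\ref{cor:countersufficient}), this rule makes $\ell$ trace the boundary of $H(O)$ node by node in clockwise order, turning $60^{\circ}$ at each of the six vertices, exactly as the single particle of Section~\ref{sec:soloalg} would.

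Next I would bound how far $\ell$ advances. As in the proof of Lemma~\ref{lem:insufficienthullclose}, the marker particle cannot vacate $s$ except in a handover, so the chain of particles that have entered $H(O)$ behind $\ell$ can never occupy more than $2|\mathcal{P}|$ distinct nodes before all particles are expanded and maximally packed; equivalently, $\ell$ can always keep advancing until its head has occupied $\min\{2|\mathcal{P}|,H\}$ distinct nodes of $H(O)$ (any delay caused by a contracted follower that drifted onto a shared node of $H(O)\cap B(O)$ ahead of $\ell$ is finite, since $\ell$ role-swaps past it once it contracts, exactly as in the learning phase). Because $|\mathcal{P}|\ge\lceil H/2\rceil$, we have $2|\mathcal{P}|\ge H=|H(O)|$, so $\ell$ can advance until its head occupies the node $u$ that is counterclockwise-adjacent to $s$ --- i.e., until it has visited all $H$ nodes of the hull. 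Then the next node of $\ell$'s traversal is $s$, which by Lemma~\ref{lem:marker} holds the marker (or pre-marker) particle, and $\ell$'s marker check --- which does not require $\ell$ to be contracted --- fires, so $\ell$ completes the hull closing subphase.

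The hard part will be ruling out a premature deadlock: a configuration in which $\ell$ is expanded, every particle is expanded and packed against the marker at $s$, yet $\ell$'s head has not reached $u$. This is exactly where $2|\mathcal{P}|\ge H$ is used: a chain of expanded particles running clockwise from the marker at $s$ to $\ell$'s head already spans $\ge H$ hull nodes, so $\ell$'s head must be at $u$ (it cannot be farther, since $\ell$ would have completed), and the marker check fires before any deadlock materializes; when $|\mathcal{P}|>\lceil H/2\rceil$ strictly, there is always a contracted or not-yet-placed particle to supply the slack for the handovers that keep $\ell$ moving. I would also verify that the ``all expanded''/``termination'' token machinery meant for the insufficient case cannot misfire here: $\ell$ concludes ``insufficient'' only upon receiving the ``all expanded'' token while the marker is \emph{not} in its neighborhood, and by the same coverage count the marker is adjacent to $\ell$ by the time such a token could arrive, so that branch is never taken.
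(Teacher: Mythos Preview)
Your proposal is correct and follows essentially the same three-step structure as the paper: (1) a progress guarantee for $\ell$ along $H(O)$ via the analogue of Lemma~\ref{lem:ellmove}, (2) the inequality $2|\mathcal{P}|\ge H$ to show the chain can reach all the way around, and (3) an argument that the $all_{exp}$ token cannot cause premature termination because by the time it arrives the marker must already be adjacent to $\ell$. The one place where the paper is more careful is step (3): your ``same coverage count'' implicitly assumes that all $|\mathcal{P}|$ particles lie on the marker-to-leader chain when $all_{exp}$ is generated, whereas the paper instead invokes Lemma~\ref{lem:followerlen} (together with $B\ge H$ and the completed boundary traversal) to establish rigorously that at least $\lceil H/2\rceil$ particles are on the traversal path, which is all that is needed. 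Either route works, but you should cite Lemma~\ref{lem:followerlen} (or justify the subtree-collapse fact) rather than leave the count implicit.
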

\begin{proof}
Once again, by nearly the same argument as for Lemma~\ref{lem:ellmove}, $\ell$ will always eventually move along its traversal of $H(O)$.
As in Lemma~\ref{lem:insufficienthullclose}, $\ell$ will be able to traverse at most $2 \cdot |\mathcal{P}|$ nodes of $H(O)$.
By supposition, since $|\mathcal{P}| \geq \lceil H/2\rceil$, we have that $2 \cdot |\mathcal{P}| \geq H$.
Thus, there are enough particles for $\ell$ to close the hull.

So it remains to show that the ``all expanded'' token $all_{exp}$ does not cause $\ell$ to terminate incorrectly when there are sufficient particles to close the hull.
By Lemma~\ref{lem:ellmove}, $\ell$ has completed at least one traversal of $B(O)$.
Combining Lemma~\ref{lem:followerlen} with our supposition that $|\mathcal{P}| \geq \lceil H/2\rceil$ and the fact that $B \geq H$, we have that $\ell$ has at least
\[\min\left\{|\mathcal{P}|, \left\lceil \frac{B}{2}\right\rceil\right\} - 1 \geq \min\left\{\left\lceil \frac{H}{2}\right\rceil, \left\lceil \frac{B}{2}\right\rceil\right\} - 1 \geq \left\lceil \frac{H}{2}\right\rceil - 1\]
particles following it.
Thus, in order for the marker particle to be expanded and have no children --- allowing it to generate the $all_{exp}$ token --- there must be at least $\lceil H / 2\rceil$ particles from the marker particle to $\ell$ all on the hull.
If the $all_{exp}$ token is eventually passed to $\ell$, then all of these particles from the marker to $\ell$ must be expanded.
So there must be exactly $H / 2$ of them, since they are all expanded but must fit in the $H$ nodes of the convex hull.
Therefore, either $\ell$ receives the $all_{exp}$ token but has already closed the hull or $\ell$ never receives the $all_{exp}$ token.
In either case, $\ell$ closes the hull and can advance to the hull filling subphase.
\end{proof}

The hull filling subphase begins when $\ell$ encounters the marker particle, closing the hull and finishing.
At this point, the hull may be occupied by both expanded and contracted particles.
We must show that this subphase fills the hull with as many contracted particles as possible.

\begin{lemma} \label{lem:hullfillprogress}
If $H(O)$ is closed but not filled with all contracted particles and there exists a particle occupying a node not in $H(O)$, then at least one particle can make progress towards filling another hull position with a contracted particle.
\end{lemma}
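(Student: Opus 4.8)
The plan is to argue by cases on where the off-hull particle lies relative to the closed hull, and to show that in each case one of the three hull-filling primitives of Section~\ref{subsubsec:hullfilling} eventually becomes enabled and executes a step that strictly advances the configuration toward a hull filled with contracted particles. First I would fix the two objects handed to us by the hypothesis: since $H(O)$ is closed but not all of its occupants are contracted, some finished particle $Q^\ast$ on $H(O)$ is expanded; and by assumption some particle $P$ occupies a node $u \notin H(O)$. Because the spanning forest primitive~\cite{Derakhshandeh2017} keeps the particle system connected and every node of $H(O)$ is occupied, the subtree containing $P$ hangs off the hull, so without loss of generality I may take $P$ to be adjacent to a finished hull particle. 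The node $u$ lies either in the bounded region between $H(O)$ and $O$, in which case a neighboring finished particle labels $P$ \emph{trapped}, or in the unbounded region outside $H(O)$, in which case it labels $P$ a \emph{filler}; I would first check that this label is always well defined and applied, using the relative position of $P$ to the labeling particle's parent (the next hull particle clockwise) exactly as described in Section~\ref{subsubsec:hullfilling}.

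In the trapped case I would trace the freeing primitive of \figtext~\ref{fig:trapped} to completion: $P$ marks its adjacent finished particle $P'$ as a pre-filler, $P'$ expands to an unoccupied exterior node, $P'$ and $P$ perform a handover so that $P'$ is ejected outside the hull as a filler and $P$ enters the hull as a (pre-)finished particle, and finally $P$ contracts and becomes finished. This is progress: the interior population strictly decreases and a filler particle is produced. Hence it suffices to complete the argument under the assumption that $P$ (or the filler just created) lies outside $H(O)$.

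For the filler case, if $P$ is expanded it either contracts or pulls a contracted follower child in a handover, both legal moves; so assume $P$ is contracted. Then $P$ performs a clockwise traversal of the hull's surface, a cyclic sequence of finished particles of length $\Theta(H)$. I would argue, as in Lemma~\ref{lem:ellmove} and using the spanning forest primitive, that $P$ is never permanently blocked along this traversal: the only obstruction it can meet is the tail of an expanded finished particle, which is precisely the situation that triggers the push handover filling a hull node with a contracted particle. Since $Q^\ast$ is an expanded finished particle lying somewhere on this cycle, after at most $\Theta(H)$ of $P$'s moves $P$ reaches a node $v$ whose first occupied clockwise neighbor is the tail of an expanded finished particle, at which point the push handover (followed by $P$'s contraction to \emph{finished}) replaces one expanded hull particle by two contracted ones, strictly decreasing the number of expanded hull occupants.

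I expect the main obstacle to be the no-deadlock claims underlying the two traversal arguments: one must rule out a cyclic wait among the \emph{trapped}, \emph{pre-filler}, \emph{filler}, and \emph{pre-finished} states, and in particular verify that a pre-filler finished particle eventually obtains an empty exterior node to expand into --- which again leans on the facts that the exterior of a convex hull is unbounded (so occupied exterior nodes are finite in number and themselves mobile) and that the off-hull particles form subtrees hanging off the hull. A secondary point to dispatch is that the ``all contracted'' termination token cannot have fired, since the hull is not yet all contracted, so no particle has terminated and the traversals above remain live. Combining the trapped and filler cases yields the lemma.
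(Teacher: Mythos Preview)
Your plan differs structurally from the paper's proof, and the part you flag as the ``main obstacle'' is in fact the entire content of the lemma. You choose one off-hull particle $P$ and trace a primitive (freeing or filling) forward over several activations; the paper instead proves a \emph{single-step} progress statement by exhaustively categorizing all off-hull particle types into six ordered classes and showing that if no particle in classes $1,\ldots,i$ can act, then some particle in class $i{+}1$ can. This categorization is precisely what discharges the no-deadlock claim you defer: for instance, a contracted pre-filler blocked from expanding outward (your worry) is handled only after establishing that no fillers or followers remain on the hull's exterior surface, which in turn is established only after showing all expanded fillers and pre-finished particles have contracted, and so on down the chain.

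Your trace-one-particle approach does not close this gap. When you write that the pre-filler $P'$ ``expands to an unoccupied exterior node'' or that the filler $P$ ``is never permanently blocked along this traversal,'' you are implicitly assuming away the interference of \emph{other} off-hull particles (fillers ahead of $P$ on the surface, followers hanging off pre-finished particles, pre-fillers protruding outward). In each such blocked configuration the particle making progress is not $P$ but some other particle whose action you have not identified. To make your outline go through you would need, at each intermediate configuration along your trace, an argument that \emph{something} can move---and carrying that out amounts to reconstructing the paper's layered case analysis. The appeal to Lemma~\ref{lem:ellmove} does not help here: that lemma concerns a single leader on $B(O)$ with followers behind it, not multiple independent fillers circulating on the hull surface and potentially obstructing one another and the pre-fillers.
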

\begin{proof}
The main idea of this argument is to categorize all types of particles that occupy nodes outside $H(O)$ and then order these categories such that if no particles in the first $i$ categories exist, then a particle in the $(i+1)$-th category must be able to make progress.

The first category contains all types of particles that are able to make progress without needing changes in their neighborhoods.
Any idle particle adjacent to a non-idle particle can become a follower in its next activation.
Similarly, any hull particle adjacent to a finished particle can become finished in its next activation.

If no particles from the first category exist to make progress independently, we show a particle from this second category can make progress.
Any expanded particle with no children can contract in its next activation, since there are no idle particles adjacent to non-idle particles.
Since no hull particles are adjacent to finished particles, all hull particles must be finished.
Thus, any contracted follower adjacent to a node of $H(O)$ will be labeled as either trapped or filler by a neighboring finished particle in its next activation.
Moreover, any trapped particle must have a finished parent: if this parent is expanded, the trapped particle can perform a handover with it to become pre-finished; otherwise, the trapped particle can mark this parent as a pre-filler.

Now consider a third category, assuming no particles from the first two categories exist.
Among expanded followers waiting to contract in a handover, at least one expanded follower must have a contracted follower child to handover with because all followers with no children are contracted.
Any expanded pre-filler must have a contracted trapped particle it is freeing, so these particles can perform a handover in their next activation, causing the trapped particle to become pre-finished and the pre-filler to become a filler.

The fourth category follows from the third.
Any expanded filler or pre-finished particle must be waiting to perform a handover with a contracted follower child since all expanded particles with no children have already contracted.
But all followers are now contracted and not adjacent to nodes of $H(O)$.
Thus, any expanded filler or pre-finished particle waiting to perform a handover with a contracted follower child can do so in its next activation.

The fifth category follows from the third and fourth.
From the fourth category, we can now assume all filler particles are contracted.
So any contracted filler particles that can handover with a neighboring expanded finished particle do so, becoming pre-finished.
But from the third category, we know there are no expanded pre-fillers protruding onto the surface of the convex hull.
So there must exist a contracted filler whose next node on its clockwise traversal of the surface of $H(O)$ is unoccupied, and this contracted filler can expand in its next activation.

The final category contains contracted pre-fillers needing to expand outwards, onto the exterior of the convex hull.
From the previous categories, we can assume that there are no longer any followers or fillers on the surface of the convex hull.
Thus, nothing is blocking a contracted pre-filler from expanding outwards in its next activation.
Therefore, as these six categories are exhaustive, we conclude that as long as there exists a particle occupying a node outside $H(O)$, at least one particle can make progress.
\end{proof}

Applying Lemma~\ref{lem:hullfillprogress} iteratively, we can immediately conclude that the convex hull is eventually filled with all contracted particles if there are enough particles to do so, i.e., if there are at least $H$ particles.
However, if there are $\lceil H / 2 \rceil \leq |\mathcal{P}| < H$ particles (i.e., there are enough particles to close the hull but not enough to fill it with all contracted particles), applying Lemma~\ref{lem:hullfillprogress} iteratively shows that the hull is filled with as many contracted particles as possible.
The following lemma shows that the system terminates correctly in either case.

\begin{lemma} \label{lem:hullfilling}
If there are at least $\lceil H / 2 \rceil$ particles in the system, all particles eventually terminate, filling $H(O)$ with as many contracted particles as possible.
\end{lemma}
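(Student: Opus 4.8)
The plan is to combine Lemma~\ref{lem:hullclosing} with an iterated application of the progress guarantee of Lemma~\ref{lem:hullfillprogress}, use a potential argument to show this iteration terminates, and then verify that the particles actually come to rest in the \emph{finished} state in each of the two cases $|\mathcal{P}| \ge H$ and $\lceil H/2\rceil \le |\mathcal{P}| < H$.

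First I would invoke Lemma~\ref{lem:hullclosing}: since $|\mathcal{P}| \ge \lceil H/2\rceil$, the leader closes $H(O)$ and advances to the hull filling subphase, becoming finished and emitting the ``all contracted'' token. Moreover, $H(O)$ stays closed from then on, because every movement of this subphase vacates a hull node only simultaneously with another particle expanding onto it (as in the pre-filler/trapped handover of \figtext~\ref{fig:trapped}). Consequently, at all later times the $H$ nodes of $H(O)$ are occupied by exactly $m$ particles of which $2m - H$ are contracted, where $m$ is the number of particles currently on $H(O)$; in particular, once \emph{every} particle occupies a node of $H(O)$ the hull automatically carries the maximum number of contracted particles consistent with $|\mathcal{P}|$ (which is exactly $H$ when $m = |\mathcal{P}| = H$).

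Next I would show the filling process performs only finitely many ``progress'' moves, via a lexicographic potential $\Phi$ whose leading components measure how far each idle, follower, or hull particle still is from being processed (weighting idle above follower/hull above committed), followed by the number of particles still lying strictly inside $H(O)$, the number of expanded particles on $H(O)$, and finally the total clockwise surface-distance the filler particles must still travel to reach a target. Each particle's state only advances along the fixed pipeline idle $\to$ follower $\to$ trapped/filler $\to$ pre-finished/pre-filler $\to$ finished, no particle ever re-enters the interior of $H(O)$, the count of expanded particles on the hull is non-increasing, and each filler expand-then-contract cycle shortens a surface-distance by one; checking each of the six categories in the proof of Lemma~\ref{lem:hullfillprogress} shows that every enabled progress move strictly decreases $\Phi$, which is bounded below. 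Since by Lemma~\ref{lem:hullfillprogress} some progress move is enabled whenever $H(O)$ is not yet filled with contracted particles and some particle lies outside $H(O)$, fairness forces $\Phi$ to keep decreasing until one of those two conditions fails.

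Finally I would close the two cases. If $|\mathcal{P}| \ge H$, then ``no particle outside $H(O)$'' is possible only when $|\mathcal{P}| = H$ with all particles contracted, i.e.\ it coincides with the other stopping condition; so the iteration ends with $H(O)$ filled by $H$ contracted particles. The ``all contracted'' token then circulates unobstructed through the now contracted finished particles, incrementing at each of the six hull vertices until it reaches $t = 7$; its holder consumes it and broadcasts ``termination'' tokens, so every particle --- including the $|\mathcal{P}| - H$ extras still outside --- becomes finished. If instead $\lceil H/2\rceil \le |\mathcal{P}| < H$, the hull can never be filled with only contracted particles, so the iteration ends with every particle on $H(O)$; tracing the \parent-chains back to the finished leader, every hull particle eventually becomes finished and every pre-finished particle finishes upon contracting, so all $|\mathcal{P}|$ particles terminate on $H(O)$ with $2|\mathcal{P}| - H$ of them contracted, which is maximal. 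I expect the main obstacle to be the careful bookkeeping behind $\Phi$ --- in particular ruling out that filler particles circle $H(O)$ forever, which hinges on the expanded-finished count being non-increasing and on each successful filler handover strictly decreasing that count even when it lengthens another filler's remaining traversal (with the degenerate ``no target anywhere'' situation coinciding exactly with $H(O)$ being filled with contracted particles, where the termination broadcast of the first case takes over).
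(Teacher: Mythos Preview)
Your proposal is correct and follows essentially the same skeleton as the paper's proof: invoke Lemma~\ref{lem:hullclosing} to close the hull, iterate Lemma~\ref{lem:hullfillprogress} until either every hull node is filled by a contracted particle or every particle is on $H(O)$, and then handle the two cases $|\mathcal{P}|\ge H$ (the ``all contracted'' token completes its lap and termination is broadcast to the extras) and $\lceil H/2\rceil\le|\mathcal{P}|<H$ (every particle is on the hull and becomes finished).

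The one genuine difference is that you supply an explicit lexicographic potential to justify that the iteration of Lemma~\ref{lem:hullfillprogress} actually terminates, whereas the paper simply asserts ``applying Lemma~\ref{lem:hullfillprogress} iteratively'' and defers any potential-style bookkeeping to the separate runtime argument in Lemma~\ref{lem:hullfillruntime} (which uses a different, simpler potential $\Phi(t)=-|U_t|+f_t-d(\mathcal{F}_t)$). Your version is more self-contained for correctness; the paper's is terser but leans on the reader accepting that ``progress'' in the sense of Lemma~\ref{lem:hullfillprogress} cannot recur infinitely. One caution: your claim that ``the count of expanded particles on $H(O)$ is non-increasing'' needs to be stated carefully, since a filler push-handover momentarily replaces one expanded finished particle straddling two hull nodes by a contracted finished particle plus an expanded pre-finished particle whose head is on the hull; the count you want is really ``hull nodes covered by expanded \emph{finished} particles,'' which does strictly drop at each such handover and is what drives your filler-termination argument.
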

\begin{proof}
Since there are at least $\lceil H / 2\rceil$ particles, the hull will be closed by Lemma~\ref{lem:hullclosing}.
If the system contains $|\mathcal{P}| \geq H$ particles, then applying Lemma~\ref{lem:hullfillprogress} iteratively shows that the hull is eventually entirely filled with contracted finished particles.
However, the $|\mathcal{P}| - H$ extra particles must also terminate.
Recall that the leader $\ell$ generates the ``all contracted'' token $all_{con}$ before it finishes at the start of the phase, and that this token is passed backwards along the hull over contracted finished particles only.
Thus, we are guaranteed that every particle from the particle currently holding the $all_{con}$ token up to the finished particle that was the leader is contracted and finished.
Since the hull is eventually filled with all contracted finished particles, $all_{con}$ eventually completes its traversal of the hull, triggering termination that is broadcast to all particles in the system.

If the system contains $\lceil H / 2 \rceil \leq |\mathcal{P}| < H$ particles, then there are too few particles to fill $H(O)$ with only contracted particles.
Thus, applying Lemma~\ref{lem:hullfillprogress} iteratively shows that eventually all particles join the hull and become finished.
Therefore, in all cases, the hull is filled with as many contracted particles as possible and all particles eventually finish.
\end{proof}

We summarize our correctness results in the following theorem.

\begin{theorem} \label{thm:distributedcorrect}
The Convex Hull Algorithm correctly solves instance $(\mathcal{P}, O)$ of the convex hull formation problem if $|\mathcal{P}| \geq |H(O)|$, and otherwise forms a maximal partial strong $\Otri$-hull of $O$.
\end{theorem}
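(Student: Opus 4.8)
The plan is to assemble the phase-by-phase correctness results already established, branching on $|\mathcal{P}|$ relative to $H = |H(O)|$.

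First I would observe that the learning phase always completes correctly under the standing assumption $|\mathcal{P}| > \log_2 H$. The adapted counters stay connected (Lemma~\ref{lem:counterconnect}), never self-intersect (Corollary~\ref{cor:counterintersect}), always have enough particles to be maintained (Corollary~\ref{cor:countersufficient}), extend along an unambiguous path of particles (Lemma~\ref{lem:counterextend}), and admit bit forwarding that keeps the leader supplied with two bits per counter (Lemma~\ref{lem:counterbitforward}); hence the counter semantics of Section~\ref{sec:counter} carry over, and in particular Lemmas~\ref{lem:0testavailable} and~\ref{lem:0testreliable} give the leader an always-eventually-available, reliable zero-test on each $d_h$. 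Together with Lemma~\ref{lem:ellmove} this makes the leader a faithful emulation of the single-particle algorithm, so by Theorem~\ref{thm:singlecorrect} the leader eventually ends the learning phase at a node $s \in B(O) \cap H(O)$ whose counters hold the exact $L_1$-distances from $s$ to the six half-planes composing $H(O)$.

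Next I would carry out the formation phase in two regimes. If $|\mathcal{P}| \geq \lceil H/2 \rceil$: Lemma~\ref{lem:marker} keeps the starting node $s$ occupied by the leader or a (pre-)marker particle, so the leader detects when it has traversed all of $H(O)$; Lemma~\ref{lem:hullclosing} then shows the hull closing subphase completes with every node of $H(O)$ occupied and the ``all expanded'' token never terminating the leader prematurely; and Lemma~\ref{lem:hullfilling} shows the hull filling subphase terminates all particles with $H(O)$ holding as many contracted particles as possible. In the sub-case $|\mathcal{P}| \geq H$ this fills every node of $H(O)$ with a contracted particle --- the $|\mathcal{P}| - H$ surplus particles terminating via the ``all contracted'' token --- which is exactly solving the instance; in the sub-case $\lceil H/2 \rceil \leq |\mathcal{P}| < H$ the hull is closed but, since there are fewer than $H$ particles, some stay expanded, yielding a maximal partial hull. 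If instead $|\mathcal{P}| < \lceil H/2 \rceil$, Lemma~\ref{lem:insufficienthullclose} shows every particle eventually terminates expanded over two consecutive nodes of $H(O)$, so the $|\mathcal{P}|$ particles cover the maximum possible $2|\mathcal{P}|$ consecutive nodes of $H(O)$ starting from $s$ --- again a maximal partial strong $\Otri$-hull.

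The work here is mostly bookkeeping rather than a deep obstacle: one must check that the three regimes $|\mathcal{P}| < \lceil H/2 \rceil$, $\lceil H/2 \rceil \leq |\mathcal{P}| < H$, and $|\mathcal{P}| \geq H$ exhaust all inputs consistent with $|\mathcal{P}| > \log_2 H$, and that the lower and middle regimes both realize the intended notion of a ``maximal partial strong $\Otri$-hull'' --- a maximum-length contiguous arc of $H(O)$ covered by the particle system with as many of those particles contracted as possible. The one point deserving an explicit remark is the lack of synchronization between phases (a follower may still be in the learning phase while the leader is already forming the hull): every lemma invoked above is phrased in terms of what occurs ``eventually'' under a fair activation sequence, and the follower movement rules are identical across phases, so lagging followers can neither stall the leader nor corrupt a counter, and the chained argument goes through unchanged.
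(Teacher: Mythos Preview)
Your proposal is correct and mirrors the paper's approach exactly: the paper presents Theorem~\ref{thm:distributedcorrect} purely as a summary of the preceding correctness lemmas, and your assembly invokes precisely the same results (the counter lemmas, Lemma~\ref{lem:ellmove} plus Theorem~\ref{thm:singlecorrect} for the learning phase, then Lemmas~\ref{lem:marker}, \ref{lem:insufficienthullclose}, \ref{lem:hullclosing}, and \ref{lem:hullfilling} for the formation phase) with the same case split on $|\mathcal{P}|$ relative to $\lceil H/2\rceil$ and $H$. If anything, your write-up is more explicit than the paper's one-line ``we summarize our correctness results'' preceding the theorem statement.
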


\subsection{Runtime Analysis} \label{subsec:multiruntime}

We now bound the worst-case number of asynchronous rounds for the leader $\ell$ to learn and form the convex hull.
As in Section~\ref{sec:counter}, we use dominance arguments to show that the worst-case number of parallel rounds required by a carefully defined parallel schedule is no less than the runtime of our algorithm.
The first dominance argument will show that the counters bits are forwarded quickly enough to avoid blocking leader expansions and role-intos.
The second will relate the time required for $\ell$ to traverse the object's boundary and convex hull to the running time of our algorithm.
Both build upon \ifconf the work of~\cite{Daymude2018}\else previous work~\cite{Daymude2018}\fi, which analyzed spanning trees of particles led by their root particles.
Several nontrivial extensions are needed here to address the interactions between the counters and particle movements as well as traversal paths that can be temporarily blocked.
\ifconf
These arguments are verbose when described in full detail and thus are only described at a high level; the full arguments can be found in ~\cite{arXiv}.
\else\fi

We first analyze the performance of bit forwarding (Section~\ref{subsec:counteradapt}).
Note that this is independent of the actual counter operations analyzed in Section~\ref{sec:counter}; here, we analyze how the particles forward their counter bits towards the leader $\ell$.
Suppose a counter $d_h$ is maintained by particles $\ell = P_0, P_1, \ldots P_k$, that is, each particle $P_i$ holds 1--2 bits of $d_h$ and particle $P_k$ holds the final token $f_h$.
A \emph{bit forwarding configuration} $C$ of counter $d_h$ encodes the number of counter elements (i.e., bits of $d_h$ or the final token $f_h$) each particle holds as $C = [C(0), \ldots, C(k)]$, where $C(i) \in \{1, 2\}$ is the number of elements held by particle $P_i$.
A bit forwarding configuration $C$ \emph{dominates} another configuration $C'$ --- denoted $C \succeq C'$ --- if and only if the first $i$ particles of $C$ hold at least as many bits of $d_h$ as the first $i$ particles of $C'$ do, i.e., if $\sum_{j=0}^iC(j) \geq \sum_{j=0}^iC'(j)$ for all $i \in \{0, \ldots, k\}$.

\begin{definition} \label{def:bitschedule}
A \emph{parallel bit forwarding schedule} $(C_0, \ldots, C_T)$ is a sequence of bit forwarding configurations such that for every $t \in \{0, \ldots, T\}$, $C_{t+1}$ is reached from $C_t$ such that one of the following holds for each particle $P_i$, where $i \in \{0, \ldots, k\}$:
\begin{enumerate}
\item Particle $P_i$ does not forward or receive any bits, so $C_{t+1}(i) = C_t(i)$.
\item The leader $\ell = P_0$ performs a role-swap with a particle in front of it, say $P_{-1}$, so $C_{t+1}(0) = C_t(0) - 1 = 1$ and $C_{t+1}(-1) = 1$, shifting the indexes forward.
\item Particle $P_k$ holding $f_h$ either forwards $f_h$ to $P_{k-1}$ or $P_{k-1}$ takes $f_h$ from $P_k$, so $C_{t+1}(k) = C_t(k) - 1 = 0$ and $C_{t+1}(k-1) = C_t(k-1) + 1 = 2$.
\item Particle $P_i$ forwards a counter element to $P_{i-1}$ and takes a counter element from $P_{i+1}$, so $C_{t+1}(i+1) = C_t(i+1) - 1$, $C_{t+1}(i) = C_t(i) = 1$, and $C_{t+1}(i-1) = C_t(i-1) + 1 = 2$.
\end{enumerate}
Such a schedule is \emph{greedy} if the above actions are taken whenever possible without disconnecting the counters (i.e., leaving some $C(i) = 0$ for $i < k$) or giving any particle more than two elements.
\end{definition}

Greedy parallel bit forwarding schedules can be directly mapped onto the \emph{greedy parallel (movement) schedules} of~\cite{Daymude2018}.
A particle keeping two counter elements in a bit forwarding configuration $C$ corresponds to a contracted particle in a particle system configuration $M$; two adjacent particles each keeping a single counter element in $C$ correspond to a single expanded particle in $M$.
When mapped this way, Definition~\ref{def:bitschedule} corresponds exactly to the definition of a parallel movement schedule in~\cite{Daymude2018}.
In fact, the way we forward and take tokens (Section~\ref{subsec:learningalg}) can be exactly mapped onto expansions, contractions, and handovers of particles.
So the next result follows immediately from Lemmas 2 and 3 of~\cite{Daymude2018} and the fact that $\ell$ can only role-swap if it has two counter elements.

\begin{lemma} \label{lem:swapunblock}
Suppose leader $\ell$ only has one bit of a counter $d_h$ and is not holding the final token $f_h$ in round $0 \leq t \leq T - 2$ of greedy parallel bit forwarding schedule $(C_0, \ldots, C_T)$.
Then within the next two parallel rounds, $\ell$ will either have a second bit of $d_h$ or will be holding $f_h$.
\end{lemma}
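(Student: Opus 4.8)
The plan is to transfer the claim across the correspondence, set up just above, between greedy parallel bit forwarding schedules and the greedy parallel movement schedules of~\cite{Daymude2018}. First I would pin down that correspondence precisely enough to move the statement: a particle $P_i$ with $C(i) = 2$ plays the role of a contracted particle occupying the single position $P_i$, whereas a pair of consecutive particles $P_i, P_{i+1}$ each holding one counter element plays the role of an expanded particle occupying positions $P_i$ and $P_{i+1}$. Under this dictionary, Action~4 of Definition~\ref{def:bitschedule} is a handover, Action~3 is an expanded leaf particle contracting toward the root (the index $k$ dropping out of the counter being the harmless analog of a vacated node), and Action~2 (the leader's role-swap) is the root particle expanding forward into a fresh position. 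The hypothesis that $\ell$ holds only one bit of $d_h$ and is not holding $f_h$ says exactly that $C(0) = 1$ with $\ell \neq P_k$, i.e., the image of $\ell$ is an \emph{expanded} root; the two alternatives in the conclusion ($C(0) = 2$, or $\ell$ holds $f_h$) both say the root has become \emph{contracted} --- in the second case the counter has shrunk all the way back to $\ell$, so $\ell$ is the counter's only particle and the chain has fully resolved.

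Second, I would invoke Lemmas~2 and~3 of~\cite{Daymude2018}, which establish that an expanded root of a greedy parallel movement schedule becomes contracted within two parallel rounds, and then read the result back through the dictionary: within the next two parallel rounds $C(0) = 2$, so $\ell$ holds a second element of $d_h$ --- a genuine bit, or $f_h$ if the counter happened to shrink to a single particle in the process --- which is precisely the claimed disjunction. The one place where the argument needs care is faithfulness of the dictionary at exactly the step where the bit forwarding rules and the movement rules could come apart: the leader's role-swap is the analog of a root expansion, and in a movement schedule a root may expand only when it is contracted. This holds here because $\ell$ is permitted to role-swap only when it holds two counter elements, i.e., only when $C(0) = 2$; hence while the root image is expanded it cannot perform Action~2, and the two-round bound of~\cite{Daymude2018} applies verbatim.

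I expect the hard part to be the bookkeeping of this correspondence rather than any genuinely new argument: checking that all four actions of Definition~\ref{def:bitschedule} line up with expansions, contractions, and handovers of a greedy parallel movement schedule, handling the boundary behavior at $P_k$ (Action~3 versus a leaf contraction, and the case where the counter shrinks to length one), and confirming that ``greedy'' in Definition~\ref{def:bitschedule} coincides with ``greedy'' in the sense of~\cite{Daymude2018}. Once that is in place the time bound is immediate, as it requires nothing beyond the cited lemmas.
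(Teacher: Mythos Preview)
Your proposal is correct and mirrors the paper's own argument almost exactly: the paper also reduces the claim to Lemmas~2 and~3 of~\cite{Daymude2018} via the same dictionary (two counter elements $\leftrightarrow$ contracted, two adjacent singletons $\leftrightarrow$ expanded), and singles out the same key observation that $\ell$ may role-swap only when it holds two counter elements. You spell out the bookkeeping of the correspondence in more detail than the paper does, but the route is identical.
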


Next, we combine the parallel counter schedule of Definition~\ref{def:counterschedule}, the parallel bit forwarding schedule of Definition~\ref{def:bitschedule}, and the movements of particles following leader $\ell$ to define a more general \emph{parallel tree-path schedule}.
We use these parallel tree-path schedules to bound the runtime of a spanning tree of particles led by a leader traversing some path $L$.
This bound will be the cornerstone of our runtime proofs for the learning and formation phases.
Here, we consider particle system configurations $C$ that encode each particle's position, state, whether it is expanded or contracted, and any counter bits and tokens it may be holding.
Note that $C$ contains all the information encoded by the counter configurations of Definition~\ref{def:counterschedule} and by the bit forwarding configurations of Definition~\ref{def:bitschedule}.
Thus, for a particle system configuration $C$, let $C^{count}$ (resp., $C^{bit}$) be the counter configuration (resp., bit forwarding configuration) based on $C$.

\begin{definition} \label{def:treepathschedule}
A \emph{parallel tree-path schedule} $((C_0, \ldots, C_T), L)$ is a sequence of particle system configurations $(C_0, \ldots, C_T)$ such that the particle system in $C_0$ forms a tree of contracted particles rooted at the leader $\ell$, $L$ is a (not necessarily simple) path in $\Gtri \setminus O$ starting at the position of $\ell$ in $C_0$ and, for every $t \in \{0, \ldots, T\}$, $C_{t+1}$ is reached from $C_t$ such that $(i)$ any counter operations are processed according to the parallel counter schedule $(\Delta, (C_0^{count}, \ldots, C_i^{count}))$ where $\Delta$ is the sequence of counter operations induced by the change vectors $\delta_i$ associated with $L$, $(ii)$ any bit forwarding operations are processed according to the parallel bit forwarding schedule $(C_0^{bit}, \ldots, C_i^{bit})$, and $(iii)$ one of the following hold for each particle $P$:
\begin{enumerate}
\item The next position in path $L$ is occupied by a particle and particle $P = \ell$ role-swaps with it.
\item The next position in path $L$ is unoccupied and particle $P = \ell$ expands into it.
\item Particle $P$ contracts, leaving the node occupied by its tail empty in $C_{t+1}$.
\item Particle $P$ performs a handover with a neighbor $Q$.
\item Particle $P$ does not move, occupying the same nodes in $C_t$ and $C_{t+1}$.
\end{enumerate}
Such a schedule is \emph{greedy} if the parallel counter and bit forwarding schedules are greedy and the above actions are taken whenever possible without disconnecting the particle system or the counters.
\end{definition}

Even when $L$ is not a simple path, we know the distributed binary counters never disconnect or intersect by Lemma~\ref{lem:counterconnect} and Corollary~\ref{cor:counterintersect}.
Thus, for any greedy parallel counter schedule, its greedy parallel counter and bit forwarding schedules are characterized by Theorem~\ref{thm:counterruntime} and Lemma~\ref{lem:swapunblock}, respectively.
Property 1 of Definition~\ref{def:treepathschedule} handles role-swaps.
Recall that if the leader $\ell$ is contracted, it must either hold two bits or the final token of each of its counters in order to role-swap with a particle blocking its traversal path without disconnecting the counters.
So by Lemma~\ref{lem:swapunblock}, $\ell$ is never waiting to perform a role-swap for longer than a constant number of rounds in the parallel execution.
The remaining properties are exactly those of a parallel (movement) schedule defined in~\cite{Daymude2018}.
Thus, by Lemmas 3 and 9 of~\cite{Daymude2018},
\ifconf
the worst-case number of asynchronous rounds required to estimate and form the convex hull is $\mathcal{O}(|L|)$, where $L$ is the leader's path of traversal.
\else
we have the following result:

\begin{lemma} \label{lem:pathruntime}
If $L$ is the (not necessarily simple) path of the leader's traversal, the leader traverses this path in $\mathcal{O}(|L|)$ asynchronous rounds in the worst case.
\end{lemma}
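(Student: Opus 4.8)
The plan is to reduce \Cref{lem:pathruntime} to the parallel-schedule machinery already in place and then invoke the runtime bound of~\cite{Daymude2018}. First I would construct, for the given asynchronous execution in which the leader traverses the path $L$, a corresponding \emph{greedy parallel tree-path schedule} $((C_0, \ldots, C_T), L)$ with $C_0$ equal to the starting configuration, exactly as in the dominance arguments of \Cref{lem:counterdominance} and the analyses of~\cite{Daymude2018}. The claim to establish is that the asynchronous execution \emph{dominates} this greedy parallel schedule: after each asynchronous round $i$, every particle (and the leader in particular) has progressed at least as far along $L$ as in $C_i$, and every counter token has been carried at least as far. Given such a domination, the leader finishes traversing $L$ in the asynchronous execution no later than it does in the parallel schedule, so it suffices to bound the length $T$ of the greedy parallel schedule by $\mathcal{O}(|L|)$.

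For the parallel bound, I would split the per-round progress into two independent concerns and argue each is never stalled for more than a constant number of rounds. The first concern is whether the leader can take its next step along $L$ (an expansion into a free node, a role-swap into an occupied node, or a handover to become contracted): by \Cref{lem:swapunblock}, if $\ell$ lacks a second counter bit it regains one (or holds the final token) within two parallel rounds, so role-swaps are never blocked for longer than a constant; expansions and handovers are handled exactly as the leader/root movements in Lemmas~3 and~9 of~\cite{Daymude2018}, which already give an $\mathcal{O}(|L|)$ parallel bound for a contracted-rooted spanning tree dragging itself along a path of length $|L|$. The second concern is whether the counter operations induced by the change vectors $\delta_i$ along $L$ keep pace: by \Cref{thm:counterruntime}, the $\mathcal{O}(1)$ counters each process their $\mathcal{O}(|L|)$ increment/decrement operations in $\mathcal{O}(|L|)$ parallel rounds, and by \Cref{lem:parallel0testavail} zero-testing is always available in the parallel setting, so a zero-test needed before a leader move never costs more than a single round. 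Since there are only six counters and the constants add, the leader is blocked by counter bookkeeping for only an additive $\mathcal{O}(|L|)$ rounds over the whole traversal. Combining, $T = \mathcal{O}(|L|)$.

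The domination argument itself decomposes along the projections: $C^{count}$ evolves according to the parallel counter schedule (dominated via \Cref{lem:counterdominance}), $C^{bit}$ according to the parallel bit-forwarding schedule (dominated by the map to greedy parallel movement schedules of~\cite{Daymude2018} underlying \Cref{lem:swapunblock}), and the movement of the followers and leader along $L$ according to the parallel (movement) schedule of~\cite{Daymude2018}. The subtlety is that these three components are not fully independent in the asynchronous execution: a leader move can be blocked simultaneously by a missing counter bit, by an unavailable zero-test, or by a follower failing to vacate a node, and the parallel schedule must be shown to ``pessimistically'' account for each blocking reason without double-counting. I expect this interaction — specifically, arguing that whenever the greedy parallel schedule advances the leader, the asynchronous execution has also (by that round) cleared every one of its potential obstacles — to be the main obstacle, handled by the same case analysis used in \Cref{lem:counterdominance} but now with the extra case in which the blocking particle lies on the traversal path and the leader must role-swap; here \Cref{lem:counterbitforward} and \Cref{lem:counterconnect} guarantee the leader always eventually holds the two bits (or the final token) it needs, so the asynchronous execution cannot lag behind the parallel one on this account either. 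The remaining bookkeeping (mapping tokens to expansions/contractions/handovers, verifying that $L$ being non-simple causes no new behavior by Corollary~\ref{cor:counterintersect}) is routine and mirrors the cited lemmas.
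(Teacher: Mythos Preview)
Your proposal is correct and follows essentially the same approach as the paper: define the greedy parallel tree-path schedule, argue (via the dominance machinery of \cite{Daymude2018} and Lemma~\ref{lem:counterdominance}) that the asynchronous execution dominates it, and then bound the parallel schedule by $\mathcal{O}(|L|)$ using Lemma~\ref{lem:swapunblock} for role-swap readiness together with Lemmas~3 and~9 of \cite{Daymude2018} for the underlying movement. The paper's own argument is terser---it simply observes that the counter and bit-forwarding components are characterized by Theorem~\ref{thm:counterruntime} and Lemma~\ref{lem:swapunblock}, that role-swaps are blocked for only a constant number of parallel rounds, and that the remaining actions coincide with the parallel movement schedule of \cite{Daymude2018}---but your more explicit decomposition into the three projections $C^{count}$, $C^{bit}$, and movement, and your identification of their coupling as the main subtlety, is exactly the content the paper is compressing into that citation.
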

\fi

Using Lemma~\ref{lem:pathruntime}, we can directly relate the distance the leader $\ell$ has traversed to the system's progress towards learning and forming the convex hull.
Once again, recall that $B = |B(O)|$ is the length of the object's boundary and $H = |H(O)|$ is the length of the object's convex hull.
By Lemma 5 of~\cite{Daymude2018}, $B$ particles self-organize as a spanning tree rooted at $\ell$ in at most $\mathcal{O}(B)$ asynchronous rounds.
By Lemmas~\ref{lem:estimate} and~\ref{lem:termination}, $\ell$ traverses $B(O)$ at most twice before completing the learning phase.
Thus, by Lemma~\ref{lem:pathruntime}:

\begin{lemma} \label{lem:learningruntime}
The learning phase completes in at most $\mathcal{O}(B)$ asynchronous rounds.
\end{lemma}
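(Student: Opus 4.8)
The plan is to derive Lemma~\ref{lem:learningruntime} by composing the three preceding results, so the argument is short. First I would bound the length of the path $L$ that the leader $\ell$ traces during the learning phase. By Lemma~\ref{lem:estimate}, as soon as $\ell$ has completed a single clockwise traversal of $B(O)$ its convex hull estimate already equals $H(O)$ and never changes again; by Lemma~\ref{lem:termination}, from the round in which the estimate first becomes correct, $\ell$ terminates after at most one further traversal of $B(O)$. Hence $\ell$ performs at most two full traversals of the boundary before ending this phase, giving $|L| \le 2B = \bigO{B}$. Here $L$ is precisely the (not necessarily simple) right-hand-rule path on $B(O)$ for which Lemma~\ref{lem:pathruntime} is stated, and it stays well defined because, by Lemma~\ref{lem:counterconnect} and Corollary~\ref{cor:counterintersect}, the six $d_h$ counters never disconnect or self-intersect.

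Second, I would account for the one-time setup cost: by Lemma~5 of~\cite{Daymude2018}, the particles self-organize into a spanning tree rooted at $\ell$ within $\bigO{B}$ asynchronous rounds. Since the algorithm is unsynchronized this may overlap with $\ell$'s traversal, but charging it separately only overestimates the runtime and still costs only $\bigO{B}$ rounds.

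Third, I would apply Lemma~\ref{lem:pathruntime} to the path $L$ from the first step: the leader --- dragging its spanning tree and maintaining its six distributed counters, and possibly being temporarily blocked and forced to role-swap --- traverses $L$ in $\bigO{|L|} = \bigO{B}$ asynchronous rounds in the worst case. Adding the spanning-tree construction and the traversal then gives $\bigO{B}$ asynchronous rounds for the learning phase.

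The heavy lifting here is really done by Lemma~\ref{lem:pathruntime} and the parallel tree-path schedule of Definition~\ref{def:treepathschedule}; the step I would be most careful about is confirming that their hypotheses genuinely hold during the learning phase --- in particular, that counter operations along $L$ (via Theorem~\ref{thm:counterruntime}) and bit forwarding that might block a role-swap (via Lemma~\ref{lem:swapunblock}) each stall the leader for only a constant number of rounds. Once these are in hand, the bound $|L| \le 2B$ makes the conclusion immediate.
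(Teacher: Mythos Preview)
Your proposal is correct and follows essentially the same approach as the paper: bound the leader's traversal path by $|L|\le 2B$ via Lemmas~\ref{lem:estimate} and~\ref{lem:termination}, invoke Lemma~5 of~\cite{Daymude2018} for the $\bigO{B}$ spanning-tree setup, and then apply Lemma~\ref{lem:pathruntime}. Your extra care in verifying the hypotheses of Lemma~\ref{lem:pathruntime} (counter connectivity, bit forwarding, role-swap delays) is appropriate, though the paper treats these as already established by the discussion preceding that lemma.
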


The analysis of the hull closing subphase is similar, but contains an additional technical detail: the condition that a marker particle can only contract as part of a handover is not represented in a greedy parallel tree-path schedule.
In particular, Property 3 of Definition~\ref{def:treepathschedule} says that a marker particle with no children should contract in a greedy parallel tree-path schedule since doing so does not disconnect the particle system or the counters.
But doing so would vacate the leader's starting position, which is explicitly prohibited by the algorithm.
We could define yet another type of parallel schedule capturing this condition and carefully relate it to the parallel movement schedules of~\cite{Daymude2018}, but ultimately all this effort would only show that the leader progresses according to Lemma~\ref{lem:pathruntime} regardless of this discrepancy until all particles are expanded, at which point no particles can move any further.
Thus, because the analysis is technical without providing any new insights, we simply claim that Lemma~\ref{lem:pathruntime} also holds in the presence of an expanded marker particle with no children as long as there exists at least one contracted particle.
With this observation, we can prove the following runtime bound for the hull closing subphase.

\begin{lemma} \label{lem:hullcloseruntime}
In at most $\mathcal{O}(H)$ asynchronous rounds from when the leader $\ell$ completes the learning phase, either $\ell$ completes its traversal of $H(O)$ and closes the hull or every particle in the system terminates, expanded over two nodes of $H(O)$.
\end{lemma}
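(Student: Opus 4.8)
The plan is to invoke the parallel tree-path schedule machinery (Definition~\ref{def:treepathschedule} and Lemma~\ref{lem:pathruntime}) with the leader's traversal path $L$ taken to be the portion of $H(O)$ that $\ell$ walks during the hull closing subphase, bound $|L| = \mathcal{O}(H)$, and then split into the two cases determined by whether $|\mathcal{P}| \geq \lceil H/2 \rceil$.

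First I would observe that during the hull closing subphase the leader $\ell$ performs a single clockwise traversal of $H(O)$: it follows each of the six half-plane boundaries in turn, turning $60^\circ$ at each of the (at most six) vertices of $H(O)$ as detected by zero-testing, and halts as soon as it encounters the marker particle on the next node of its traversal. Since by Lemma~\ref{lem:marker} the marker always occupies the fixed starting position $s$, the path $L$ has length at most $|H(O)| = H$, so $|L| = \mathcal{O}(H)$; moreover the change vectors $\delta_i$ along $L$ induce only $\mathcal{O}(1)$ counter operations per step, hence $\mathcal{O}(H)$ in total, so by Theorem~\ref{thm:counterruntime} the zero-tests are always eventually available and by Lemma~\ref{lem:swapunblock} $\ell$ is blocked for only a constant number of rounds before each role-swap. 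Thus a greedy parallel tree-path schedule $((C_0, \ldots, C_T), L)$ processes everything in $\mathcal{O}(|L|) = \mathcal{O}(H)$ parallel rounds, and by the dominance argument behind Lemma~\ref{lem:pathruntime} (in the extended form stated just above, which permits an expanded marker particle with no children so long as some contracted particle remains) the worst-case number of asynchronous rounds is also $\mathcal{O}(H)$.

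Then I would dispatch the two cases. If $|\mathcal{P}| \geq \lceil H/2 \rceil$, Lemma~\ref{lem:hullclosing} guarantees $\ell$ actually completes its traversal and closes the hull, and by the bound above this occurs within $\mathcal{O}(H)$ asynchronous rounds. If instead $|\mathcal{P}| < \lceil H/2 \rceil$, Lemma~\ref{lem:insufficienthullclose} shows $\ell$ traverses at most $2|\mathcal{P}| \leq H - 1$ nodes of $H(O)$ before every particle is expanded and stuck, which by the extended Lemma~\ref{lem:pathruntime} takes $\mathcal{O}(|\mathcal{P}|) = \mathcal{O}(H)$ rounds; afterwards the marker emits the ``all expanded'' token, which is forwarded over expanded particles only along a path of $\mathcal{O}(|\mathcal{P}|) = \mathcal{O}(H)$ particles to reach $\ell$, and the ``termination'' token is relayed back the same $\mathcal{O}(H)$ distance to broadcast termination. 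Since each token hop costs $\mathcal{O}(1)$ rounds, this cleanup adds only $\mathcal{O}(H)$ rounds, and summing the two stages keeps the total at $\mathcal{O}(H)$.

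I expect the \textbf{main obstacle} to be rigorously justifying the use of Lemma~\ref{lem:pathruntime} despite the mismatch flagged in the text: a greedy parallel tree-path schedule would contract an expanded, childless marker particle (Property~3 of Definition~\ref{def:treepathschedule}), whereas the algorithm forbids the marker from ever vacating $s$ outside a handover. As the surrounding discussion indicates, I would argue that this is the \emph{only} deviation, that it can apply to the marker particle alone, and that until it becomes relevant --- i.e., until every particle is expanded --- the leader's progress still obeys the $\mathcal{O}(|L|)$ bound; once that moment arrives no particle can move at all, the schedule has effectively ended, and the $\mathcal{O}(H)$ token-passing cleanup of Lemma~\ref{lem:insufficienthullclose} takes over. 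The remaining care, bounding that token propagation within the same asymptotic budget, is comparatively routine.
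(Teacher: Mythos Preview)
Your proposal is correct and follows essentially the same approach as the paper: split on whether $|\mathcal{P}| \geq \lceil H/2 \rceil$, invoke Lemma~\ref{lem:pathruntime} (in its extended form tolerating the marker anomaly) with $|L| = \mathcal{O}(H)$ for the traversal, cite Lemmas~\ref{lem:hullclosing} and~\ref{lem:insufficienthullclose} for the two cases, and in the insufficient case bound the $all_{exp}$ and termination token passes by $\mathcal{O}(|\mathcal{P}|) = \mathcal{O}(H)$ each. The only minor refinement the paper makes is noting that the $all_{exp}$ token may be generated \emph{before} all particles are expanded, but once they all are it advances at least once per round, which yields the same $\mathcal{O}(H)$ bound.
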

\begin{proof}
If there are sufficiently many particles to close the hull (i.e., $|\mathcal{P}| \geq \lceil H/2 \rceil$), then $\ell$ will complete its traversal of $H(O)$ by Lemma~\ref{lem:hullclosing}.
The length of this traversal is $H$, so by Lemma~\ref{lem:pathruntime}, $\ell$ closes the hull in at most $\mathcal{O}(H)$ asynchronous rounds.

If instead there are insufficient particles to close the hull (i.e., $|\mathcal{P}| < \lceil H/2 \rceil$), then Lemma~\ref{lem:insufficienthullclose} shows that every particle will eventually be expanded, occupying nodes of $H(O)$.
Until all particles become expanded, there must exist at least one contracted particle in the system.
The length of the leader's traversal path in this case is $2 \cdot |\mathcal{P}| < H$, so by Lemma~\ref{lem:pathruntime}, all particles become expanded in at most $\mathcal{O}(H)$ asynchronous rounds.
Whenever it was that the marker particle first became expanded and had no children, it generated the ``all expanded'' token $all_{exp}$.
Once all particles are expanded, the $all_{exp}$ token must be passed forward at least once per asynchronous round, so the $all_{exp}$ token reaches $\ell$ in at most another $\mathcal{O}(|\mathcal{P}|) = \mathcal{O}(H)$ asynchronous rounds.
In a similar fashion, it takes at most another $\mathcal{O}(|\mathcal{P}|) = \mathcal{O}(H)$ asynchronous rounds for $\ell$ to broadcast termination to all particles in the system.
Thus, every particle in the system is terminated and expanded over two nodes of $H(O)$ in at most $\mathcal{O}(H)$ asynchronous rounds.
\end{proof}

The final runtime lemma analyzes the hull filling subphase.

\begin{lemma} \label{lem:hullfillruntime}
In at most $\mathcal{O}(H)$ asynchronous rounds from when the leader $\ell$ closes the hull, $\min\{|\mathcal{P}|, H\}$ nodes of $H(O)$ will be filled with contracted finished particles.
\end{lemma}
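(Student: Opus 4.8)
The plan is to mirror the dominance-argument template already used for the hull closing subphase: define a greedy parallel schedule for the hull-filling primitives, bound the number of its rounds by $\mathcal{O}(H)$, and then invoke the dominance of asynchronous executions over such schedules (in the spirit of Lemma~\ref{lem:pathruntime} and the analysis of~\cite{Daymude2018}) to transfer the bound to asynchronous rounds. Two pieces of bookkeeping that happen regardless of the particle count are easy to bound first: when $\ell$ closes the hull it becomes finished and emits the $all_{con}$ token, and the \emph{finished} state then propagates along the parent pointers around the closed hull --- a cycle of $H$ nodes --- so within $\mathcal{O}(H)$ rounds every hull particle is finished and can label its neighboring followers as trapped or filler (these are exactly the categories appearing in the proof of Lemma~\ref{lem:hullfillprogress}).

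The core of the argument is to bound, ahead of time, the total work done by the two movement primitives. First, every trapped particle lies strictly inside $H(O)$, and the region enclosed by $H(O)$ has diameter $\mathcal{O}(H)$ since its boundary has $H$ nodes; hence the trapped particles form at most $\mathcal{O}(H)$ ``layers'' around the interior side of the hull. Each freeing operation is a constant-length sequence of moves between a trapped particle and an adjacent finished particle, and removing one layer exposes the next, so in a greedy parallel schedule the region drains one layer per $\mathcal{O}(1)$ rounds and is emptied within $\mathcal{O}(H)$ rounds, each freeing turning the ejected finished particle into a filler. Second, a filler that handovers with an expanded finished particle replaces one expanded hull particle by two contracted ones; since the closed hull occupies $H$ nodes there are at most $\lfloor H/2 \rfloor$ expanded hull particles, so at most $\mathcal{O}(H)$ such handovers ever occur. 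The fillers traverse the exterior surface of the hull clockwise (a path of length $\mathcal{O}(H)$), forming a queue that advances one step per round in a greedy parallel schedule, so the head of the queue sweeps the whole surface within $\mathcal{O}(H)$ rounds and along the way handovers with every expanded finished particle it passes (enough fillers are available, since $|\mathcal{P}| \ge \lceil H/2 \rceil$ by hypothesis and the trapped particles absorbed above supply more). Each of these two sub-arguments is an instance of Lemma~\ref{lem:pathruntime} / Definition~\ref{def:treepathschedule} with the traversal path $L$ taken to be, respectively, the interior drainage path and the (dynamic, non-simple) exterior surface.

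Putting the pieces together: within $\mathcal{O}(H)$ rounds after $\ell$ closes the hull, the \emph{finished} state has spread, the interior has drained, and the filler queue has completed its sweep, so no particle occupies a node outside $H(O)$ from which progress is still possible; by the contrapositive of Lemma~\ref{lem:hullfillprogress} the hull then holds $\min\{|\mathcal{P}|, H\}$ contracted finished particles (when $|\mathcal{P}| \ge H$ all expanded hull particles have been eliminated; when $\lceil H/2 \rceil \le |\mathcal{P}| < H$ every particle has entered the hull, contracted as far as possible), matching Lemma~\ref{lem:hullfilling}. I expect the main obstacle to be exactly the concurrency between the draining and sweeping primitives: one must show that many fillers running $\mathcal{O}(H)$-length surface traversals concurrently still finish collectively in $\mathcal{O}(H)$ rounds rather than $\mathcal{O}(H)$ times the number of fillers, and that freeing a trapped particle (whose finished neighbor transiently expands onto the surface) never deadlocks against a filler handover on the same cell. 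This is precisely what a careful dominance argument to a greedy parallel schedule extending Definition~\ref{def:treepathschedule} with the hull-filling actions buys us, just as the analogous extension handled the marker constraint in the hull closing subphase.
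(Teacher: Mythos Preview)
Your high-level template (define a parallel schedule, bound its length, transfer via dominance) matches the paper, but your decomposition into three sequentially-bounded primitives is different from what the paper actually does, and it has two real weaknesses that the paper's approach avoids.

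The paper does not bound the three primitives separately. Instead it defines a single potential $\Phi(t) = -|U_t| + f_t - d(\mathcal{F}_t)$, where $U_t$ is the set of hull nodes not yet occupied by contracted finished particles, $f_t$ is the number of (pre-)filler particles, and $d(\mathcal{F}_t)$ sums, over all maximal filler \emph{segments} on the hull surface, the distance from each segment's clockwise head to the hull node it will eventually fill. It then argues that in a greedy parallel filling schedule $\Phi$ strictly increases every constant number of rounds until $|U_t|$ reaches its minimum. Because $|U_t|$ starts at most $H$, $f_t$ can grow by at most $H$ overall, and $d(\mathcal{F}_t)$ is bounded by $\mathcal{O}(H)$, the potential has range $\mathcal{O}(H)$ and the bound follows.

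Your decomposition runs into trouble in two places. First, your ``layers'' bound on interior draining is not justified: the trapped particles sit on the leader's old $B(O)$-traversal path, and there can be $\Theta(B)$ of them with $B$ arbitrarily larger than $H$, so ``$\mathcal{O}(H)$ layers, one per round'' is not a valid count. The lemma does not actually require draining the interior completely---only manufacturing enough fillers to cover the at most $\lfloor H/2\rfloor$ expanded hull particles---but your argument as written tries to drain everything. Second, fillers do not form a single queue with one head: every freed trapped particle injects a new filler at a different surface location, so there are many segments moving concurrently, and bounding one sweep by $\mathcal{O}(H)$ does not bound them all. The paper's $d(\mathcal{F}_t)$ term handles exactly this by summing over all segments, and the potential absorbs the interference you worry about (a pre-filler expanding onto the surface blocks a filler head, but that same step increments $f_t$) without needing to untangle the interactions case by case.

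In short, your plan is repairable, but the repair essentially is the potential function: you need a single scalar that credits progress whether it comes from a filler advancing, a trapped particle being ejected, or a hull node becoming contracted, precisely because these events trade off against one another.
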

\begin{proof}
The main idea of this argument is to define a potential function representing the system's progress towards filling the hull with contracted finished particles, and then argue that this potential function reaches its maximum --- representing the system filling the hull with as many contracted finished particles as possible --- in at most $\mathcal{O}(H)$ asynchronous rounds.
We consider \emph{parallel filling schedules} that work similarly to parallel path-tree schedules, but take into consideration all the state transitions and movement rules of the hull filling subphase.
Define a \emph{filler segment} $F$ as a connected sequence of filler particles on the surface of the hull, and the \emph{head} $h(F)$ of a filler segment as be filler furthest clockwise in the segment.
At time step $t$, let $U_t \subseteq H(O)$ be the set of nodes not yet occupied by contracted finished particles, $f_t$ be the number of (pre-)filler particles in the system, and $\mathcal{F}_t$ be the set of distinct filler segments on the surface of the hull.
We define our potential function as $\Phi(t) = -|U_t| + f_t - d(\mathcal{F}_t)$, where $d$ is a function that sums the length of each traversal path from the head of a filler segment $F \in \mathcal{F}_t$ to the node in $U(t)$ it eventually fills.
We then argue that $\Phi(t)$ strictly increases every constant number of rounds until either $U_t = 0$, meaning all hull nodes are occupied by contracted finished particles, or $U_t = 2(H - |\mathcal{P}|)$, meaning there were insufficient particles to fill every hull node with a contracted finished particle.
\end{proof}

Putting it all together, we know the algorithm is correct by Theorem~\ref{thm:distributedcorrect}, the learning phase terminates in $\mathcal{O}(B)$ asynchronous rounds by Lemma~\ref{lem:learningruntime}, the hull closing subphase terminates in an additional $\mathcal{O}(H)$ asynchronous rounds by Lemma~\ref{lem:hullcloseruntime}, and the hull filling subphase fills the convex hull with as many contracted particles as possible in another $\mathcal{O}(H)$ asynchronous rounds by Lemma~\ref{lem:hullfillruntime}.
Thus, since $B \geq H$, we complete our analysis with the following theorem.

\begin{theorem} \label{thm:runtime}
In at most $\mathcal{O}(B)$ asynchronous rounds, the Convex Hull Algorithm either solves instance $(\mathcal{P}, O)$ of the convex hull formation problem if $|\mathcal{P}| \geq |H(O)|$ or forms a maximal partial strong $\Otri$-hull of $O$ otherwise.
\end{theorem}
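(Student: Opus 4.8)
The plan is to assemble this theorem directly from the correctness result and the three per-phase runtime bounds already in hand; no new machinery is needed. First, correctness: Theorem~\ref{thm:distributedcorrect} already establishes that the Convex Hull Algorithm solves $(\mathcal{P}, O)$ whenever $|\mathcal{P}| \geq |H(O)|$ and otherwise forms a maximal partial strong $\Otri$-hull of $O$. So the entire remaining task is to bound the total number of asynchronous rounds.

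Next, I would add up the phase bounds in the order in which the phases execute. By Lemma~\ref{lem:learningruntime}, the learning phase --- including the initial spanning-tree construction --- finishes in $\mathcal{O}(B)$ asynchronous rounds. By Lemma~\ref{lem:hullcloseruntime}, within $\mathcal{O}(H)$ further rounds the leader either closes $H(O)$ or every particle terminates expanded over nodes of $H(O)$; in the latter case we are already done. If the hull was closed, then by Lemma~\ref{lem:hullfillruntime} the hull filling subphase uses at most $\mathcal{O}(H)$ additional rounds to place $\min\{|\mathcal{P}|, H\}$ contracted finished particles in $H(O)$. Since the algorithm uses no global synchronization between phases (Section~\ref{sec:multialg}), and since the asynchronous interleaving of an unfinished earlier phase with a later one is already absorbed into each lemma's bound via the parallel tree-path schedule analysis, these three bounds simply add, giving $\mathcal{O}(B) + \mathcal{O}(H) + \mathcal{O}(H) = \mathcal{O}(B + H)$ rounds in total.

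Finally, I would invoke $B = |B(O)| \geq |H(O)| = H$ --- the boundary of the (convex) strong $\Otri$-hull of $O$ is no longer than the boundary of $O$ itself --- to conclude $\mathcal{O}(B + H) = \mathcal{O}(B)$. Together with Theorem~\ref{thm:distributedcorrect}, this yields the stated result.

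The proof of the theorem itself is thus essentially bookkeeping; the only subtlety worth flagging is the implicit claim that the three phases compose with no hidden overhead --- that followers still completing the learning phase while the leader has begun forming the hull add only a constant-factor slowdown, and that the temporarily-blocked traversal path in the hull closing subphase does not spoil the $\mathcal{O}(|L|)$ traversal bound. Both concerns are already discharged by Lemma~\ref{lem:pathruntime} and the discussion of parallel tree-path schedules (Definition~\ref{def:treepathschedule}), so at this level it remains only to cite those lemmas and sum the bounds.
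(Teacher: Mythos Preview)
Your proposal is correct and mirrors the paper's own argument almost exactly: the paper likewise cites Theorem~\ref{thm:distributedcorrect} for correctness, sums the $\mathcal{O}(B)$, $\mathcal{O}(H)$, and $\mathcal{O}(H)$ bounds from Lemmas~\ref{lem:learningruntime}, \ref{lem:hullcloseruntime}, and \ref{lem:hullfillruntime}, and then uses $B \geq H$ to collapse the total to $\mathcal{O}(B)$. Your additional remarks about phase composition and the appeal to Lemma~\ref{lem:pathruntime} are consistent with the paper's discussion but go slightly beyond what the paper states explicitly at this point.
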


The time required for all particles in the system to terminate may be longer than the bound given in Theorem~\ref{thm:runtime}, depending on the number of particles.
As termination is further broadcast to the rest of the system, we know that at least one non-finished particle receives a termination signal and becomes finished in each asynchronous round.
So,

\begin{corollary} \label{cor:finaltermination}
All particles in system $\mathcal{P}$ terminate the Convex Hull Algorithm in $\mathcal{O}(|\mathcal{P}|)$ asynchronous rounds in the worst case.
\end{corollary}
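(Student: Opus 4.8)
The plan is to read off the result from Theorem~\ref{thm:runtime} together with a short flooding argument for the termination wave, which is the only thing that still needs to propagate once the hull has been formed. By Theorem~\ref{thm:runtime} (via Lemmas~\ref{lem:hullclosing}, \ref{lem:insufficienthullclose}, and~\ref{lem:hullfilling}), within $\mathcal{O}(B)$ asynchronous rounds the system reaches its terminal \emph{shape} and the termination mechanism has been triggered. I would verify this case by case: if $|\mathcal{P}| \ge H$, Lemma~\ref{lem:hullfilling} guarantees the hull is filled with contracted finished particles and the ``all contracted'' token completes its traversal, so the particle holding it becomes finished and starts broadcasting ``termination'' tokens; if $|\mathcal{P}| < \lceil H/2 \rceil$, Lemma~\ref{lem:insufficienthullclose} gives that the ``all expanded'' token reaches $\ell$, which becomes finished and broadcasts termination; and if $\lceil H/2 \rceil \le |\mathcal{P}| < H$, Lemma~\ref{lem:hullfilling} already places every particle on the hull as a finished particle, so there is nothing more to do. In the first two regimes, from this point on at least one particle is finished and is disseminating a termination token through the system.

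The core step is to justify the claim used informally before the corollary: once the terminal configuration is reached and at least one finished particle is disseminating a termination token, the set $F_t$ of particles finished at the end of asynchronous round $t$ satisfies $|F_{t+1}| > |F_t|$ whenever $|F_t| < |\mathcal{P}|$. The argument is: the particle system stays connected throughout (spanning-forest primitive~\cite{Derakhshandeh2017}; cf.\ Lemma~\ref{lem:counterconnect}), so if $F_t \ne \mathcal{P}$ there is an edge between a finished particle $P \in F_t$ and a non-finished particle $Q$. By the hull-filling and hull-closing rules, a finished particle that became finished via a termination token forwards a copy of that token to each neighbor over its next activations, and \emph{any} particle --- idle, follower, hull, filler, pre-filler, trapped, pre-finished, or (pre-)marker --- that holds a termination token becomes finished at its next activation; since the token is written directly into memory and is not erased by movement, $Q$ cannot evade it by relocating. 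By fairness, within a constant number of rounds $P$ writes a termination token to $Q$ and $Q$ then becomes finished, so the connected set of finished particles grows by at least one particle every $\mathcal{O}(1)$ rounds until it equals $\mathcal{P}$.

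Iterating this claim, and since $|F_t| \le |\mathcal{P}|$, at most $\mathcal{O}(|\mathcal{P}|)$ asynchronous rounds elapse between the round in which the terminal configuration is reached (and broadcasting begins) and the round in which every particle is finished; in the intermediate regime Lemma~\ref{lem:hullfilling} already completes all particles in $\mathcal{O}(H) = \mathcal{O}(|\mathcal{P}|)$ rounds. Hence every particle of $\mathcal{P}$ terminates within $\mathcal{O}(|\mathcal{P}|)$ asynchronous rounds (beyond the $\mathcal{O}(B)$ rounds already accounted for by Theorem~\ref{thm:runtime}), proving the corollary.

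The main obstacle I expect is the bookkeeping in the first paragraph rather than the flooding argument: one has to confirm that in the $|\mathcal{P}| \ge H$ and $|\mathcal{P}| < \lceil H/2 \rceil$ regimes the triggering tokens (``all contracted'', ``all expanded'') actually complete their traversals of the hull and initiate the broadcast within the $\mathcal{O}(B)$-round window --- not lagging arbitrarily far behind the leader --- which follows from the same style of argument used for Lemmas~\ref{lem:hullcloseruntime} and~\ref{lem:hullfillruntime} but should be stated explicitly. The flooding step itself is routine once connectivity and the ``receive-token-then-finish'' rule are isolated; the only care needed there is the observation that a particle cannot indefinitely postpone receipt of the token by moving, which holds because direct-write communication persists across movements and the spanning forest never disconnects.
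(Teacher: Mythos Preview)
Your approach is correct and matches the paper's own reasoning: the paper's entire justification is the single sentence preceding the corollary, namely that once termination is broadcast ``at least one non-finished particle receives a termination signal and becomes finished in each asynchronous round,'' which is exactly your flooding argument via connectivity and fairness. Your case split across the three regimes and your concern about the ``all expanded'' and ``all contracted'' tokens finishing within the $\mathcal{O}(B)$ window are already handled by Lemmas~\ref{lem:hullcloseruntime} and~\ref{lem:hullfillruntime}, so the bookkeeping you flag as the main obstacle is in fact discharged by those lemmas; the paper simply takes this for granted and states the corollary without further proof.
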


\section{Forming the (Weak) \texorpdfstring{$\Otri$}{O}-Hull} \label{sec:weakhull}

To conclude, we show how the Convex Hull Algorithm can be extended to form the (weak) $\Otri$-hull of object $O$, solving the $\Otri$-hull formation problem.
Our algorithm, which we refer to as the \emph{$\Otri$-Hull Algorithm}, extends the Convex Hull Algorithm at the point when a finished particle (say, $P_\first$) first holds the ``all contracted'' token with counter value $7$ and would usually broadcast termination.
Note that by Theorem~\ref{thm:distributedcorrect} this only happens if $|\mathcal{P}| \geq H$, which for the $\Otri$-Hull Algorithm we assume to be true.
Instead of terminating, $P_\first$ initiates the $\Otri$-hull formation phase of our algorithm by becoming \emph{tightening}.
Every finished contracted particle whose parent $P.\parent$ is tightening becomes tightening as well, declaring $P.\parent$, which must be the next particle clockwise from $P$ on $H(O)$, as its \emph{successor}; analogously, the \emph{predecessor} of $P$ will be the particle $Q$ on $H(O)$ such that $Q.\parent = P$.
Any particle $P$ that is not finished becomes \emph{non-tightening}, if it has a tightening or non-tightening particle $Q$ in its neighborhood, and sets $P.\parent = Q$.
As the outcome, the particles on $H(O)$ form a bi-directed cycle of contracted tightening particles, and all other particles are non-tightening, their parent pointers forming a spanning forest in which each root is a tightening particle.

Throughout the algorithm, we say a tightening particle $P$ is \emph{convex} (resp., \emph{reflex}) if $P$ and its successor and predecessor are tightening and contracted, its successor lies in direction $d$, and its predecessor lies in direction $(d + 2) \text{ mod } 6$ (resp., $(d + 4) \text{ mod } 6$); see \figtext~\ref{fig:ohullprogress:a}.
The idea of our algorithm is to progressively transform the structure of contracted finished particles initially forming the convex hull $H(O)$ into the object's $\Otri$-hull $H'(O)$ by repeatedly moving convex particles towards the object (see \figtext~\ref{fig:ohullprogress:b}--\ref{fig:ohullprogress:d}).

\begin{figure}[t]
\centering
\begin{subfigure}{.265\textwidth}
	\centering
	\includegraphics[scale=.65]{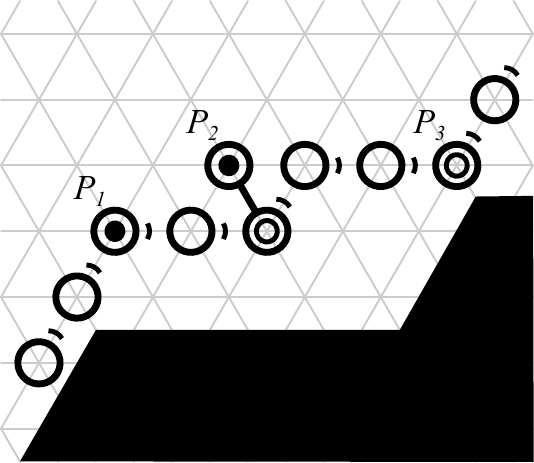}
	\caption{\centering}
	\label{fig:ohullprogress:a}
\end{subfigure}
\begin{subfigure}{.23\textwidth}
	\centering
	\includegraphics[scale=.65]{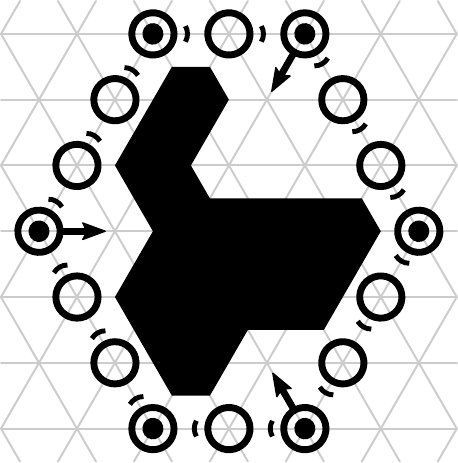}
	\caption{\centering}
	\label{fig:ohullprogress:b}
\end{subfigure}
\begin{subfigure}{.23\textwidth}
	\centering
	\includegraphics[scale=.65]{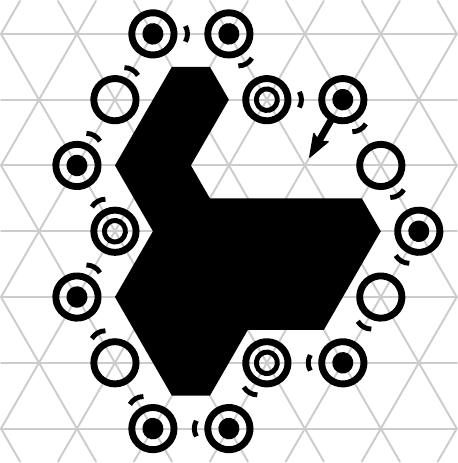}
	\caption{\centering}
	\label{fig:ohullprogress:c}
\end{subfigure}
\begin{subfigure}{.23\textwidth}
	\centering
	\includegraphics[scale=.65]{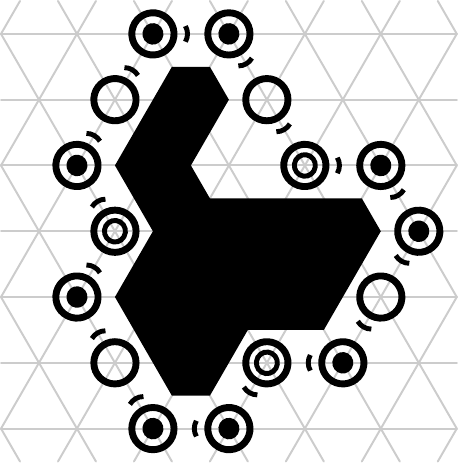}
	\caption{\centering}
	\label{fig:ohullprogress:d}
\end{subfigure}
\caption{Some movements of tightening particles.
(a) $P_1$ is convex (black dot) and can perform a movement into the node between its successor and predecessor, $P_2$ has just made such a move, and $P_3$ is reflex (black circle).
(b)--(d) Transforming the cycle of tightening particles that initially form $H(O)$ into $H'(O)$ by repeatedly moving convex particles towards the object.}
\label{fig:ortho-convex}
\end{figure}

\paragraph*{Moving Convex Particles}

Since the cycle of finished particles initially occupies the convex hull $H(O)$, the algorithm begins with exactly six convex particles and no reflex particles.
Whenever a contracted convex particle $P$ becomes activated, it moves into the node ``between'' its successor and predecessor, i.e., into the node $v$ in direction $(d + 1) \text{ mod } 6$, where $d$ is the direction to its successor (see particle $P_2$ in \figtext~\ref{fig:ohullprogress:a}).
Note that $v$ must be a node contained within the strong $\mathcal{O}$-hull of $O$, i.e., moving $P$ ``shrinks'' the cycle of tightening particles towards $H'(O)$.
More specifically, if $v$ is unoccupied, $P$ simply expands into $v$.
Otherwise, $v$ must be occupied by a particle $Q$ that is \emph{non-tightening}.
If $Q$ is expanded, $P$ simply pushes $Q$.
Otherwise, it role-swaps with $Q$ by declaring $Q$ to be a tightening particle, demoting itself to a non-tightening particle, setting $Q$ as its parent, and updating the predecessor and successor relationships for $Q$ while erasing its own.

If $P$ is expanded, it pulls a contracted non-tightening child in a handover, if one exists, and otherwise contracts.
Note that, as in the hull filling subphase of the Convex Hull Algorithm, the distributed binary counters are no longer in use.
Thus, any potential handovers can be performed without regard for the connectivity of the counters.

\paragraph*{Termination Detection}

Finally, we describe how to detect when the $\Otri$-hull has been formed.
When for the first time $P_\first$ occupies a node adjacent to $O$, it sends a \emph{tight-termination} token with value $1$ and forwards it to its successor.
If a convex particle $P$ has this token and can perform a movement, it sets the token value to $0$ before forwarding it to its successor; by all other particles, the token is forwarded without any value change.
If $P_\first$ receives the token with value $0$, it deduces that there are still movements being made and resets the token value to $1$, once again forwarding it to its successor.
But if $P_\first$ receives the token with value $1$, it knows the $\Otri$-hull has been constructed.
So $P_\first$ terminates by becoming \emph{tight-finished}, and any contracted particle with a tight-finished neighbor also becomes tight-finished.

\subsection{Analysis} \label{subsec:weakhullanalysis}

We now show the correctness and runtime of the $\Otri$-Hull Algorithm.
Recall that the $\Otri$-hull $H'(O)$ has been formed if all nodes of $H'(O)$ are occupied by contracted particles.

\begin{lemma} \label{lem:ortho-premature}
$P_\first$ does not become tight-finished before $H'(O)$ has been formed.
\end{lemma}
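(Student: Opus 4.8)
The plan is to show that the \emph{tight-termination} token's round-trip acts as a certificate: $P_\first$ becomes tight-finished only after a full cycle of the token during which no convex particle performed a movement, and we must argue that this implies $H'(O)$ is already formed. First I would establish the invariant that the cycle of tightening particles always forms a closed, $\mathcal{O}$-convex polygon enclosing $O$ (indeed enclosing the strong $\mathcal{O}$-hull's interior containing $O$), with exactly the contracted tightening particles as its ``corners'' and non-tightening particles threaded through as the polygon is traversed. This follows because the only state transition affecting the cycle is a convex particle moving into the node between its successor and predecessor — a move that replaces a convex corner by a flat or reflex configuration and never crosses $O$, since that target node lies in the strong $\mathcal{O}$-hull. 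I would also record that reflex particles never move, so the cycle only ever shrinks monotonically toward $H'(O)$.

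Next I would characterize when $H'(O)$ is \emph{not} yet formed. If the cycle of tightening particles does not coincide with $\partial H'(O)$, then since it is an $\mathcal{O}$-convex polygon strictly containing the $\mathcal{O}$-hull, it must have at least one convex corner whose ``inward'' node is not blocked by $O$ — more precisely, there exists a convex tightening particle $P$ whose target node $v$ (in direction $(d+1)\bmod 6$) is not an object node. This is the geometric crux: an $\mathcal{O}$-convex lattice polygon properly containing $H'(O)$ cannot be ``locally tight'' everywhere against $O$ simultaneously; some convex vertex can still be pushed in. Such a $P$ is therefore able to perform a movement (it either expands into $v$, pushes an expanded non-tightening particle, or role-swaps with a contracted one — one of these is always possible because non-tightening particles are plentiful by the $|\mathcal{P}| \geq H$ assumption and form a spanning forest off the cycle). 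Hence as long as $H'(O)$ is unformed there is always a convex particle able to move.

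Then I would run the token argument. Suppose for contradiction that $P_\first$ becomes tight-finished at some time $t$ while $H'(O)$ is not yet formed. By definition of tight-finished, $P_\first$ received the tight-termination token with value $1$, meaning that during the entire preceding traversal of the token around the cycle, every convex particle holding the token either was unable to move or chose not to reset the value — but the rule is that a convex particle that \emph{can} move sets the value to $0$. So no convex particle that held the token during that lap was able to perform a movement at the moment it held it. However, by the previous paragraph there is always at least one convex particle able to move while $H'(O)$ is unformed; I would need to argue this mobility persists long enough to be witnessed by the token. Here I would use fairness together with the monotonic shrinking of the cycle: during the token's lap either some convex particle did move (in which case the value would have been reset — but a subtlety is that a particle could move \emph{before} receiving the token; I address this by noting that after any convex particle moves, $P_\first$ would have to later occupy a node adjacent to $O$ again or, more simply, that the token is resent with value $1$ only by $P_\first$ and a movement anywhere on the cycle eventually propagates a value-$0$ token back), or no convex particle moved during the lap, in which case the perpetually-mobile convex particle guaranteed above was mobile throughout the lap, so when the token reached it, it would have set the value to $0$ — contradiction.

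The main obstacle I anticipate is the timing bookkeeping in the last step: disentangling ``a convex particle is able to move'' (a structural fact) from ``a convex particle sets the token value to $0$'' (which requires that particle to be \emph{holding the token} at a moment when it is both convex and mobile). Movements change which particles are convex, so a particle mobile at one instant may be non-convex when the token arrives. I would handle this by strengthening the geometric claim to: while $H'(O)$ is unformed, in \emph{every} configuration there is a convex particle whose target node is unoccupied or occupied by a non-tightening particle, and this property is stable under the moves of \emph{other} particles (moving a far-away convex particle does not un-convex-ify or block this one). Combined with fairness — which guarantees the token makes progress and each particle is activated — either the distinguished mobile convex particle eventually moves (resetting the value via the resend mechanism) or it is still mobile when the token reaches it and resets the value itself. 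In both cases $P_\first$ receives value $0$, contradicting its becoming tight-finished, which completes the proof.
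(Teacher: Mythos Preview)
Your overall strategy is sound, and your Case~1 (no tightening particle moves during the token's lap) works: the configuration is static, so the mobile convex particle you exhibit stays mobile and sets the token value to $0$ when the token reaches it. The gap is in Case~2. Your proposed fix---that the distinguished mobile convex particle is ``stable under the moves of other particles''---is false in general: a movement by the \emph{successor} or \emph{predecessor} of a convex particle $P$ immediately destroys $P$'s convexity (the definition requires both cycle-neighbors to be contracted), and when that neighbor later contracts the local geometry at $P$ may have changed. So the identity of ``the'' mobile convex particle can shift every time a nearby particle moves, and nothing you have written rules out a schedule in which the token always arrives at a particle just after it has ceased to be mobile while the currently mobile particle sits behind the token. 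Your appeal to ``the resend mechanism'' does not help: a movement occurring behind the token does not touch the token's value, and $P_\first$ only resets to $1$ upon receiving a $0$, which is precisely what you are trying to produce.

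The paper avoids this difficulty with a purely local argument that never tracks any particular mobile particle. The key observation is that a tightening particle can become \emph{newly} able to move only if one of its two cycle-neighbors has just performed a movement. Since $P_\first$ is adjacent to $O$ from the moment it emits the token and hence can never move, it acts as a barrier: taking $t$ to be the earliest time at which some $P_j$ strictly behind the token is mobile, minimality forces both $P_{j-1}$ and $P_{j+1}$ to have been immobile just before, contradicting the observation. Thus the entire arc from $P_\first$ up to the token remains immobile throughout the lap, so when the token returns with value $1$ no movement is possible anywhere on the cycle. (The implication ``no movement possible $\Rightarrow$ $H'(O)$ formed'' is the geometric fact you also sketch; the paper defers it to the next lemma.) You can repair your argument by replacing the stability claim with exactly this propagation-plus-barrier step.
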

\begin{proof}
Let $P_0 = P_\first$ and $C = (P_0, P_1, \ldots, P_m = P_0)$ be the cycle of tightening particles, where $P_{i+1}$ is the successor of $P_i$.
Note that $C$ never changes during the execution of the algorithm (by relabeling the particles involved in a role-swap).
Also observe that if a contracted particle $P_i$ cannot perform a movement at time $t$ but can perform a movement at time $t' > t$, then $P_{i-1}$ or $P_{i+1}$ must perform a movement at some time between $t$ and $t'$.

Suppose to the contrary that $P_0 = P_\first$ becomes tight-finished at time $t^*$ although a movement of some convex particle is still possible.
Then the tight-termination token must have traversed the whole cycle, returning to $P_\first$ with value $1$ at time $t^*$.
Let time $t \leq t^*$ be the earliest time at which some particle $P_i$ with $0 < i < k$ holds the tight-termination token and a particle $P_j$ with $j < i$ can perform a movement; informally, $t$ is the first time that a movement appears ``behind'' the tight-termination token's sweep of cycle $C$ as it searches for movements.
Particle $P_0 = P_\first$ can never perform a movement, as it is already adjacent to the object at the time it creates the tight-termination token, and no tightening particle adjacent to the object can ever perform a movement.
By the minimality of $t$, we know that for any $0 \leq t' < t$, all particles from $P_1$ up to the particle holding the tight-termination token also cannot perform a movement; in particular, this is true at time $t-1$.
Thus, every particle $P_k$ with $0 \leq k \leq i$, including $P_{j-1}$ and $P_{j+1}$, cannot perform a movement at time $t-1$.
But by the observations made above, this yields a contradiction to the claim that $P_j$ could perform a movement at time $t$.
\end{proof}

\begin{lemma} \label{lem:ortho-correct}
The tightening particles eventually form $H'(O)$, after which no convex particle can move anymore.
\end{lemma}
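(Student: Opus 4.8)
The plan is to prove the two assertions separately: first that the cycle of tightening particles stops deforming after finitely many moves, and then that the configuration it stabilizes in has the cycle occupying exactly $H'(O)$, at which point no convex particle can move.

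\emph{Termination of the tightening process.} Let $R_t$ denote the lattice region enclosed by the cycle of tightening particles after $t$ activations. I would first record the easy invariants that this cycle is a simple closed curve and that $O\subseteq R_t\subseteq R_0$, where $R_0$ is the region of $H(O)$: the cycle starts on $H(O)$, and a convex particle only ever vacates the single node it currently occupies, which is a particle node and hence never an object node, so $O$ is never exposed. Take the potential $\Phi(t)$ to be the number of unit lattice triangles strictly inside the cycle (during an expansion, count the head and tail of the expanding particle as both lying on the boundary). When a convex particle $P$ with successor in direction $d$ moves into the node $v$ in direction $(d+1)\bmod 6$, the two unit triangles incident to $P$'s old position that lie inside its $120^{\circ}$ corner are removed from the enclosed region; hence once the move (expansion followed by a contraction, or a role-swap) completes, $\Phi$ has strictly decreased, by exactly $2$. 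Since $\Phi\ge 0$ and $\Phi(0)$ is finite (at most the area of $H(O)$), only finitely many convex-particle moves ever occur. After the last one, fairness guarantees that every expanded particle eventually contracts, possibly via a handover, and that every non-tightening particle settles as a contracted vertex of the spanning forest hanging off the cycle. The system thus reaches a configuration $C^{*}$ in which no convex particle---and therefore no tightening particle---can move.

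\emph{The stable configuration occupies $H'(O)$.} Let $R^{*}$ be the region enclosed by the cycle in $C^{*}$, and let $\widehat O$ denote the (continuous) $\mathcal O$-hull of $O$ in $\mathbb R^{2}$, so that $H'(O)$ is precisely the set of nodes of $V\setminus O$ adjacent to $\widehat O$. Because every non-tightening particle in $C^{*}$ is contracted, a convex particle $P$ can fail to move only if its target node $v$ is an object node: if $v$ were unoccupied $P$ would expand into it, and if $v$ were occupied by a contracted non-tightening particle $P$ would role-swap with it. So in $C^{*}$ every convex particle is adjacent to $O$. I would then strengthen the invariant above to $\widehat O\subseteq R_t$: a convex move only vacates a node $p$, and if $p\in\widehat O$ then $p$ must be a convex corner of $\widehat O$ (its angle being at most the $120^{\circ}$ of the corner $R_t$ makes there), which by minimality of the $\mathcal O$-hull must be supported by $O$---otherwise cutting that corner yields a strictly smaller $\mathcal O$-convex superset of $O$---and this support forces the move target $v$ to be an object node, so $P$ could not move in the first place; hence a convex move always vacates a node outside $\widehat O$, preserving the invariant. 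Consequently $R^{*}\supseteq\widehat O$, so the cycle lies on or outside $H'(O)$. For the reverse inclusion, suppose the cycle does not coincide with $H'(O)$; then some part of it protrudes beyond $\widehat O$, and one argues that the outermost part of such a protrusion contains a convex particle $P'$ whose target node lies outside $\widehat O$, hence outside $O$---so $P'$ can move, contradicting the stability of $C^{*}$. Therefore the cycle occupies exactly $H'(O)$. Finally, once it does, every convex particle of the cycle sits at a convex corner of $\widehat O$ and so, as above, is adjacent to $O$ with object-occupied target node; it cannot move, which is exactly the ``after which no convex particle can move'' clause.

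The step I expect to be the main obstacle is the geometric core of the second part: making precise that a stable cycle cannot properly contain $H'(O)$. This needs a careful case analysis of how convex and reflex corners interact in the restricted-orientation geometry of $\Gtri$---in particular, showing that the tightening process never creates a locally ``stuck'' feature (such as an isolated $60^{\circ}$ spike, which would arise if two adjacent convex particles both tried to move into their common target node and then neither that spike nor its neighbors could move), and relating the one-node offset in the definition of $H'(O)$ to the ``support by $O$'' condition certifying that a corner of $\widehat O$ is as tight as possible. Once those structural facts are established, the potential argument and fairness supply the rest.
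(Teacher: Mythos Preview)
Your termination argument via the area potential is correct and essentially the same as the paper's observation that the enclosed region strictly shrinks with each convex move. The substantive difference is in the second half, and there you are missing the one structural invariant that does all the work in the paper's proof: the region $U_i$ enclosed by the tightening cycle remains $\mathcal{O}$-convex after every move.

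The paper proves this by a one-line induction: when a convex particle $P$ with successor in direction $d$ moves into its target $v$, the three neighbours of $v$ in directions $(d+3),(d+4),(d+5)\bmod 6$ are already outside $U_i$, so removing $v$ cannot disconnect any axis-parallel line's intersection with the region. With $\mathcal{O}$-convexity in hand, the ``stable $\Rightarrow$ $H'(O)$'' direction is immediate: if no convex particle can move, then every convex particle is adjacent to $O$; hence every boundary node of $U_i\setminus O$ lies on an axis-line joining two object nodes (namely the nodes adjacent to the two convex particles bounding its side), so removing it would destroy $\mathcal{O}$-convexity; therefore $U_i$ is already the minimal $\mathcal{O}$-convex superset of $O$.

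Your approach instead tries to argue directly that a stable cycle cannot properly contain $\widehat O$ by locating a movable convex particle in any protrusion. Without the $\mathcal{O}$-convexity invariant this is genuinely harder, and the obstacle you flag---ruling out locally stuck features such as two adjacent convex corners sharing a target---is exactly the case analysis that the invariant lets you avoid. The invariant also silently handles the point you use without justification, namely that the target $v$ is never itself a tightening particle: since $U_i$ is $\mathcal{O}$-convex, $v$ lies strictly in its interior, hence off the cycle. So the fix is not more case analysis on restricted-orientation geometry, but to insert the preservation-of-$\mathcal{O}$-convexity lemma between your two parts; once you do, both your containment $\widehat O\subseteq R^*$ and the reverse inclusion follow without any of the geometric difficulties you anticipate.
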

\begin{proof}
Let $U_i \subset V$ be the set of nodes enclosed by the cycle of tightening particles after the $i$-th movement (not containing nodes occupied by tightening particles).
To show that the tightening particles eventually form $H'(O)$, we show the following two claims:
(1) $U_i$ is $\mathcal{O}$-convex and contains the object $O$ for all $i$.
(2) If $U_i$ is an $\mathcal{O}$-convex set containing $O$, but is not minimal, then a movement is possible.
Together with the fact that $U_{i+1} \subset U_i$, this proves that $H'(O)$ is eventually formed.
Clearly, once $U_i$ is minimal, no movement is possible anymore, as otherwise it could not have been minimal.

To prove (1), argue by induction on $i$.
Initially, the particles form $H(O)$, so $U_0$ is the strong $\mathcal{O}$-hull of $O$, and, by definition, is $\mathcal{O}$-convex and contains $O$.
Now suppose $U_i$ is $\mathcal{O}$-convex and contains $O$ by the induction hypothesis, and let $P$ be the particle that performs the next movement into a node $v$ in direction $(d+1) \text{ mod } 6$.
Clearly, $U_{i+1}$ still contains $O$.
Since $U_i$ is $\mathcal{O}$-convex, the intersection of $U_i$ with any straight line of nodes containing $v$ is connected.
These intersections remain connected in $U_{i+1}$ since the neighbors of $v$ in directions $(d+3) \text{ mod } 6$, $(d+4) \text{ mod } 6$, and $(d+5) \text{ mod } 6$ are not in $U_i$ and $U_{i+1} = U_i \setminus \{v\}$.

To prove (2), assume that no movement of a convex particle is possible anymore.
Therefore, every convex particle must be adjacent to the object, as only then it is incapable of moving any further.
Every node $v \in U_i \setminus O$ on the boundary of $U_i$ (i.e., that is adjacent to a node of $V \setminus U_i$) therefore lies on a straight line connecting two nodes of $O$ (which are adjacent to convex particles).
Therefore, the set that results from removing $v$ from $U_i$ cannot be $\mathcal{O}$-convex.
Thus, $U_i$ is minimal.
\end{proof}

Taken together, these lemmas prove the correctness of the $\Otri$-Hull Algorithm.
By Lemma~\ref{lem:ortho-premature}, we have that $P_\first$ will not terminate prematurely, stopping the remaining particles from correctly forming $H'(O)$.
So by Lemma~\ref{lem:ortho-correct}, $H'(O)$ is eventually formed and there are no remaining movements.
Thus, the tight-termination token will never be set to value $0$ again, resulting in the following concluding lemma.

\begin{lemma} \label{lem:ortho-token}
Once $H'(O)$ has been formed, the tight-termination token traverses the cycle at most twice before $P_\first$ terminates.
\end{lemma}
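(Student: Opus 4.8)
The plan is to combine Lemma~\ref{lem:ortho-correct} --- which guarantees that once $H'(O)$ has been formed no convex particle can ever perform a movement again --- with Lemma~\ref{lem:ortho-premature} --- which guarantees that $P_\first$ has not yet become tight-finished, so the tight-termination token is still circulating around the cycle $C = (P_0, P_1, \ldots, P_m = P_0)$ of tightening particles at the moment $H'(O)$ is formed. First I would isolate the one structural invariant that drives everything: the value carried by the tight-termination token is set to $0$ \emph{only} by a convex particle that is able to perform a movement; every other tightening particle forwards the token without changing its value, and $P_\first$ only ever changes the value from $0$ to $1$. (Recall also from the proof of Lemma~\ref{lem:ortho-premature} that $P_\first$ is already adjacent to $O$ when it creates the token, so $P_\first$ itself can never move and never writes a $0$.) Since Lemma~\ref{lem:ortho-correct} forbids every convex-particle movement from the instant $H'(O)$ is formed onward, no particle ever resets the token's value to $0$ after that instant; the same lemma also rules out role-swaps after $H'(O)$ is formed, so the token is neither relocated out of the cycle nor lost through a relabeling.

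With this invariant in hand, the statement reduces to a short case analysis on the token's state at the instant $H'(O)$ is formed; say the token is held by some $P_i$ with value $b \in \{0, 1\}$. By fairness, each tightening particle forwards the token the next time it is activated, so the token keeps advancing one step at a time around $C$. If $b = 1$, then because no particle can reset the value to $0$, the token returns to $P_\first = P_0$ still carrying value $1$ within at most one full traversal of $C$, whereupon $P_\first$ becomes tight-finished. If $b = 0$, the token first reaches $P_\first$ with value $0$ within at most one traversal of $C$; $P_\first$ then resets the value to $1$ and forwards the token to its successor, and by the previous case the token comes back to $P_\first$ carrying value $1$ after at most one further full traversal. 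In either case the token traverses $C$ at most twice after $H'(O)$ is formed before $P_\first$ terminates, which is the claim. (For the accompanying runtime bound one additionally notes that, by fairness, the token advances at least once per asynchronous round, so each traversal of the length-$H$ cycle costs $\bigO{H}$ rounds.)

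The only delicate point --- and the ``obstacle'' such as it is --- is establishing the invariant that the token's value can never again be written to $0$ once $H'(O)$ is formed; this is precisely where Lemma~\ref{lem:ortho-correct} is indispensable, and it must be paired with the observation that $P_\first$ never performs a movement after creating the token. Everything past that is routine bookkeeping over the token-forwarding rules and the meaning of ``traversing the cycle''.
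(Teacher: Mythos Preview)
Your proposal is correct and follows essentially the same approach as the paper: the paper's justification (given in the paragraph immediately preceding the lemma) is simply that by Lemma~\ref{lem:ortho-correct} no movements remain after $H'(O)$ is formed, so the token's value is never reset to $0$ again, from which the ``at most twice'' bound follows. Your explicit case analysis on the token's value $b\in\{0,1\}$ at the moment $H'(O)$ is formed just spells out what the paper leaves implicit.
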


We now turn to the runtime analysis.
Recall that $H = |H(O)| = |H'(O)|$ and $n = |\mathcal{P}|$.
As in Section~\ref{subsec:multiruntime}, we first bound the runtime for a synchronous parallel execution of the algorithm, and then argue that the execution is dominated by our asynchronous algorithm.
As before, we consider particle system configurations $C$ that encode each particle's position, state, whether it is expanded or contracted, and any tokens held by it.

\begin{definition} \label{def:cycleschedule}
A \emph{parallel cycle schedule} is a schedule $(C_0, \ldots, C_t)$ such that in $C_0$ all nodes of $H(O)$ are occupied by contracted tightening particles forming a directed cycle in clockwise direction, and all other particles are contracted and connected to a tightening particle via a sequence of parent pointers.
For every $0 \leq i < t$, $C_{i+1}$ is reached from $C_i$ such that the following holds for every particle $P$:
\begin{enumerate}
\item $P$ is tightening, has a successor in direction $d$, and moves into the node $u$ in direction $(d + 1) \text{ mod } 6$ by performing an expansion, if $u$ is unoccupied, or by performing a role-swap, otherwise.
\item $P$ has no children and contracts, leaving the node occupied by its tail empty in $C_{i+1}$.
\item $P$ pulls in a child, or, if $P$ is non-tightening, pushes its parent.
\item $P$ occupies the same nodes in $C_i$ and $C_{i+1}$.
\end{enumerate}
Such a schedule is \emph{greedy} if the above actions are taken whenever possible.
\end{definition}

Given any greedy parallel cycle schedule, we can now show the following lemma.

\begin{figure}[t]
\centering
\begin{subfigure}{.24\textwidth}
	\centering
	\includegraphics[scale=.6]{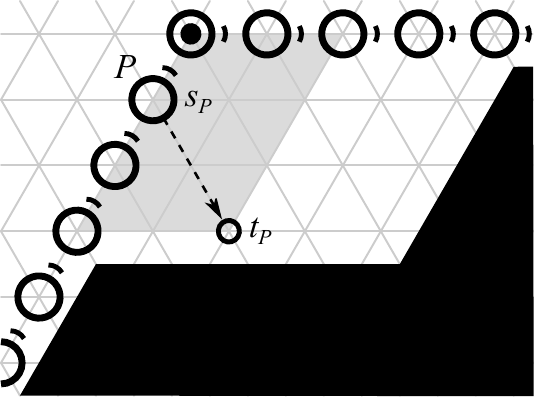}
	\caption{\centering}
	\label{fig:ohullruntime:a}
\end{subfigure}
\hfill
\begin{subfigure}{.24\textwidth}
	\centering
	\includegraphics[scale=.6]{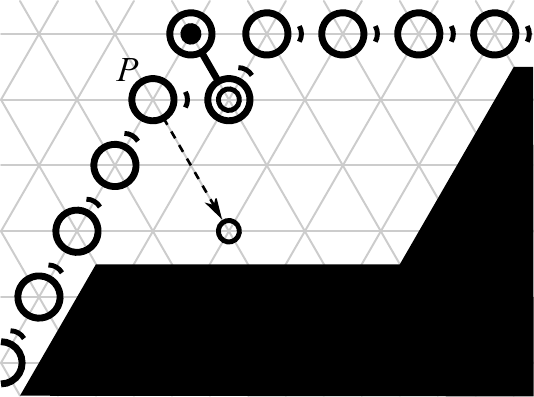}
	\caption{\centering}
	\label{fig:ohullruntime:b}
\end{subfigure}
\hfill
\begin{subfigure}{.24\textwidth}
	\centering
	\includegraphics[scale=.6]{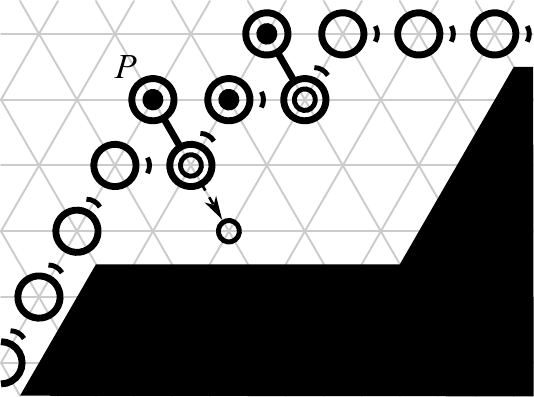}
	\caption{\centering}
	\label{fig:ohullruntime:c}
\end{subfigure}
\hfill
\begin{subfigure}{.24\textwidth}
	\centering
	\includegraphics[scale=.6]{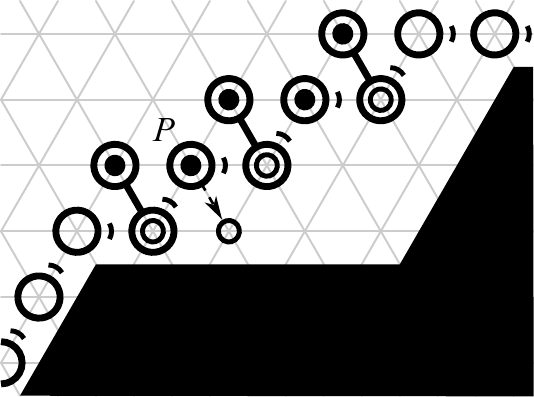}
	\caption{\centering}
	\label{fig:ohullruntime:d}
\end{subfigure}
\caption{A parallel execution of the $\Otri$-Hull Algorithm as described in Lemma~\ref{lem:ortho-runtime-synchro}. (a) A particle $P$ and its starting and ending positions. (b)--(d) Convex particles greedily moving towards the object whenever possible.}
\label{fig:ohullruntime}
\end{figure}

\begin{lemma} \label{lem:ortho-runtime-synchro}
	Any parallel cycle schedule reaches a configuration in which the particles form $H'(O)$ within $\mathcal{O}(H)$ parallel rounds.
\end{lemma}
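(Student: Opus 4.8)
The plan is to treat a parallel cycle schedule as a parallel \emph{corner-erosion} process acting on the region that is cut away from the strong $\mathcal{O}$-hull, and to bound the depth of that process. Write $U_t$ for the set of nodes enclosed by the cycle of tightening particles after parallel round $t$ (as in the proof of Lemma~\ref{lem:ortho-correct}, but reindexed by parallel rounds), and put $R_t := U_t \setminus H'(O)$. First I would collect the reductions that make the statement essentially combinatorial. The cycle $C$ of tightening particles never changes as an abstract cycle, by relabelling across role-swaps, exactly as noted in the proof of Lemma~\ref{lem:ortho-premature}; a tightening particle only ever moves ``inward,'' so $U_{t+1} \subseteq U_t$ and each tightening particle makes at most $\operatorname{diam}(U_0) = \mathcal{O}(H)$ moves over the whole execution. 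By Definition~\ref{def:cycleschedule}, a convex tightening particle is never permanently blocked: its target node is empty (it expands), holds an expanded non-tightening particle (it pushes), or holds a contracted non-tightening particle (it role-swaps); the only delay is that an expanded particle must spend one round contracting before it can move again, and the trailing non-tightening particles follow by handovers exactly as in the earlier phases of the algorithm. Hence it suffices to show that the last parallel round in which any convex particle moves is $\mathcal{O}(H)$.

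Next I would set up the erosion picture. At any time the convex tightening particles are precisely the ``outward'' corners of $U_t$; a move of such a particle $P$ deletes exactly one node of $R_t$, turns $P$ into a reflex particle, and makes $P$'s two cycle-neighbours convex --- this is the local computation already carried out in the proof of Lemma~\ref{lem:ortho-correct}. By that same proof every $U_t$ is $\mathcal{O}$-convex, so near each of the six corners of $U_0$ the relevant part of $\partial U_t$ is a monotone staircase whose ``inner'' corners are exactly the convex tightening particles there, and in a greedy parallel cycle schedule each such corner that is not adjacent to $O$ fires within $\mathcal{O}(1)$ rounds. The key claim I would prove is that every node $v \in R_0$ is deleted within $c\cdot\operatorname{dep}(v)$ rounds, for a suitable constant $c$ and a suitable notion of \emph{erosion depth} $\operatorname{dep}(v)$ --- roughly, the distance from $v$ to $\partial U_0$ measured inside $R_0$ --- which satisfies $\operatorname{dep}(v) \le \operatorname{diam}(U_0) = \mathcal{O}(H)$. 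Granting this, taking $v$ of maximum depth shows that all convex-particle moves have ceased by round $\mathcal{O}(H)$, at which point $U_t = H'(O)$ and, by Lemma~\ref{lem:ortho-correct}, no convex particle can ever move again; the subsequent (at most two) traversals of the tight-termination token add only $\mathcal{O}(H)$ further rounds by Lemma~\ref{lem:ortho-token}, but these are not needed for the statement at hand, which concerns only reaching a configuration that forms $H'(O)$.

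To prove the key claim I would argue by induction on $k$: by round $c\cdot k$, every node of $R_0$ of erosion depth at most $k$ has been deleted. The inductive step carries the real content, and I would make it precise with a potential function in the spirit of the one used in the proof of Lemma~\ref{lem:hullfillruntime} --- for instance, the current maximum erosion depth among deleted nodes, or a sum over the six corners of a ``remaining staircase'' measure --- and show that it strictly increases at least once every $\mathcal{O}(1)$ rounds while staying bounded by $\mathcal{O}(H)$. The two ingredients driving the increase are: (a) whenever $U_t \neq H'(O)$, the monotone-staircase invariant guarantees a convex tightening particle not adjacent to $O$, i.e.\ an available erosion move; and (b) by Definition~\ref{def:cycleschedule}, every such particle, once its two neighbours are contracted, performs its move on its next activation, so after a constant number of rounds each of the six staircases has shortened by at least one ``diagonal.'' Combining this per-corner advance of $\mathcal{O}(1)$ rounds per diagonal with the bound of $\mathcal{O}(H)$ diagonals per corner yields the overall $\mathcal{O}(H)$ bound.

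I expect the main obstacle to be exactly ingredients (a) and (b) made uniform across a whole parallel round. After a convex particle moves it becomes reflex and is temporarily stuck until a cycle-neighbour moves, so the erosion front is genuinely two-dimensional rather than a clean wave travelling along $C$, and one must verify that it nevertheless advances in depth by $\Omega(1)$ every $\mathcal{O}(1)$ rounds and never stalls. Concretely this demands a careful case analysis of how the convex / reflex / flat status of a particle and of its two neighbours co-evolves over one parallel round --- tracking the expanded-particle bookkeeping (a particle needs a round to re-contract before it can move again) and the role-swaps that convert interior non-tightening particles into tightening ones --- in order to argue that each $U_t$ restricted to a corner region remains a monotone staircase and that this staircase loses a diagonal every $\mathcal{O}(1)$ rounds. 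Once that local lemma is established, the global $\mathcal{O}(H)$ bound follows by the depth/diagonal counting above.
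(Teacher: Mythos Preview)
Your erosion-depth framework is a plausible alternative, but it takes a more indirect route than the paper and leaves the hardest part as an acknowledged obstacle. The paper's proof rests on two observations you do not make. First, a clean \emph{decomposition into six independent corners}: between any two initially-convex particles on $H(O)$ there is at least one tightening particle already adjacent to $O$, and that particle never moves, so the evolution near one corner is completely decoupled from the others. Second, instead of tracking eroded nodes, the paper gives a \emph{per-particle} time bound: for a particle $P$ near a corner $u$, with $i_P$ its initial distance to $u$ and $d_P$ the distance to its final position on $H'(O)$, $P$ reaches its final spot after at most $i_P + 2d_P$ rounds. This is proved by an explicit induction on $t$ pinning down where every particle with $i_Q \le t-1$ sits at round $t$ (a diagonal ``wave'' across a parallelogram with corners $u$ and $t_P$), after which $P$ proceeds unhindered. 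Since $i_P, d_P \le H$, the $\mathcal{O}(H)$ bound is immediate.

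Your approach trades this concrete per-particle tracking for a global potential (erosion depth), which forces you to prove the ``local lemma'' that each of the six staircases loses a diagonal every $\mathcal{O}(1)$ rounds --- exactly the obstacle you flag. That lemma is essentially equivalent in difficulty to the paper's inductive position computation, so you have not circumvented the core work; you have relocated it. If you want to complete your route, the cleanest way is to import the paper's first observation (independence of the six corners via the object-adjacent particle on each side) and then argue depth-monotonicity corner by corner; but at that point the per-particle bound $i_P + 2d_P$ is both simpler to state and simpler to prove than a depth-potential argument.
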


\begin{proof}
	Consider the structure of tightening particles forming $H(O)$ at the beginning of the algorithm.
	Note that for any two convex particles $P_1$ and $P_2$ that are connected by a straight line of hull particles (i.e., that are visited consecutively in a traversal of the cycle), there exists at least one particle $Q$ between $P_1$ and $P_2$ on that line that is adjacent to the object $O$.
	Any movement that is performed by any particle between $P_1$ and $Q$ on that line can only be a direct or indirect consequence of $P_1$'s first movement, but must be fully independent of any movement of $P_2$.
	Therefore, it suffices to analyze the execution of the algorithm on each of the six "corners" of $H(O)$, i.e., on all tightening particles that initially lie between a convex particle $P$ and the first particle adjacent to the object in any of the two directions in which $P$ has adjacent tightening particles.

    Consider a convex particle $P$ occupying node $u$ at the beginning of the algorithm.
    Let $v$ and $w$ be the first nodes in directions $d$ and $(d + 2) \text{ mod } 6$, respectively, in which $P$ has adjacent tightening particles such that $v$ and $w$ are adjacent to $O$.
	For any hull particle $P$ occupying a node between $u$ and $v$ or between $u$ and $w$, let $i_P$ be the distance from $P$'s initial position $s_P$ to $u$, and let $d_P$ be the distance from $s_P$ to its final position $t_P$ adjacent to $O$ in direction $(d + 1) \text{ mod } 6$.
	We will show that $P$ reaches $t_P$ after at most $i_P + 2 d_P$ synchronous rounds, which, as both $d_i$ and $i$ are bounded above by $H$ and the tightening particles of the six corners of $H(O)$ form $H'(O)$ independently, immediately implies the claim.

	First, note that our algorithm ensures that there is never an expanded non-tightening particle whose parent is expanded, i.e., handovers can always be performed.
	Now fix some particle $P$ occupying a node between $u$ and $v$ or between $u$ and $w$ and consider the parallelogram formed by diagonal vertices $u$ and $t_P$, with edges extending in directions $d$ and $(d + 2) \text{ mod } 6$ (see \figtext~\ref{fig:ohullruntime:a}).
	By the definition of $\Otri$-hull, no node of this parallelogram can be a node of $O$.
	Therefore, the following can easily be shown by induction on the number of synchronous rounds $t \le i_P$:
	Every particle $Q$ such that $i_{Q} \leq t - 1$ is contracted and occupies the node that lies $(t - i_Q) / 2$ steps in direction $(d + 1) \text{ mod } 6$ of $t_Q$, if $i_Q$ and $t$ are both even or both odd;
	otherwise, $Q$ is expanded, its head occupying the node that lies $(t - i_Q + 1) / 2$ steps in direction $(d + 1) \text{ mod } 6$ of $t_Q$ (see \figtext~\ref{fig:ohullruntime:b}--\ref{fig:ohullruntime:d}).
	Therefore, after $i_P$ rounds, the successor of $P$, if $s_P$ lies between $u$ and $v$, or its predecessor, if it lies between $u$ and $w$, is expanded for the first time (as in \figtext~\ref{fig:ohullruntime:b}).
	In the next round, $P$ will perform its first movement.
	It can easily be seen that $P$ will not be hindered in its movement until it reaches $t_P$, which therefore takes $2d_P$ additional rounds.
\end{proof}

Similarly to the proofs of Sections~\ref{subsec:counterruntime} and \ref{subsec:multiruntime} and the discussion of \cite{Daymude2018}, we compare a greedy parallel cycle schedule with an \emph{asynchronous cycle schedule} $(S,(C_0^A), \ldots, C_t^A))$ given a fair asynchronous activation sequence.
For any two configuration $C$ and $C'$ and a tightening particle $P$, we say that \emph{$C$ dominates $C'$ w.r.t. $P$}, if and only if $P$ has performed at least as many movements in $C$ as in $C'$, and say that \emph{$C$ dominates $C'$} if and only if $C$ dominates $C'$ w.r.t. every particle.
Note that if a movement is possible, it can never be hindered, therefore the discussion of \cite{Daymude2018} implies the following lemma.

\begin{lemma} \label{lem:orthodominance}
	Given any fair asynchronous activation sequence $A$ and some initial configuration $C_0^A$ for the $\Otri$-Hull Algorithm, there exists a greedy parallel cycle schedule $(C_0, \ldots, C_t)$ with $C_0 = C_0^A$ such that $C_i^A \succeq C_i$ for all $0 \leq i \leq t$.
\end{lemma}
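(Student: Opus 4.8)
The plan is to mirror the dominance argument used for the binary counter in Lemma~\ref{lem:counterdominance}. Given the fair asynchronous activation sequence $A$ and the starting configuration $C_0^A$, I would construct a greedy parallel cycle schedule $(C_0, \ldots, C_t)$ by setting $C_0 = C_0^A$ and, for each subsequent parallel round, letting every particle greedily perform the highest-priority action of Definition~\ref{def:cycleschedule} it is able to (a convex particle moving, a childless expanded particle contracting, or a handover). I would then prove $C_i^A \succeq C_i$ by induction on $i$, comparing the two schedules particle-by-particle: for each tightening particle $P$, the goal is to show that after round $i$ of the asynchronous schedule $P$ has performed at least as many movements as it has after round $i$ of the parallel schedule. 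Since the cycle $C = (P_0, \ldots, P_m = P_0)$ of tightening particles never changes (as observed in the proof of Lemma~\ref{lem:ortho-premature}) and each movement of a particle is into a deterministically determined node, a particle's movement count essentially pins down its position, so this relation captures exactly the relative progress of the two executions.

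The key enabling property --- the analogue of the non-blocking facts used for the counter --- is that \emph{a movement, once available, can never be hindered}: if at some time a tightening particle $P$ is convex and contracted, the node $v$ between its successor and predecessor is either empty (so $P$ expands into it) or occupied by a non-tightening particle $Q$ (so $P$ pushes $Q$ if expanded, or role-swaps with $Q$ if contracted), and in every case $P$ completes the move; moreover, the algorithm maintains the invariant used in the proof of Lemma~\ref{lem:ortho-runtime-synchro} that no expanded non-tightening particle has an expanded parent, so the handovers that let expanded particles contract can always be carried out. Consequently, by fairness, a convex particle that can move in the parallel schedule will also be activated and move within the corresponding asynchronous round, unless it has already moved at least as often there.

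With this in hand the induction runs as in Lemma~\ref{lem:counterdominance}. The base case is immediate since $C_0 = C_0^A$. For the inductive step, fix a tightening particle $P$ and assume $C_{i-1}^A \succeq C_{i-1}$. If $P$ performs no movement in parallel round $i$, then its parallel count is unchanged and its asynchronous count can only have grown, so dominance w.r.t.\ $P$ is preserved. If $P$ moves in parallel round $i$, then either $P$ has strictly more movements in $C_{i-1}^A$ than in $C_{i-1}$ --- in which case dominance w.r.t.\ $P$ still holds after round $i$ since $P$ moves at most once per parallel round --- or the counts agree, and I must show $P$ also moves during asynchronous round $i$. Here I would use that the six corners of $H(O)$ evolve independently of one another (as established in the proof of Lemma~\ref{lem:ortho-runtime-synchro}), so it suffices to reason within the sub-chain of tightening particles between a convex particle and the nearest object-adjacent tightening particle on each side; within such a sub-chain the dynamics are exactly those of a spanning tree of particles led along a fixed path, which is the setting of~\cite{Daymude2018}, with role-swaps and pushes of non-tightening particles mapping onto the expansions and handovers there. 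Because movements are never hindered, $P$ being convex at the start of parallel round $i$ together with the induction hypothesis forces $P$ to be able to move (or to have already moved at least as often) in the asynchronous execution by the end of round $i$, yielding $C_i^A \succeq C_i$ w.r.t.\ $P$; as $P$ is arbitrary, $C_i^A \succeq C_i$.

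The main obstacle I anticipate is precisely this last reduction: making rigorous the claim that, within a corner, the ``convex corner'' precondition for a particle to move is monotone in the movement counts of its cycle-neighbors, so that equal counts for $P$ together with neighbors that are only further along asynchronously cannot destroy $P$'s ability to move. I would handle this by appealing to the per-corner structural analysis in the proof of Lemma~\ref{lem:ortho-runtime-synchro} --- which shows the wave of movements propagates monotonically from each initial convex particle toward the object --- and to Lemmas~2, 3, and~9 of~\cite{Daymude2018}, whose dominance machinery then applies essentially verbatim once each corner is viewed as a path-following spanning tree and role-swaps are interpreted as expansions.
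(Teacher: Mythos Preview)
Your proposal is correct and takes essentially the same approach as the paper. The paper's own argument is in fact a single sentence: it notes that ``if a movement is possible, it can never be hindered,'' and then states that the dominance machinery of~\cite{Daymude2018} therefore applies directly; your write-up fleshes out exactly this reasoning (the induction on rounds, the unhindered-movement property, and the reduction to the path-following setting of~\cite{Daymude2018}), so your per-corner decomposition and monotonicity discussion are more detail than the paper supplies rather than a different route.
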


After $H'(O)$ is formed, the tight-termination token is passed over the entire cycle at most twice by Lemma~\ref{lem:ortho-token}, which takes at most $\mathcal{O}(H)$ asynchronous rounds.
Finally, once $P_\first$ terminates by becoming tight-finished, in the worst case only one additional particle becomes tight-finished in each subsequent round.
Thus, it may take an additional $\mathcal{O}(n)$ asynchronous rounds in the worst case before all particles terminate.
We conclude the following theorem.

\begin{theorem} \label{thm:ortho-convex-hull}
    In at most $\mathcal{O}(|H(O)|)$ asynchronous rounds, the $\Otri$-Hull Algorithm solves instance $(\system, O)$ of the $\Otri$-hull formation problem if $|\system| \geq |H(O)|$.
    After an additional $\mathcal{O}(|\system|)$ rounds, all particles have terminated.
\end{theorem}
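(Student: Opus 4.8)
The plan is to establish correctness first and then the two runtime bounds, in each case assembling the lemmas already proved for the $\Otri$-Hull Algorithm and the dominance framework used throughout Sections~\ref{subsec:counterruntime}--\ref{subsec:multiruntime}.

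\textbf{Correctness.} First I would note that the hypothesis $|\system| \ge |H(O)|$ is precisely what makes the $\Otri$-hull formation phase run at all: by Theorem~\ref{thm:distributedcorrect}, a finished particle holds the ``all contracted'' token with counter value $7$ only when $|\system| \ge H$, and at that moment the Convex Hull Algorithm has just filled $H(O)$ with contracted finished particles. The relabeling step then turns the $H$ particles on $H(O)$ into a bi-directed cycle of contracted tightening particles with consistent predecessor/successor pointers and makes every remaining particle non-tightening with its parent pointer aimed at a (non-)tightening neighbour, so the phase begins from a valid configuration whose tightening cycle occupies exactly $H(O)$. Correctness of the rest of the phase then follows directly from three lemmas: Lemma~\ref{lem:ortho-correct} gives that the cycle shrinks monotonically and eventually forms $H'(O)$, after which no convex particle can move; Lemma~\ref{lem:ortho-premature} gives that $P_\first$ cannot become tight-finished (and thus cannot halt the other particles) before $H'(O)$ is formed; and Lemma~\ref{lem:ortho-token} gives that, once $H'(O)$ is formed, the tight-termination token makes at most two further sweeps of the cycle before $P_\first$ terminates, after which tight-finished status is broadcast to every particle. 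Together these yield that every node of $H'(O)$ is ultimately occupied by a contracted particle and every particle terminates, i.e., that the instance $(\system, O)$ of the $\Otri$-hull formation problem is solved.

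\textbf{Runtime.} For the $\mathcal{O}(H)$ bound I would invoke the dominance machinery. By Lemma~\ref{lem:orthodominance}, for any fair asynchronous activation sequence there is a greedy parallel cycle schedule whose configuration after $i$ parallel rounds is dominated by the asynchronous configuration after round $i$ (measuring, per tightening particle, the number of movements made); as noted before that lemma, this rests on the fact that once a movement becomes possible it can never be hindered. By Lemma~\ref{lem:ortho-runtime-synchro}, the greedy parallel cycle schedule forms $H'(O)$ within $\mathcal{O}(H)$ parallel rounds. Combining the two, $H'(O)$ is formed within $\mathcal{O}(H)$ asynchronous rounds of the phase being triggered; adding the at most two token sweeps of Lemma~\ref{lem:ortho-token}, which cost $\mathcal{O}(H)$ more rounds, shows the $\Otri$-hull formation phase completes in $\mathcal{O}(H)$ asynchronous rounds (matching the ``additional $\mathcal{O}(H)$ rounds'' claimed in Section~\ref{subsec:results}, on top of the $\mathcal{O}(B)$ of Theorem~\ref{thm:runtime}). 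For the second part, once $P_\first$ is tight-finished, in the worst case exactly one new particle becomes tight-finished per round, so a further $\mathcal{O}(|\system|)$ rounds suffice for every particle to terminate, exactly as in Corollary~\ref{cor:finaltermination}.

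\textbf{Main obstacle.} The technical heart of the argument lives in Lemma~\ref{lem:orthodominance} and the ``a movement is never hindered'' property it relies on: one must verify that the algorithm never produces an expanded non-tightening particle whose parent is also expanded, so that handovers can always proceed and the asynchronous execution keeps pace with the greedy parallel one, which is what lets the parallel bound of Lemma~\ref{lem:ortho-runtime-synchro} transfer to the asynchronous setting. The remaining work is bookkeeping: checking that the initial relabeling into the tightening cycle and non-tightening spanning forest does not delay the onset of convex-particle movements by more than $\mathcal{O}(H)$ rounds (a convex particle waiting to role-swap only needs its neighbour to become non-tightening, which occurs within $\mathcal{O}(1)$ rounds), and absorbing the final tight-finished broadcast into the $\mathcal{O}(|\system|)$ term.
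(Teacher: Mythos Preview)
Your proposal is correct and follows essentially the same approach as the paper: correctness via Lemmas~\ref{lem:ortho-premature}, \ref{lem:ortho-correct}, and \ref{lem:ortho-token}; the $\mathcal{O}(H)$ bound by combining the parallel bound of Lemma~\ref{lem:ortho-runtime-synchro} with the dominance Lemma~\ref{lem:orthodominance} and the two token sweeps of Lemma~\ref{lem:ortho-token}; and the $\mathcal{O}(|\system|)$ tail by the one-particle-per-round termination broadcast. Your identification of the ``movements are never hindered'' property (no expanded non-tightening particle with an expanded parent) as the crux of the dominance step is exactly what the paper relies on inside Lemma~\ref{lem:ortho-runtime-synchro}.
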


\bibliographystyle{plainurl}
\bibliography{refs}

\newpage

\appendix

\section{Algorithm Pseudocode} \label{app:pseudocode}

In this appendix, we provide detailed pseudocode for the Convex Hull Algorithm.
This includes the adaptation of the distributed binary counters to the application of convex hull formation.

\subsection{A Binary Counter of Particles} \label{subapp:countercode}

In the presentation of the distributed binary counter of particles (Section~\ref{sec:counter}), the particles are arranged in a simple path $P_0, P_1, \ldots, P_k$.
Each particle $P_i$ keeps one bit of the counter, denoted $P_i.\bit$, and can hold up to two tokens (except the leader $P_0$ that holds at most one token) in a queue, denoted $P_i.\tokens = [next, queued]$.
We begin by fully specifying this protocol as it is described in Section~\ref{sec:counter}, without the modifications used for its application to convex hull formation.
When activated, each particle $P_i$ processes its counter tokens as described in Algorithm~\ref{alg:counter:tokens}.

\begin{algorithm}
\caption{Binary Counter: Processing Counter Tokens} \label{alg:counter:tokens}
\begin{algorithmic}[1]
\Function{ProcessCounter}{$P_i$, $P_{i+1}$}
\If {the next non-final token in $P_i.\tokens$ is a $c^+$}
    \If {$P_i.\bit = 0$}
        \State Dequeue $c^+$ from $P_i.\tokens$, delete $c^+$, and set $P_i.\bit \gets 1$.
    \ElsIf {($P_i.\bit = 1$) $\wedge$ ($P_{i+1}.\tokens$ is not full, i.e., has less than two tokens)}
        \State Dequeue $c^+$ from $P_i.\tokens$, enqueue $c^+$ into $P_{i+1}.\tokens$, and set $P_i.\bit \gets 0$.
    \ElsIf {$P_i.\bit = \emptyset$}
        \State Dequeue $f$ from $P_i.\tokens$ and enqueue $f$ into $P_{i+1}.\tokens$.\label{alg:counter:tokens:extend}
        \State Dequeue $c^+$ from $P_i.\tokens$, delete $c^+$, and set $P_i.\bit \gets 1$.
    \EndIf
\ElsIf {the next non-final token in $P_i.\tokens$ is a $c^-$}
    \If {$P_i.\bit = 1$}
        \If {$\neg(P_{i+1}.\bit = 1 \wedge P_{i+1}.\tokens = [c^-]$)}
            \State Dequeue $c^-$ from $P_i.\tokens$, delete $c^-$, and set $P_i.\bit \gets 0$.
            \If {($P_{i+1}.\tokens = [f]$) $\wedge$ ($P_i \neq P_0$)}
                \State Dequeue $f$ from $P_{i+1}.\tokens$, enqueue $f$ into $P_i.\tokens$, and set $P_i.\bit \gets \emptyset$.
            \EndIf
        \EndIf
    \ElsIf {($P_i.\bit = 0$) $\wedge$ ($P_{i+1}.\tokens$ is not full)}
        \State Dequeue $c^-$ from $P_i.\tokens$, enqueue $c^-$ into $P_{i+1}.\tokens$, and set $P_i.\bit \gets 1$.
    \EndIf
\EndIf
\EndFunction
\end{algorithmic}
\end{algorithm}

In addition to processing its counter tokens, the leader particle $P_0$ can generate new increment and decrement tokens as well as zero-test the counter (Algorithm~\ref{alg:counter:leadops}).
\textsc{Generate}($op$) generates the token corresponding to the specified operation, which is either an increment or a decrement.
\textsc{ZeroTest}($P_0$, $P_1$) checks whether the counter's value is $0$, and is either ``unavailable'' or returns \textsc{True} or \textsc{False}.

\begin{algorithm}
\caption{Binary Counter: Leader Operations} \label{alg:counter:leadops}
\begin{algorithmic}[1]
\Function{Generate}{$op$}
\If {$P_0.\tokens = []$, i.e., $P_0.\tokens$ is empty}
    \If {$op$ is ``increment''} generate $c^+$ and enqueue it into $P_0.\tokens$.
    \ElsIf {$op$ is ``decrement''} generate $c^-$ and enqueue it into $P_0.\tokens$.
    \EndIf
\EndIf
\EndFunction
\end{algorithmic}
\begin{algorithmic}[1]
\Function{ZeroTest}{$P_0$, $P_1$}
\If {($P_1.\bit = 1$) $\wedge$ ($P_1.\tokens = [c^-]$)} \Return ``unavailable''.
\Else {} \Return ($P_1.\tokens = [f]$) $\wedge$ (($P_0.\bit = 0 \wedge P_0.\tokens = []$) $\vee$ ($P_0.\bit = 1 \wedge P_0.\tokens = [c^-]$))
\EndIf
\EndFunction
\end{algorithmic}
\end{algorithm}

\subsection{Binary Counter Adaptation for Convex Hull Formation} \label{subapp:hullcountercode}

As described in Section~\ref{subsec:counteradapt}, a total of four modifications to the general binary counter protocol were needed for the application of convex hull formation.
First, each particle participates in six counters --- one for each half-plane $h \in \mathcal{H}$ --- instead of just one.
Second, each particle keeps up to two bits of each counter instead of one.
These first two modifications are supported with more general notation.
Each particle $P$ keeps the following information in its memory:
\begin{itemize}
\item For each half-plane $h \in \mathcal{H}$, two bits of a counter $d_h$: $P.\bit_h^L$ (the less significant bit) and $P.\bit_h^M$ (the more significant bit) with values in $\{\emptyset, \sqcup, 0, 1\}$.
Recall that a bit with value $\emptyset$ is beyond the most-significant bit, and a bit with value $\sqcup$ is ``blank''.

\item For each half-plane $h \in \mathcal{H}$, two queues of tokens: $P.\tokens_h^L$ (for the less significant bit) and $P.\tokens_h^M$ (for the more significant bit), each of which can hold up to two counter tokens (increment $c_h^+$, decrement $c_h^-$, or final $f_h$).
\end{itemize}

The third modification handles extending the counter unambiguously in the tree structure of the particle system.
In the case that an increment carry over causes the counter to grow by one bit, the counter is always extended along hull particles, if they exist, or along the boundary of the object otherwise.
This logic is handled by \textsc{NextCounterParticle}($h$) for counter $d_h$.
Finally, because a particle may only be keeping one bit of a counter instead of two (i.e., it may have $P.\bit_h^M = \sqcup$), bit forwarding is used to move a counter's bits as far towards the leader as possible.
This operation is handled by \textsc{ForwardBits}($Q$, $h$), where the activated particle $P$ takes the next-most significant bit of counter $d_h$ from $Q$.

\begin{algorithm}
\caption{Binary Counter for Convex Hull: Processing Counter Tokens} \label{alg:hullcounter:tokens}
\begin{algorithmic}[1]
\Function{NextCounterParticle}{$P$, $h$}
\If {$P$ has a child in the spanning tree $Q$ with $Q.\bit_h^L \in \{0,1\}$ or with $Q.\tokens_h^L \neq []$} \Return $Q$.
\ElsIf {$P$ has a child $Q$ with $Q.\mystate \in \{$\emph{hull}, \emph{marker}, \emph{pre-marker}$\}$} \Return $Q$.
\Else {} \Return any child of $P$ that is on the object's boundary.
\EndIf
\EndFunction
\end{algorithmic}
\begin{algorithmic}[1]
\Function{ForwardBits}{$P$, $Q$, $h$}
\If {($P.\bit_h^M = \sqcup$) $\wedge$ ($Q.\bit_h^M \neq \sqcup$)}
    \State Set $P.\bit_h^M \gets Q.\bit_h^L$ and $P.\tokens_h^M \gets Q.\tokens_h^L$.
    \State Set $Q.\bit_h^L \gets Q.\bit_h^M$ and $Q.\tokens_h^L \gets Q.\tokens_h^M$.
    \If {$Q.\bit_h^M \in \{0, 1\}$} $Q.\bit_h^M \gets \sqcup$.
    \EndIf
    \State Set $Q.\tokens_h^M \gets []$.
\EndIf
\EndFunction
\end{algorithmic}
\begin{algorithmic}[1]
\Function{ProcessHullCounters}{ }
\ForAll {$h \in \mathcal{H}$}
    \State Let $Q \gets$ \Call{NextCounterParticle}{$P$, $h$}.
    \State \Call{ForwardBits}{$P$, $Q$, $h$}.
    \If {$P.\bit_h^M = \sqcup$}
        \State Run Algorithm~\ref{alg:counter:tokens}'s \Call{ProcessCounter}{$P.(\cdot)_h^L$, $Q.(\cdot)_h^L$}.
    \Else {}
        \State Run Algorithm~\ref{alg:counter:tokens}'s \Call{ProcessCounter}{$P.(\cdot)_h^M$, $Q.(\cdot)_h^L$}.
        \State Run Algorithm~\ref{alg:counter:tokens}'s \Call{ProcessCounter}{$P.(\cdot)_h^L$, $P.(\cdot)_h^M$}.
    \EndIf
\EndFor
\EndFunction
\end{algorithmic}
\end{algorithm}

As in the general binary counter, the leader particle $\ell = P_0$ can also generate new increment and decrement tokens as well as zero-test the counter (Algorithm~\ref{alg:hullcounter:leadops}).
\textsc{HullGenerate}($i$) generates the increment and decrement tokens associated with the update vector $\delta_i$ (see Section~\ref{sec:soloalg}).
\textsc{HullZeroTest}($h$) checks whether $d_h = 0$, and is either ``unavailable'' or returns \textsc{True} or \textsc{False}.

\begin{algorithm}
\caption{Binary Counter for Convex Hull: Leader Operations} \label{alg:hullcounter:leadops}
\begin{algorithmic}[1]
\Function{HullGenerate}{$i$}
\ForAll {$h \in \mathcal{H}$}
    \If {$\delta_{i,h} = 1$} generate $c_h^+$ and enqueue it into $\ell.\tokens_h^L$.
    \ElsIf {$\delta_{i,h} = -1$} generate $c_h^-$ and enqueue it into $\ell.\tokens_h^L$.
    \EndIf
\EndFor
\EndFunction
\end{algorithmic}
\begin{algorithmic}[1]
\Function{HullZeroTest}{$h$}
\If {$\ell.\bit_h^M \neq \sqcup$} \Return Algorithm~\ref{alg:counter:leadops}'s \Call{ZeroTest}{$\ell.(\cdot)_h^L$, $\ell.(\cdot)_h^M$}.
\Else {} \Return Algorithm~\ref{alg:counter:leadops}'s \Call{ZeroTest}{$\ell.(\cdot)_h^L$, $P_1.(\cdot)_h^L$}.
\EndIf
\EndFunction
\end{algorithmic}
\end{algorithm}

\subsection{The Learning Phase} \label{subapp:learncode}

In the learning phase (Section~\ref{subsec:learningalg}), a particle $P$ can be in one of three states: \emph{leader}, \emph{follower}, or \emph{idle}, denoted $P.\mystate$.
In addition to those listed in the previous section, $P$ keeps the following information in its memory:
\begin{itemize}
\item A direction $P.\parent \in [6] = \{0, 1, \ldots, 5\}$ pointing to its parent in the spanning tree.
We denote the case that $P$ has no parent as $P.\parent = \emptyset$.
In a slight abuse of notation, we also refer to the particle in the direction of $P.\parent$ as $P.\parent$.

\item For each half-plane $h \in \mathcal{H}$, a terminating bit $b_h \in \{0, 1\}$.
\end{itemize}

The pseudocode is written from the perspective of a particle $P$, unless otherwise specified.
We begin with some helper functions (Algorithm~\ref{alg:learn:help}) that are used throughout the learning phase.
\textsc{GetRHR}() returns a direction $i \in [6]$ that leads to the next node in a clockwise traversal of the object's boundary.
\textsc{HandoverIsSafe}($P$, $Q$) checks whether a handover between a contracted particle $P$ and an expanded particle $Q$ is guaranteed to maintain the connectivity of the counters.
\textsc{RoleSwap}($P$, $Q$) copies the memory of $P$ into the memory of $Q$ and then sets $P$ to be a follower of $Q$.

\begin{algorithm}
\caption{Learning Phase: Helper Functions} \label{alg:learn:help}
\begin{algorithmic}[1]
\Function{GetRHR}{ }
\State Let $i \in [6]$ be such that an object occupies the node in direction $i$.
\While {an object occupies the node in direction $i$} $i \gets (i + 5) \text{ mod } 6$.
\EndWhile
\State \Return $i$.
\EndFunction
\end{algorithmic}
\begin{algorithmic}[1]
\Function{HandoverIsSafe}{$P$, $Q$}
\ForAll {$h \in \mathcal{H}$}
    \State Let $a_1 \gets$ ($P.\bit_h^M \neq \sqcup$) $\wedge$ ($Q.\bit_h^M \neq \sqcup$).
    \State Let $a_2 \gets$ ($P.\bit_h^M = P.\bit_h^L = Q.\bit_h^M = Q.\bit_h^L = \emptyset$).
    \State Let $a_3 \gets$ ($P.\bit_h^M = P.\bit_h^L = \emptyset$) $\wedge$ (one of $Q.\tokens_h^M$ or $Q.\tokens_h^L$ contains $f_h$).
    \If {$\neg(a_1 \vee a_2 \vee a_3)$} \Return \textsc{False}.
    \EndIf
\EndFor
\State \Return \textsc{True}.
\EndFunction
\end{algorithmic}
\begin{algorithmic}[1]
\Function{RoleSwap}{$P$, $Q$}
\ForAll {$h \in \mathcal{H}$}
    \State Set $\{Q.\bit_h^M, Q.\bit_h^L\} \gets \{\sqcup, P.\bit_h^L\}$, $Q.\tokens_h^L \gets P.\tokens_h^L$, and $Q.b_h \gets P.b_h$.
    \State Set $P.\bit_h^L \gets P.\bit_h^M$ and $P.\tokens_h^L \gets P.\tokens_h^M$.
    \State Set $P.\bit_h^M \gets \sqcup$ and $P.\tokens_h^M \gets []$ and clear $P.b_h$.
\EndFor
\State Set $Q.\mystate \gets$ \emph{leader} and $P.\mystate \gets$ \emph{follower}.
\State Set $Q.\parent \gets \emptyset$ and $P.\parent \gets Q$.
\EndFunction
\end{algorithmic}
\end{algorithm}

During the learning phase, the leader particle $\ell$ performs a clockwise traversal of the object's boundary, using its distributed binary counters to measure its current distances from each of the six half-planes (Algorithm~\ref{alg:learn:lead}).
It completes its estimation and moves on to the formation phase once it has visited all six half-planes without pushing any of them.

\begin{algorithm}
\caption{Learning Phase: the Leader Particle} \label{alg:learn:lead}
\begin{algorithmic}[1]
\State \Call{ProcessHullCounters}{ }.
\If {$\ell$ is expanded}
    \If {$\ell$ has a contracted follower child $P$ on the object's boundary}
        \State Perform a pull handover with $P$.
    \EndIf
\ElsIf {$\ell$ is contracted}
    \If {$b_h = 1$ for all $h \in \mathcal{H}$}
        \State Estimation complete; go to formation phase.
    \ElsIf {\Call{HullZeroTest}{$h$} is available and $\ell.\tokens_h^L = []$ for all $h \in \mathcal{H}$}
        \State Let $i \gets$ \Call{GetRHR}{ }, and let $v$ be the node in direction $i$.
        \If {($v$ is unoccupied) $\vee$ (a contracted particle $P$ occupies $v$ and $\ell.\bit_h^M \neq \sqcup$ for all $h \in \mathcal{H}$)}
            \State Use \Call{HullZeroTest}{$h$} to construct $\mathcal{H}' = \{h \in \mathcal{H} : \delta_{i,h} = -1 \text{ and } d_h = 0\}$.
            \State \Call{HullGenerate}{$i$}.
            \If {$\mathcal{H}' \neq \emptyset$} set $b_h \gets 0$ for all $h \in \mathcal{H}$.
            \Else {} use \Call{HullZeroTest}{$h$} to set $b_h \gets 1$ for all $h \in \mathcal{H}$ such that $d_h = 0$.
            \EndIf
            \If {$v$ is unoccupied} expand in direction $i$ into node $v$.
            \Else {} \Call{RoleSwap}{$\ell$, $P$}
            \EndIf
        \EndIf
    \EndIf
\EndIf
\end{algorithmic}
\end{algorithm}

Followers simply follow their parents in the spanning tree through handovers, and contract whenever they have no children themselves (Algorithm~\ref{alg:learn:follow}).

\begin{algorithm}
\caption{Learning Phase: Follower Particles} \label{alg:learn:follow}
\begin{algorithmic}[1]
\State \Call{ProcessHullCounters}{ }.
\If {$P$ is expanded}
    \If {$P$ has no children nor any idle neighbors}
        \State Contract tail.
    \ElsIf {$P$ has a contracted child $Q$ for which \Call{HandoverIsSafe}{$Q$, $P$}}
        \State Perform a pull handover with $Q$.
    \EndIf {}
\ElsIf {$P.\parent$ is expanded and \Call{HandoverIsSafe}{$P$, $P.\parent$}}
    \State Perform a push handover with $P.\parent$.
\EndIf
\end{algorithmic}
\end{algorithm}

Finally, idle particles simply attempt to become followers (Algorithm~\ref{alg:learn:idle}).

\begin{algorithm}
\caption{Learning Phase: Idle Particles} \label{alg:learn:idle}
\begin{algorithmic}[1]
\If {$P$ has a neighbor $Q$ with $Q.\mystate \neq$ \emph{idle}}
    \State $P.\mystate \gets$ \emph{follower}.
    \State $P.\parent \gets Q$.
\EndIf
\end{algorithmic}
\end{algorithm}

\subsection{The Hull Closing Subphase} \label{subapp:hullclosecode}

The first of the formation phase's subphases is the hull closing subphase (Section~\ref{subsec:formationalg}).
This subphase introduces four new states: \emph{hull}, \emph{pre-marker}, \emph{marker}, and \emph{finished}.
The leader keeps an additional variable $\ell.\plane \in \mathcal{H}$ that indicates which half-plane boundary it is currently following.
All particles have the potential of encountering two new token types in this subphase: the ``all-expanded'' token $all_{exp}$ and the ``termination'' token.

We first introduce some helper functions (Algorithm~\ref{alg:hullclose:help}) that are used by the leader.
\textsc{PlaneToDir}($h$) returns a direction $i \in [6]$ that leads to the next node in a clockwise traversal of the convex hull along the specified half-plane.
\textsc{NextPlane}($h$) returns the next half-plane from $h$ in a clockwise traversal of the convex hull.
\textsc{RoleSwap}($P$, $Q$), like the learning phase helper function by the same name, copies the memory of $P$ into the memory of $Q$.
However, in the hull closing subphase, a role-swap leaves $P$ as a hull particle and does not bother with the terminating bits $b_h$ that are no longer used.

\begin{algorithm}
\caption{Hull Closing Subphase: Helper Functions} \label{alg:hullclose:help}
\begin{algorithmic}[1]
\Function{PlaneToDir}{$h$}
\State Let $i$ be the index of $h$ in $\mathcal{H} = \{N, NE, SE, S, SW, NW\}$.
\State \Return $(i + 1) \text{ mod } 6$.
\EndFunction
\end{algorithmic}
\begin{algorithmic}[1]
\Function{NextPlane}{$h$}
\State \Return $\mathcal{H}[$\Call{PlaneToDir}{$h$}$]$.
\EndFunction
\end{algorithmic}
\begin{algorithmic}[1]
\Function{RoleSwap}{$P$, $Q$}
\ForAll {$h \in \mathcal{H}$}
    \State Set $\{Q.\bit_h^M, Q.\bit_h^L\} \gets \{\sqcup, P.\bit_h^L\}$ and $Q.\tokens_h^L \gets P.\tokens_h^L$.
    \State Set $P.\bit_h^L \gets P.\bit_h^M$ and $P.\tokens_h^L \gets P.\tokens_h^M$.
    \State Set $P.\bit_h^M \gets \sqcup$ and $P.\tokens_h^M \gets []$.
\EndFor
\State Set $Q.\mystate \gets$ \emph{leader} and $P.\mystate \gets$ \emph{hull}.
\State Set $Q.\parent \gets \emptyset$ and $P.\parent \gets Q$.
\EndFunction
\end{algorithmic}
\end{algorithm}

During the hull closing subphase, the leader leads the rest of the particle system along a clockwise traversal of the convex hull using its binary counters to know when to turn (Algorithm~\ref{alg:hullclose:lead}).
The leader's initial position is kept by a \emph{marker} particle; the leader moves on to the next subphase when it encounters the marker particle again.
However, if there are not enough particles to close the hull ($n < \lceil H / 2\rceil$), the leader will eventually receive the $all_{exp}$ token, triggering an early termination.

\begin{algorithm}
\caption{Hull Closing Subphase: the Leader Particle} \label{alg:hullclose:lead}
\begin{algorithmic}[1]
\State \Call{ProcessHullCounters}{ }.
\If {$\ell$ is holding the $all_{exp}$ token and the marker particle is adjacent to $\ell$ but not a child of $\ell$}
    \State Generate a termination token and pass it to the hull or marker child of $\ell$.
    \State Set $\ell.\mystate \gets$ \emph{finished} and \Return
\EndIf
\If {$\ell$ is expanded}
    \If {$\ell$ has a contracted hull child $Q$} perform a pull handover with $Q$.
    \ElsIf {$\ell$ has no hull children but has a contracted follower child $Q$ keeping counter bits}
        \State Set $Q.\mystate \gets$ \emph{pre-marker}.
        \State Perform a pull handover with $Q$.
    \EndIf
\EndIf
\If {\Call{HullZeroTest}{\textsc{NextPlane}($\ell.\plane$)} $\neq$ ``unavailable''}
    \If {\Call{HullZeroTest}{\textsc{NextPlane}($\ell.\plane$)}}
        \State Set $\ell.\plane \gets$ \Call{NextPlane}{$\ell.\plane$}.
    \EndIf
    \State Let $i \gets$ \Call{PlaneToDir}{$\ell.\plane$}, and let $v$ be the node in direction $i$ from the head of $\ell$.
    \If {$v$ is occupied by the marker particle $Q$}
        \State The hull is closed. Set $Q.\mystate \gets$ \emph{finished} and go on to hull filling subphase.
    \ElsIf {($\ell$ is contracted) $\wedge$ ($\ell.\tokens_h^L = []$ for all $h \in \mathcal{H}$) $\wedge$ (($v$ is unoccupied) $\vee$ (a contracted follower $P$ occupies $v$ and $\ell.\bit_h^M \neq \sqcup$ for all $h \in \mathcal{H}$))}
        \State \Call{HullGenerate}{$i$}
        \If {$v$ is unoccupied} expand in direction $i$ into node $v$.
        \Else {} \Call{RoleSwap}{$\ell$, $P$}.
        \EndIf
    \EndIf
\EndIf
\end{algorithmic}
\end{algorithm}

Hull particles are followers that have entered the convex hull structure.
Specifically, every particle from the marker to the leader is a hull particle.
These particles simply follow the leader, maintain the binary counters, and pass tokens they receive (Algorithm~\ref{alg:hullclose:hull}).

\begin{algorithm}
\caption{Hull Closing Subphase: Hull Particles} \label{alg:hullclose:hull}
\begin{algorithmic}[1]
\State \Call{ProcessHullCounters}{ }.
\If {$P$ is holding the termination token}
    \State Set $P.\mystate \gets$ \emph{finished}.
    \State Pass the termination token to the hull or marker child of $P$.
\ElsIf {$P$ is expanded}
    \If {$P$ has a contracted hull child $Q$}
        \State Perform a pull handover with $Q$.
    \ElsIf {$P$ is holding the $all_{exp}$ token and $P.\parent$ is expanded}
        \State Pass the $all_{exp}$ token to $P.\parent$.
    \EndIf
\ElsIf {$P$ is contracted and $P.\parent$ is expanded}
    \State Perform a push handover with $P.\parent$.
\EndIf
\end{algorithmic}
\end{algorithm}

The pre-marker particle is always expanded and becomes the marker particle when it contracts (Algorithm~\ref{alg:hullclose:premark}).

\begin{algorithm}
\caption{Hull Closing Subphase: the Pre-Marker Particle} \label{alg:hullclose:premark}
\begin{algorithmic}[1]
\State \Call{ProcessHullCounters}{ }.
\If {$P$ has no children nor any idle neighbors}
    \State Set $P.\mystate \gets$ \emph{marker}.
    \State Contract tail.
\ElsIf {$P$ has a contracted child $Q$ for which \Call{HandoverIsSafe}{$Q$, $P$}}
    \State Set $P.\mystate \gets$ \emph{marker}.
    \State Perform a pull handover with $Q$.
\EndIf
\end{algorithmic}
\end{algorithm}

The marker particle marks the beginning and end of the leader's hull traversal.
This marker role is passed backwards along the followers as the particles move forward through handovers (Algorithm~\ref{alg:hullclose:mark}).
If the marker particle is ever expanded and has no children in the spanning tree, it knows all particles have joined the hull.
It then generates the $all_{exp}$ token and forwards it towards the leader along expanded particles only.

\begin{algorithm}
\caption{Hull Closing Subphase: the Marker Particle} \label{alg:hullclose:mark}
\begin{algorithmic}[1]
\State \Call{ProcessHullCounters}{ }.
\If {$P$ is holding the termination token}
    \State Set $P.\mystate \gets$ \emph{finished}.
    \State Consume the termination token.
\ElsIf {$P$ is expanded}
    \If {$P$ is holding the $all_{exp}$ token and $P.\parent$ is expanded}
        \State Pass the $all_{exp}$ token to $P.\parent$.
    \EndIf
    \If {$P$ has no children nor any idle neighbors}
        \State Generate the $all_{exp}$ token.
    \ElsIf {$P$ has a contracted child $Q$ for which \Call{HandoverIsSafe}{$Q$, $P$}}
        \State Set $Q.\mystate \gets$ \emph{pre-marker}.
        \State Set $P.\mystate \gets$ \emph{hull}.
        \State Perform a pull handover with $Q$.\vspace{-.6mm}
    \EndIf
\ElsIf {$P$ is contracted and $P.\parent$ is expanded}
    \State Perform a push handover with $P.\parent$.
\EndIf
\end{algorithmic}
\end{algorithm}

Follower particles act exactly as they do in the learning phase, following their parents in the spanning tree through handovers, until they either become the pre-marker or marker particle (Algorithm~\ref{alg:hullclose:follow}).

\begin{algorithm}
\caption{Hull Closing Subphase: Follower Particles} \label{alg:hullclose:follow}
\begin{algorithmic}[1]
\State \Call{ProcessHullCounters}{ }.
\If {$P$ is expanded}
    \If {$P$ has no children nor any idle neighbors}
        \State Contract tail.
    \ElsIf {$P$ has a contracted child $Q$ for which \Call{HandoverIsSafe}{$Q$, $P$}}
        \State Perform a pull handover with $Q$.
    \EndIf {}
\ElsIf {$P.\parent$ is expanded, $P.\parent.\mystate \neq$ \emph{hull} and \Call{HandoverIsSafe}{$P$, $P.\parent$}}
    \If {$P.\parent.\mystate =$ \emph{leader}}
        \State Set $P.\mystate \gets$ \emph{pre-marker}.
    \ElsIf {$P.\parent.\mystate =$ \emph{pre-marker}}
        \State Set $P.\parent.\mystate \gets$ \emph{marker}.
    \ElsIf {$P.\parent.\mystate =$ \emph{marker}}
        \State Set $P.\mystate \gets$ \emph{pre-marker}.
        \State Set $P.\parent.\mystate \gets$ \emph{hull}.
    \EndIf
    \State Perform a push handover with $P.\parent$.
\EndIf
\end{algorithmic}
\end{algorithm}

\subsection{The Hull Filling Subphase} \label{subapp:hullfillcode}

The hull filling subphase is the final phase of the algorithm (Section~\ref{subsec:formationalg}).
This subphase does not use the \emph{pre-marker} and \emph{marker} states, but does use four new ones: \emph{pre-filler}, \emph{filler}, \emph{trapped}, and \emph{pre-finished}.
Only the particles' states and parent pointers are still relevant; all other variables --- including the bits and tokens of the binary counters --- can now be ignored.
One new token type is introduced in this subphase: the ``all-contracted'' token $all_{con}$.
A counter $all_{con}.t \in \{0, \ldots, 6\}$ stores the number of turns this token has been passed through.

The hull was closed when the leader encountered the marker particle.
It begins the hull filling subphase by setting the marker particle as its parent, becoming finished, and generating the $all_{con}$ token (Algorithm~\ref{alg:hullfill:lead}).

\begin{algorithm}
\caption{Hull Filling Subphase: the Leader Particle} \label{alg:hullfill:lead}
\begin{algorithmic}[1]
\State Set $\ell.\parent \gets Q$, where $Q$ is the marker particle.
\State Set $\ell.\mystate \gets$ \emph{finished}.
\State Generate the $all_{con}$ token with $all_{con}.t = 0$.
\end{algorithmic}
\end{algorithm}

The hull particles continue following the leader by performing handovers as in the previous subphase and become finished when their parent is finished (Algorithm~\ref{alg:hullfill:hull}).

\begin{algorithm}
\caption{Hull Filling Subphase: Hull Particles} \label{alg:hullfill:hull}
\begin{algorithmic}[1]
\If {$P.\parent.\mystate =$ \emph{finished}}
    \State Set $P.\mystate \gets$ \emph{finished}.
\ElsIf {$P$ is expanded and has a contracted hull child $Q$}
    \State Perform a pull handover with $Q$.
\ElsIf {$P$ is contracted and $P.\parent$ is an expanded hull particle}
    \State Perform a push handover with $P.\parent$.
\EndIf
\end{algorithmic}
\end{algorithm}

Follower particles continue to follow their parents in the spanning tree through handovers until they encounter a finished particle in their neighborhood (Algorithm~\ref{alg:hullfill:follow}).

\begin{algorithm}
\caption{Hull Filling Subphase: Follower Particles} \label{alg:hullfill:follow}
\begin{algorithmic}[1]
\If {$P$ has a finished neighbor $Q$}
    \State Set $P.\parent \gets$ $Q$, preferring the tail of $Q$ over its head.
\ElsIf {$P$ is expanded}
    \If {$P$ has no children nor any idle neighbors} contract tail.
    \ElsIf {$P$ has a contracted follower child $Q$} perform a pull handover with $Q$.
    \EndIf
\ElsIf {$P.\parent$ is an expanded follower or pre-finished particle}
    \If {$P.\parent.\mystate =$ \emph{pre-finished}}
        \State Set $P.\parent.\mystate \gets$ \emph{finished}.
    \EndIf
    \State Perform a push handover with $P.\parent$.
\EndIf
\end{algorithmic}
\end{algorithm}

Finished particles must inform their follower neighbors whether they are trapped inside or are on the outside of the hull.
A finished particle may also be holding the $all_{con}$ token.
If $all_{con}.t = 7$, then this token has made seven turns, completing a traversal of the hull along all contracted particles, so the hull is completely formed and termination is triggered.
Otherwise, $all_{con}$ is passed to its contracted finished child, updating the turn counter if necessary (Algorithm~\ref{alg:hullfill:finish}).

\begin{algorithm}
\caption{Hull Filling Subphase: Finished Particles} \label{alg:hullfill:finish}
\begin{algorithmic}[1]
\If {$P$ is holding the $all_{con}$ token}
    \If {$all_{con}.t = 7$} consume $all_{con}$ and broadcast termination tokens to all neighbors.
    \ElsIf {$P$ has a contracted finished child $Q$}
        \If {direction $(P.\parent + 3) \text { mod } 6$ does not lead to $Q$} update $all_{con}.t \gets all_{con}.t + 1$.
        \EndIf
        \State Pass the $all_{con}$ token to $Q$.
    \EndIf
\ElsIf {$P.\parent.\mystate =$ \emph{finished} and $P$ has a finished child $Q$}
    \State Let $d = 6$ if $P$ is contracted and $d = 10$ otherwise.
    \State Let $i \in [d]$ be the direction from $P$ to $Q$ and $j \in [d]$ be the direction from $P$ to $P.\parent$.
    \If {$P$ has a contracted follower neighbor $R$ in some direction $k$}
        \State Set $R.\parent \gets P$.
        \If {$k \in \{(i+1) \text{ mod } d, \ldots, (j+d-1) \text{ mod } d\}$} set $R.\mystate \gets$ \emph{filler} and \Return
        \Else {} set $R.\mystate \gets$ \emph{trapped} and \Return
        \EndIf
    \EndIf
\EndIf
\end{algorithmic}
\end{algorithm}

Filler particles perform a clockwise traversal of the outside of the convex hull structure, searching for an expanded finished particle that to handover with (Algorithm~\ref{alg:hullfill:filler}).

\begin{algorithm}
\caption{Hull Filling Subphase: Filler Particles} \label{alg:hullfill:filler}
\begin{algorithmic}[1]
\If {$P$ is expanded}
    \If {$P$ has no children nor any idle neighbors} contract tail.
    \ElsIf {$P$ has a contracted follower child $Q$} perform a pull handover with $Q$.
    \EndIf
\Else {}
    \State Let $i \in [6]$ be such that a finished particle occupies the node in direction $i$.
    \While {a finished particle occupies the node in direction $i$} $i \gets (i + 5) \text{ mod } 6$.
    \EndWhile
    \If {the tail of an expanded particle $Q$ occupies the node in direction $(i + 1) \text{ mod } 6$}
        \State Set $P.\mystate \gets$ \emph{pre-finished}.
        \State Perform a push handover with $Q$ and set $P.\parent \gets Q$.
    \ElsIf {the node in direction $i$ is unoccupied}
        \State Expand in direction $i$.
    \EndIf
\EndIf
\end{algorithmic}
\end{algorithm}

A trapped particle is always contracted and attempts to replace a finished particle in the hull by changing the state of the finished particle to \emph{pre-filler}.
These pre-filler particles leave the hull to become filler particles.
The trapped particle then takes the pre-filler particle's place in the hull, changing its own state to \emph{pre-finished}. (Algorithm~\ref{alg:hullfill:trap}).

\begin{algorithm}
\caption{Hull Filling Subphase: Trapped Particles} \label{alg:hullfill:trap}
\begin{algorithmic}[1]
\If {$P.\parent$ is an expanded finished or pre-filler particle}
    \State Perform a push handover with $P.\parent$.
    \If {$P$ pushed the head of $P.\parent$} set $P.\parent.\parent \gets P$.
    \EndIf
    \State Set $P.\mystate \gets$ \emph{pre-finished}.
    \State Set $P.\parent$ to be the first finished particle clockwise from the current $P.\parent$.
    \If {$P.\parent.\mystate =$ \emph{pre-filler}}
        \State Set $P.\parent.\mystate \gets$ \emph{filler}.
        \If {$P.\parent$ is holding the $all_{con}$ token} take the $all_{con}$ token from $P.\parent$.
        \EndIf
    \EndIf
\ElsIf {$P.\parent$ is a contracted finished particle}
    \State Set $P.\parent.\mystate \gets$ \emph{pre-filler.}
\EndIf
\end{algorithmic}
\end{algorithm}

Pre-filler particles have been marked by a trapped particle for replacement.
Once they successfully leave the hull, they become filler particles (Algorithm~\ref{alg:hullfill:prefiller}).

\begin{algorithm}
\caption{Hull Filling Subphase: Pre-Filler Particles} \label{alg:hullfill:prefiller}
\begin{algorithmic}[1]
\If {$P$ is contracted}
    \State Let $i \gets (P.\parent + 5) \text{ mod } 6$.
    \If {the node in direction $i$ is unoccupied} expand in direction $i$.
    \EndIf
\ElsIf {$P$ has a contracted trapped child $Q$}
    \State Set $P.\mystate \gets$ \emph{filler}.
    \State Set $Q.\mystate \gets$ \emph{pre-finished}.
    \State Perform a pull handover with $Q$.
\EndIf
\end{algorithmic}
\end{algorithm}

Pre-finished particles are always expanded and become finished when they contract, either on their own or by pulling a contracted follower in a handover (Algorithm~\ref{alg:hullfill:prefinish}).

\begin{algorithm}
\caption{Hull Filling Subphase: Pre-Finished Particles} \label{alg:hullfill:prefinish}
\begin{algorithmic}[1]
\If {$P$ has no children nor any idle neighbors}
    \State Set $P.\mystate \gets$ \emph{finished}.
    \State Contract tail.
\ElsIf {$P$ has a contracted follower child $Q$}
    \State Set $P.\mystate \gets$ \emph{finished}.
    \State Perform a pull handover with $Q$.
\EndIf
\end{algorithmic}
\end{algorithm}

\end{document}